\newcommand{\bs}[1]{\boldsymbol{#1}}
\newcommand{\mrm}[1]{\mathrm{#1}}
\newtheoremstyle{mystyle}
  {6pt}
  {6pt}
  {\normalfont}
  {0pt}
  {\bf}
  {.}
  { }
  {}
\theoremstyle{mystyle}
\newtheorem{theorem}{Theorem}
\newtheorem{lemma}{Lemma}
\newtheorem{corollary}{Corollary}
\newcommand{\Tom}[1]{\textcolor{green}{#1}}
\begin{document}

\title{Quantifying the high-dimensionality of quantum devices}
\date{\today}

\author{Thomas Cope}
\affiliation{Institut für Theoretische Physik, Leibniz Universität Hannover,
30167 Hannover, Germany}
\author{Roope Uola}
\affiliation{Department of Applied Physics, University of Geneva, 1211 Geneva, Switzerland}

\begin{abstract}
    We introduce a measure of average dimensionality (or coherence) for high-dimensional quantum devices. This includes sets of quantum measurements, steering assemblages, and quantum channels. For measurements and channels, our measure corresponds to an average compression dimension, whereas for quantum steering we get a semi-device independent quantifier for the average entanglement dimensionality known as the Schmidt measure. We analyze the measure in all three scenarios. First, we show that it can be decided via semi-definite programming for channels and measurements in low-dimensional systems. Second, we argue that the resulting steering measure is a high-dimensional generalisation of the well-known steering weight. Finally, we analyse the behavior of the measure in the asymptotic setting. More precisely, we show that the asymptotic Schmidt measure of bipartite quantum states is equal to the entanglement cost and show how the recently introduced entanglement of formation of steering assemblages can be related to our measure in the asymptotic case.
\end{abstract}

\maketitle

\section{Introduction}

The advent of quantum information processing has led to various improvements in our understanding of the quantum world, which again has provided innovations reaching even commercial applications, such as devices capable of performing quantum key distribution and randomness generation \cite{GisinCrypto,ReviewQKDStefano,ReviewQKD,ReviewRNG,ReviewRNGMa}. Many of these developments have concentrated on possibly low-dimensional quantum systems consisting of many qubits \cite{ReviewcomputingphotonicsKok,Reviewsemicondqubits,Reviewmultiphoton}, or on degrees of freedom that are given by continuous variables, such as position and momentum \cite{CVreview}.

Now, with the second quantum revolution at hand, one particularly noticeable direction of research has been the development of noise-resilient and loss-tolerant quantum communication protocols with high quantum information carrying capacity, see \cite{Krenn14,Fickler16,HDBellErika,Malik16,HDreviewcomputationWang}. One way to reach this regime is to shift from systems having two degrees of freedom, such as spins of electrons or polarization of photons, to high-dimensional degrees of freedom given by, e.g., time-bins or orbital angular momentum \cite{OAMreview,HDsystreviewErhard}. Various known theoretical tools can be directly applied to these platforms, including entanglement witnessing techniques \cite{ReviewEntanglementOtfried}, numerical optimization methods \cite{cavalcanti2016quantum}, and general quantification strategies for quantum resources \cite{Takagi19,uola2019quantifying}.

When concentrating on, e.g., quantum communication protocols using high-dimensional systems, it is important that one possesses genuinely high-dimensional resources, in order to reach the aforementioned advantages. As an example, in a prepare-and-measure scenario one should verify that one’s quantum devices, i.e. the state preparation, the quantum channel and the quantum measurement, do not reduce the system into an effectively lower-dimensional one, hence, compromising e.g. the high noise-tolerance or information carrying capacity. In other words, the devices should not allow for a simulation strategy using only lower-dimensional devices. This problem has received some attention from the theory perspective in the case of entanglement theory \cite{TH2000,EB2001}, but it has been only more recently investigated from the perspective of quantum channels \cite{CK2006,shirokov2011schmidt}, quantum measurements \cite{ioannou2022simulability}, and correlations stronger than entanglement \cite{Detal2021}.

In this manuscript, we develop a resource measure for quantum channels, quantum measurements, and Einstein-Podolsky-Rosen steering. The main point of our measure is that it enables one to rule out low-dimensional simulation models for each of these classes of quantum devices. The measure is based on the Schmidt measure of entanglement \cite{EB2001}, which can be seen as a high-dimensional version of the celebrated best separable approximation \cite{LewensteinBestSeparable}.

We develop a Schmidt-type measure for the three types of devices mentioned above. This results in quantifiers for the average dimensionality needed to realise a given quantum device. This is in contrast to known dimensionality quantifiers for channels \cite{CK2006}, measurements \cite{ioannou2022simulability} and state assemblages \cite{Detal2021}, which quantify the highest required dimension. To give a simple example on what we mean by average and highest required dimensions, one can think of the two-qubit state $\lambda|\psi^+\rangle\langle\psi^+|+(1-\lambda)\openone/4$. The Schmidt number, i.e. the number of degrees of freedom that one needs to be able to entangle to produce the state, is $\log(2)=1$ if the state is entangled, i.e. when $1/3<\lambda\leq 1$, and $\log(1)=0$ otherwise. However, the Schmidt number does not contain more precise information about the strength of the entanglement, i.e. about the parameter $\lambda$. One way around this is to use the Schmidt measure instead. The Schmidt measure quantifies the average dimensionality needed to prepare the state, i.e. in this example it counts the relative frequency between the uses of entangled and separable preparations. For the example state, this number turns out to be \Tom{$(3\lambda-1)/2$ for $\lambda\geq 1/3$, 0 otherwise}.  One can replace the two-qubit state here by a high-dimensional one, for which, even when being noisy, the Schmidt number can be maximal \cite{TH2000}, i.e. equal to $\log(d)$, while the Schmidt measure remains below this value. In contrast to the existing literature on dimensionality of measurements and state assemblages, where the main focus has been on evaluating \cite{ioannou2022simulability,Des21,CarlosHDsteerhierarchy} and experimentally testing \cite{Detal2021} the highest required dimension, we concentrate on formulating the average measures for the mentioned types of devices, investigating their properties and their connections to known quantities in, e.g., entanglement theory. First, we show that the measure can be efficiently decided by semi-definite programming (SDP for short) techniques in the low-dimensional scenario of qubit-to-qutrit as well as qutrit-to-qubit channels. Second, we discuss how the measure presents a quantifier of average dimensionality of quantum measurements through an explicit simulation protocol, and provide an alternative formulation of the measure in this scenario. Third, in the case of quantum steering, we argue that our measure represents a semi-device independent verification method for the Schmidt measure of bipartite quantum states and show that it generalises the known resource measure given by the steering weight \cite{Steeringweight}. Finally, we analyze the behavior of our measure in the asymptotic setting. We prove that the asymptotic Schmidt measure of bipartite states equals the entanglement cost, hence, generalizing the results of \cite{Buscemi11} concerning the Schmidt number, and analyze how such result generalizes to the semi-device independent setting by using a recently introduced entanglement of formation for steering assemblages \cite{C2021}.

\section{Dimensionality of a Quantum Channel}
In this section, we ask the question: given a channel $\mathcal{E}:L(\mathcal{H}_{d_1})\rightarrow L(\mathcal{H}_{d_2})$ with $d_1$ and $d_2$ finite, what is the dimension of the channel? To answer this, we first should be more precise about the question. We would typically associate $L(\mathcal{H}_{d_1})$ with dimension $d_1$ and $L(\mathcal{H}_{d_2})$ with dimension $d_2$; we will refer to these as the ``input/output" dimensions of the channel respectively. What is more subtle, and what we are interested in, is the level of coherence preserved by the channel. Consider a noiseless channel $id_{100}:L(\mathcal{H}_{100})\rightarrow L(\mathcal{H}_{100})$. This is capable of preserving coherence in the entire $d=100$ space. However, the channel $\mathcal{E}:L(\mathcal{H}_{100})\rightarrow L(\mathcal{H}_{100})$, $\mathcal{E}(\rho)=\rho_{\mathrm{diag}}$, destroys all coherence, despite acting between the same two spaces. We would like to characterise $\mathcal{E}$ as dimensionality $0$, and $id_{100}$ as dimensionality $100$ (technically, as we shall see, $\log 100$). Note that this is not the same as the \emph{capacity} of the channel; we require that truly $d$ dimensional coherence is preserved, but not arbitrary $d$ dimensional states.

One can think of an analogous question for quantum states; a pure quantum state can be written in its Schmidt decomposition \cite{S1907}, $\ket{\Phi} = \sum^{d}_{i} \lambda_{i} \ket{e_{i}f_{i}},\;\lambda_i>0, \sum_{i}\lambda_{i}^2=1$. To create such a state, we require a genuinely entangled space of dimension $d=:S_R(\Phi)$, its \emph{Schmidt rank}. This can be extended to mixed states, by decomposing them to pure states, and then checking the largest Schmidt rank required. This defines the \emph{Schmidt number} \cite{TH2000},
$S_N(\rho):=\min_{p_i,\Phi_i} \max_{i} \log S_R(\Phi_i)$, such that $\sum_{i} p_{i}\ket{\Phi_{i}}\bra{\Phi_{i}} = \rho$, and $p_i$ is a valid probability distribution (which we will not state explicitly throughout the rest of the paper). The logarithm appears such that the Schmidt number is additive over the tensor product. Intuitively, the Schmidt number gives the number of entangled degrees of freedom required to create the (mixed) state.

How does this translate into quantum channels? We may always write a quantum channel using an explicit Kraus representation, $\mathcal{E}(\rho)=\sum_{i}K_{i}\rho K^{\dagger}_{i}$, $\sum_{i}K_{i}^{\dagger}K_{i}=\openone_{d_1}$, where $\openone_{d_1}$ is the identity matrix acting on a $d_1$-dimensional space. If all of the (matrix) ranks of the Kraus operations are less than some value $d$, then the channel cannot preserve coherence greater than $d$, as the kernel of each matrix is simply too large.

This leads to a definition of a Schmidt number for channels in the following way: $S_N(\mathcal{E})=\min_{K_i} \max_{i} \log \mathrm{rank}(K_i)$, where we minimise over all Kraus representations of the channel \cite{CK2006}. This definition is reinforced by the result that this Schmidt number of a channel $\mathcal{E}$ and the Schmidt number of the corresponding Choi matrix $\chi_{\mathcal{E}}$ coincide \cite{CK2006}. One observation made in \cite{CK2006} is that a channel with Schmidt number $n$, when applied to one half of a bipartite system, cannot preserve a Schmidt rank of a state greater than $n$. Thus these channels are known as $n$-\emph{partially entanglement breaking} (n-PEB) channels \cite{CK2006}.\\

Instead of asking what size of entangled subspace is needed to create a given mixed state, we could instead ask what is the \emph{average} size needed. This is given by the Schmidt measure \cite{EB2001}, $S_M(\rho)=\min_{p_i,\Phi_i} \sum_{i} p_i \log S_R(\Phi_i)$, such that $\sum_{i} p_{i}\ket{\Phi_{i}}\bra{\Phi_{i}} = \rho$. In Appendix \ref{One_Shot_Proof}, we show that this is the expected one-shot cost of preparing $\rho$, provided one is allowed to begin with a labelled mixture of maximally entangled states. We can again consider the same question for channels - is there a notion of the \emph{average} coherence preserved? One can directly translate the Schmidt measure for states to the Choi matrix of the channel - as we shall see later, this would not result in the operational meaning we desire. Instead, we should weight each coherent output system by the probability it is obtained. This itself is dependent on the state being inputted. Since we do not call a channel entanglement breaking as long as there is some input state whose output is entangled, we should also give the channel the best possible state to exhibit its dimensionality. Thus we arrive at the \emph{dimension measure} of a quantum channel,
\begin{equation}
D_M(\mathcal{E}):=\min_{K_i} \max_{\rho}\sum_i\mathrm{Tr}(K_i^{\dagger}K_i\rho) \log \mathrm{rank}(K_i).
\end{equation}
Operationally, we argue $\rho$ should \emph{not} be included in the $\log \mathrm{rank}$. The intuition is that the $K_i$ represent a realisation (chosen by us) of the channel, into which is inputted an arbitrary quantum state $\rho$ which we have no knowledge about.\\

The above defined quantity has a number of desirable properties:
\begin{itemize}
\item It is convex (under combination of channels).
\item It is subadditive over tensor product.
\item It is non-increasing under quantum pre-processing and quantum post-processing.
\item Entanglement breaking channels have dimension measure 0.
\item Isometric channels have dimension measure $\log d_1$.
\end{itemize}
Additionally, this measure is able to capture noise properties of the channel that simply looking at the Schmidt number cannot. Figure 1 compares the Schmidt number and dimension measure of the qubit depolarising channel, $\mathcal{E}_p(\rho)= (1-p)\rho + p\openone/2$. Rather than a discontinuous jump, the dimension measure decreases linearly with noise.
\begin{figure}
    \centering
    \includegraphics[width=\columnwidth]{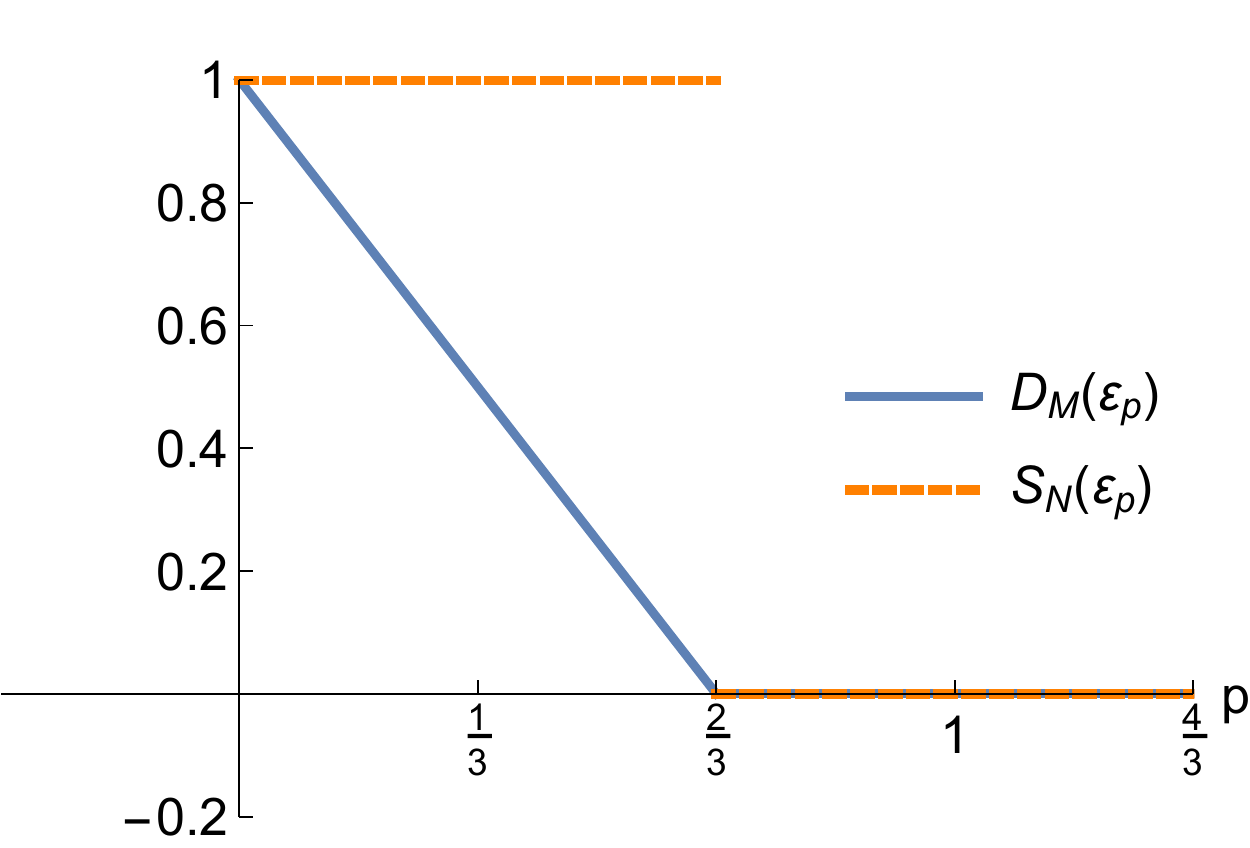}
    \caption{A comparison of the dimension measure ($D_M[\mathcal{E}_p]$) and the Schmidt number ($S_N[\mathcal{E}_p]$) for the depolarising channel. The increased noise is reflected in a lower dimension measure.}
    \label{Depol_Channel}
\end{figure}
We will see later on that the dimension measure is not always continuous; rather it satisfies a weaker property known as \emph{lower semi-continuity}. Intuitively, this means that one can always find a small enough deviation for which the measure does not drastically decrease. Since the rank of an operator is itself a lower semi-continuous function, we should not expect a stronger level of continuity than this. One could also argue the definition should have an infimum and supremum, especially since the set of Kraus representations of a channel is not compact; however, this definition turns out to be equal to the one above. This is mainly a result of the finite dimension $d_1$, which tells us the measure can be realised with $d_1^2$ summands i.e. Kraus operators. These properties: lower semi-continuity and achievable with a finite sum, are properties shared by all the subsequent measures we will introduce in this paper. 

Although we cannot interpret the dimension measure as a property of the Choi matrix of the channel in such a neat way as we can for the Schmidt number, we can nevertheless connect it to the Schmidt measure of entangled states sent through the channel. We have the following relation: $\sup_{\ket{\phi}} S_{M}[id\otimes\mathcal{E}\left(\ket{\phi}\bra{\phi}\right)]\leq D_{M}(\mathcal{E})$. The choice of entangled input state $\phi$ corresponds to choosing $\rho$, and the specific decomposition to a specific set of Kraus operators. The reason for the inequality is that for the two quantities the order of supremum/minimisation are reversed. It may be with a suitable minimax theorem this relation could be made an equality.

\section{SDP for dimensionality of a quantum channel}
For a channel $\mathcal{E}:\mathcal{L}(\mathcal{H}_{d_1})\rightarrow \mathcal{L}(\mathcal{H}_{d_2})$, we wish to calculate the following quantity: $\min_{K_i} \max_{\rho} \sum_i \mathrm{Tr}(K_i^{\dagger}K_i\rho)\mathrm{log}\;\mathrm{rank}(K_i)$. Via the Choi isomorphism, we can equivalently write this as an optimisation over the Choi matrix
\begin{equation}\label{ChoiDecomp}
\min_{\phi^l} \max_\rho \sum_l d\,\mathrm{Tr}(\rho^{T}\otimes \openone\ket{\phi^l}\bra{\phi^l})\log S_R(\ket{\phi^l}),
\end{equation}
where we can associate $\phi_{l}=\sum_{a,b} [K_l]_{a,b}/\sqrt{d}\ket{ba}$, and require that $\sum_{l} \ket{\phi_l}\bra{\phi_l} = \chi_{\mathcal{E}}$.

In the case where our channel has at least one system of dimension 2, then this equation simplifies to (a detailed derivation is given in Appendix \ref{Eq2to3}):
\begin{align}\label{SR2}
&\min_{\sigma^0\in \mathrm{SEP}} \max_\rho d\,\mathrm{Tr}(\rho^{T}\otimes\openone\left(\chi_{\mathcal{E}} -\sigma^0\right))\\
\equiv &\min_{\sigma^0\in \mathrm{SEP}} \max_\rho \mathrm{Tr}\left(\rho^{T}\left(\openone-d\,\sigma^0_{A}\right)\right)\nonumber
\end{align}
where $\sigma^0_{A}=\mathrm{Tr}_{B}\sigma^0$. Noting that transposition is a positive trace-preserving channel, we can see that our maximisation is equivalent to taking the largest eigenvalue of $(\openone-d\,\sigma^0_{A})$. Given a specific $\sigma^0$, this can be formulated as an SDP. Furthermore, for channels where $d_1d_2\leq 6$, then the separable states are equivalent to states with a partial positive transpose \cite{P1996,HHH1996}. This allows the above optimisation to be formulated via an SDP:

\begin{align}
&\text{minimize}\; \alpha,\\
&\text{subject to:}\nonumber\\
&\chi_{\mathcal{E}}-\sigma^0\geq 0,\nonumber\\
&(\sigma^0)^{T_A}\geq 0,\nonumber\\
&(\alpha-1)\openone+d\sigma^0_{A}\geq 0.\nonumber
\end{align}

By looking at this SDP analytically, we can see that for certain channels, the Schmidt number and dimension measure coincide, meaning the dimension measure is not continuous in general. A useful illustration of this is the amplitude damping channel, characterised by $\mathcal{E}:\ket{1}\rightarrow \sqrt{\gamma}\ket{0}+\sqrt{1-\gamma}\ket{1}$. We find that $\chi_{\mathcal{E}} = (1-\gamma/2)\left(\ket{00}+\sqrt{1-\gamma}\ket{11}\right)\left(\bra{00}+\sqrt{1-\gamma}\bra{11}\right) + \gamma/2\ket{10}\bra{10}$. As long as $\gamma<1$, this is a mixture of an orthogonal pure and separable state, meaning the only valid choices for $\sigma_0$ are $\sigma_0=\gamma'/2\ket{10}\bra{10}$, $\gamma'\leq \gamma$. In particular, this means $d\,\sigma^0_{A}$ is rank 1, and thus $\alpha = 1 =\log 2$ as long as $\gamma<1$. When $\gamma=1$ however, $\chi_{\mathcal{E}}$ is separable; implying $\sigma_0=\chi_{\mathcal{E}}$, $d\,\sigma^0_{A}=\openone$, $\alpha=0$. For other common channels though the measure captures a more fine-tuned notion of dimensionality than the Schmidt number; for the depolarising channel, we see that $D_{M}(\mathcal{E}_p)= (1-3p/2)$ when $p<2/3$, and 0 otherwise; whereas $S_{N}(p)= 1$ when $p<2/3$. In this case, the measure captures the noise of the channel. For the qubit erasure channel $\mathcal{R}_{q}(\rho):=(1-q)\rho +q\ket{e}\bra{e}$, where $q$ is probability of erasure, $D_{M}(\mathcal{R}_{q})= 1-q$.

\section{Compression of Quantum Measurements}
\subsection{Motivation}
We will now turn to a topic in which the concept of dimensionality naturally plays an important role. This is the topic of quantum storage or quantum memory, the practical construction of which is an active research topic  \cite{ReviewmemorySimon10,ReviewmemoryHeshami16}. As in classical information theory, it is natural that we may wish to store information, so that it may be accessed and used later on in, e.g., computation. This is the role of a \emph{memory}. The amount of storage available is limited, so it is useful to \emph{compress} the information in some way beforehand - encode it into a new state which requires less bits (classical) or qubits (quantum) to store. This compression can be lossless, where the exact state can be reconstructed from the stored information, or lossy, where the original state is only reconstructed approximately, normally with the advantage of lower memory costs. Note that we use the word lossy in the classical information theory sense to describe imperfections in general, which deviates from the terminology used in quantum information theory, where losses typically refer to particle losses.\\

Recently in Bluhm et al. \cite{BRW2018}, the authors considered an application of quantum memory, in which an unknown state is compressed into a quantum memory, and later decompressed, such that a measurement may be performed on that state. This measurement is taken from a set of available measurements, and it is not known at the time of storage which of the available measurements will be chosen. Here, measurements are described by positive operator-valued measures (POVMs for short). For a given input (or measurement choice) $x$, the corresponding POVM is a collection of positive semi-definite matrices $\{M_{a|x}\}_a$ for which $\sum_a M_{a|x}=\openone$. Given a fixed set of available measurements $\bs{M}=\{M_{a|x}\}$, the authors of Ref.~\cite{BRW2018} thus require that
\begin{equation}
\mathrm{Tr}(\rho M_{a|x}) = \mathrm{Tr}(\mathcal{D}\circ\mathcal{E}(\rho) M_{a|x}),\;\;\forall \rho,\label{bothway}
\end{equation}
where $\mathcal{D},\mathcal{E}$ are two CPTP maps, and $\mathcal{E}$ maps the input space to $\oplus_{i}\mathcal{H}_{n_i}$ and $\mathcal{D}$ maps back in to the original space. Here, the space $\oplus_{i}\mathcal{H}_{n_i}$ represents a quantum memory, where each member of the direct sum is a Hilbert space with a smaller dimension $n_i$ than the original space, and the index $i$ is the classical memory. The authors argue that we should consider classical memory as a free resource, due to its easy availability relative to quantum memory, and were successfully able to quantity the amount of quantum memory required for such a task, given a set of measurements $\bs{M}$. Note that, since the state being given is arbitrarily chosen, it is the measurement set which determines the resources required, rather than the states themselves.\\

In this work we wish to use a slightly adjusted version of the protocol, by allowing for \emph{different measurements} to be made after the compression has been done (a possibility noted by the authors of Ref.~\cite{BRW2018} and discussed in more detail in Ref.~\cite{ioannou2022simulability}). However, these measurements should still reproduce the desired measurement statistics, so that
\begin{equation}\label{reqmeas}
\mathrm{Tr}(\rho M_{a|x}) = \mathrm{Tr}(\mathcal{E}(\rho) N_{a|x}),\;\forall \rho
\end{equation}
where $\mathcal{E}$ is an $n$-PEB channel and $\bs{N}=\{N_{a|x}\}$ are the measurements after the compression. In other words, here one has more freedom in that one can perform a direct readout given by general POVMs $\bs{N}$ of the memory system, instead of having to decompress the memory with another map $\mathcal D$ before a specific readout given by $\bs{M}$, cf. Eq.~(\ref{bothway}).\\

This adjustment is well motivated, due to the following example: when $\bs{M}$ is a single POVM, cf. Fig.~\ref{Fig:singlemeasurement}. Here there is no ambiguity about which measurement will be later performed, and so we can perform the measurement immediately, storing the result in classical memory at no cost. However, the requirement in Equation (\ref{bothway}) of a decompression channel $\mathcal{D}$, imposes a non-zero quantum memory requirement, at odds with our reasoning. Another nice property of the adjustment is that for sets of measurements with no quantum memory requirement, one recovers the definition of joint measurability \cite{ioannou2022simulability}, which is a central concept in quantum measurement theory. Joint measurability asks whether there exists a single POVM whose statistics can reproduce the statistics of a set of given POVMs via classical post-processing (i.e. readouts of a classical memory), see \cite{JMinvitation,JMreview} for reviews on the topic.

\begin{figure}[h]
    \begin{subfigure}[b]{0.4\textwidth}
      \includegraphics[width=\textwidth]{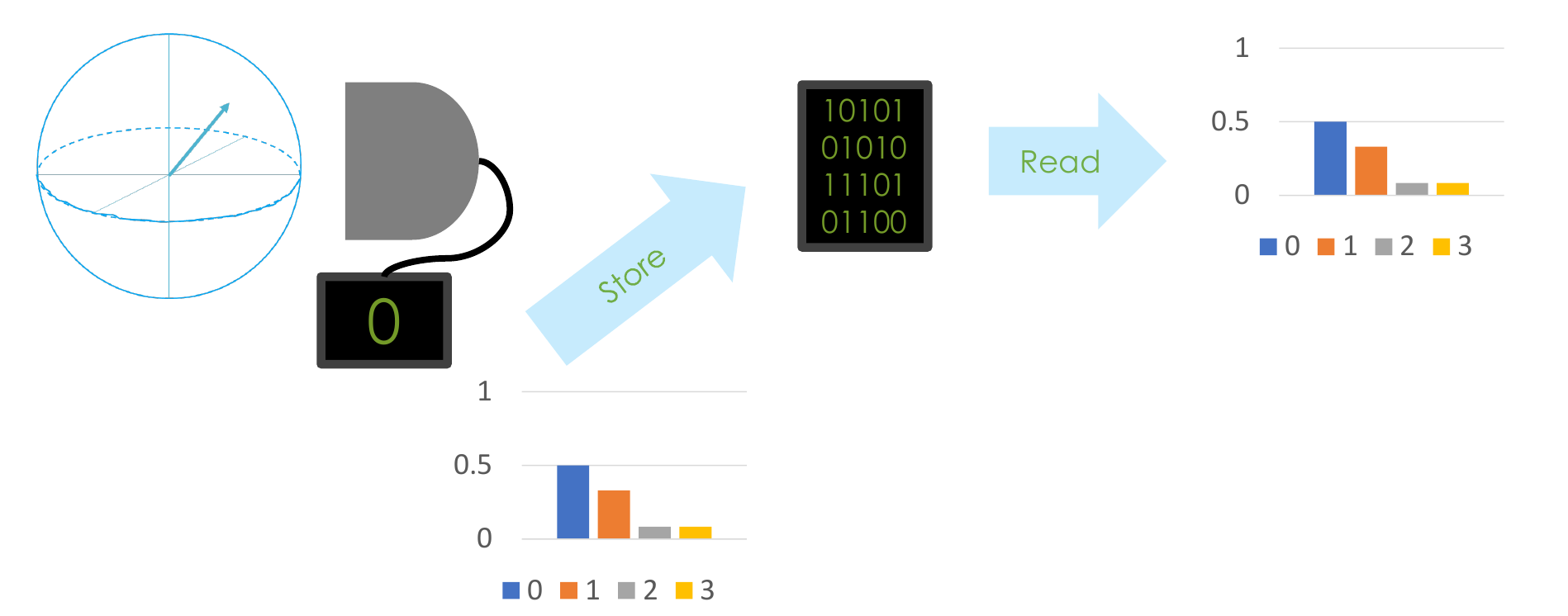}
      \caption{Measuring a single POVM immediately, and storing the result in classical memory, allows us to later give the correct statistics, without using quantum memory}
    \end{subfigure}
    \begin{subfigure}[b]{0.4\textwidth}
      \includegraphics[width=\textwidth]{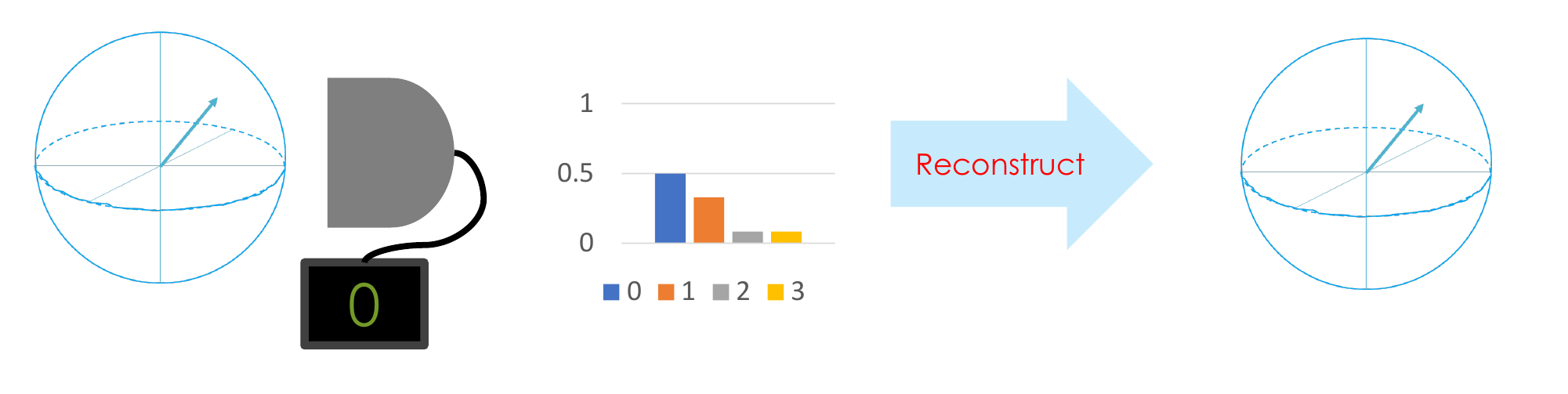}
      \caption{By contrast, if we are required to return a state, to be measured by the original measurement, reconstruction from the classical outcomes is impossible (the map is not CPTP in general).}
    \end{subfigure}
    \caption{A comparison of compression methods to reproduce the statistics of a single POVM.}
    \label{Fig:singlemeasurement}
\end{figure}

As discussed earlier, the dimension of the quantum memory refers to the need to preserve \emph{coherences}, rather than the output dimension of our compression channel $\mathcal{E}$. We note that this is in line with other approaches to quantum memories, which relate memory with the entanglement-preserving properties of the channel \cite{Denis18,Yuan2021,Ku22}. Notably, situations including channels with a Kraus representation using only rank-1 operators are seen as classical in both cases. Let us consider again the case of a single POVM $\bs{M}=\{M_{a}\}$ (note that this covers the case when many POVMs are jointly measurable). We stated that our plan will be to simply measure the state, then store this in classical memory to be read later. This can be written using the explicit channel $\mathcal{E}$ with Kraus operators $\{\ket{a'}\bra{i}\sqrt{M_{a'}}\}_{a',i}$ and measurement defined via: $N_{a}=\ket{a}\bra{a}$.  We see that although the dimension of the output of the overall system could be very high, any output of the channel can be reduced into the direct sum of one-dimensional subspaces, which is equivalent to a classical system. This now gives our connection to the dimensionality of a quantum channel. If Kraus operator $K_i$ is enacted, then $\log{K_i}$ units of memory are required to store the resultant state, before a measurement from $\bs{N}$ is performed. Thus by optimising over the channel and subsequent measurements, we arrive at the following definition of dimensionality: The \emph{compression dimension} $D_{N}(\bs{M})$ of a set of measurements is given by:
\begin{align}
D_{N}(\bs{M}):=&\min_{\mathcal{E}} \bigg\{\max_{i}\log\mathrm{rank}(K_{i})\,\bigg|\; \exists\, \bs{N},\\
\mathcal{E^*}(N_{a|x})
=& \sum_{i} K_i^{\dagger} N_{a|x} K_{i} = M_{a|x}\bigg\},\nonumber
\end{align}
where we have considered the channel in the Heisenberg picture, which rewrites condition (\ref{reqmeas}) equivalently as $\mathrm{Tr}(\rho M_{a|x})=\mathrm{Tr}(\rho\mathcal{E}^{*}(N_{a|x}))$. The quantity $\max_{i}\log \mathrm{rank}(K_{i})$ is aforementioned \emph{Schmidt number} of a channel, if it is minimised over all Kraus representations. We therefore use the notation $D_{N}$ to highlight this similarity. We note that up to the logarithm, this is equal to the concept of $n$-simulability defined in Ref.~\cite{ioannou2022simulability}. We also comment that these compressions can be seen as processings of ``programmable measurement devices" as discussed in \cite{Buscemi2020}, and conversely a processing gives rise to an explicit compression protocol. This means both the compression dimension and memory cost (introduced below) can be seen as incompatibility measures, respecting the partial order given by the resource theory of programmable measurement devices.

However, the above discussion is still not the full story. Imagine a scenario in which $\bs{M}$ is a convex mixture of two measurement sets: a jointly measurable set which is measured with high probability, and an incompatible set which is measured with low probability. To reproduce such a measurement, we need access to a true quantum memory - thus the corresponding compression dimension will be high. However, we only need to use this memory sparsely, implying the resource cost of this measurement should be relatively low. How do we capture this situation? \\

A reasonable solution would be to weight the cost of each stored state by how often it is stored, as we considered for channels. For a given input state $\rho$, the probability of using a particular compression is given by $\mathrm{Tr}(K_{i}^{\dagger}K_{i}\rho)$. Thus if the average input state over a possible set is known, then the memory cost becomes
$\min_{K_i} \sum_i \mathrm{Tr}(K_{i}^{\dagger}K_{i}\rho) \log \mrm{rank}(K_i)$
such that the above relations still hold. However, in general we consider the case where $\rho$ is completely unknown to us; and thus, define the memory cost as:
\begin{equation}\label{costKrausOp}
D_M(\bs{M})=\min_{K_i} \max_{\rho} \sum_i \mathrm{Tr}(K_{i}^{\dagger}K_{i}\rho) \log \mrm{rank}(K_i),
\end{equation}
i.e. we choose whatever compression protocol we like, and measure its performance upon the most demanding input state. As this is a compression protocol, Eq. \ref{costKrausOp} implicitly requires a measurement set $\bs{N}$, which along with the channel reproduce our original measurements. We use the notation $D_{M}$, or ``dimension measure", to refer to this quantity. We may use here minimum and maximum, rather than supremum/infimum; a proof is given in Appendix \ref{DimMeasMeas}. We note that our measure differs from a related dimensionality measure for POVMs given in Ref.~\cite{ioannou2022simulability}, in that Ref.~\cite{ioannou2022simulability} considers the maximal Kraus rank needed instead of the average. As an example, sets of measurements that are only slightly incompatible result in a dimension-measure that is greater than or equal to (logarithm of) two when using the approach of Ref.~\cite{ioannou2022simulability}. In our approach the measure takes values that are smaller than or equal to the ones given by the maximal Kraus rank, hence, better encapsulating the fact that the measurements are only slightly incompatible. We further note that the way we do the averaging consists of a non-trivial choice of the probability distribution used in the definition. One could consider other choices such as $\text{tr}[K_{i}^{\dagger}K_{i}]/d$, but we found that such choices do not result in a sub-additive measure under tensor products. To illustrate this, consider the incompatible qubit measurements $M_0=\{\ket{0}\bra{0},\ket{1}\bra{1}\}, M_1=\{\ket{+}\bra{+},\ket{-}\bra{-}\}$. Both our choice of probability distribution and the choice mentioned above give a value of $\log 2 =1$. However, if we append an additional outcome to each measurement, $\tilde{M}_0=\{\ket{0}\bra{0},\ket{1}\bra{1},\ket{2}\bra{2}\}, M_1=\{\ket{+}\bra{+},\ket{-}\bra{-},\ket{2}\bra{2}\}$, this alternate choice drops to $2/3$, whilst our choice remains at 1. Since we can reproduce $M_0, M_1$ by restricting to the qubit subspace, we see that our choice, requiring a maximisation over input states, is the correct one.

It is useful to consider \emph{how} such a compression can be implemented; in particular, how we know the amount of memory to assign. This can be done in the following way: the channel is explicitly implemented via the isometric extension defined by $U= \sum_{i} K_{i}\otimes \ket{i}_{E}$. By measuring the ancillary system $E$, we know which Kraus operator has been applied; and thus how much memory to assign. This protocol is visualised in Figure \ref{compression_fig}. Tracing over the ancillary system reproduces the overall action of the channel. It is worth noting here that, since we know the outcome of $i$, we can in principle perform different measurements $N_{a|x}^{i}$ for each outcome. This does not add any additional power to the formulas considered above; since we may always define $K_{i}'=K_{i}\otimes \ket{i}_{E}$, $N_{a|x}=\sum_{i} N_{a|x}^{i}\otimes \ket{i}\bra{i}$.\\

\begin{figure}
    \centering
    \includegraphics[width=\columnwidth]{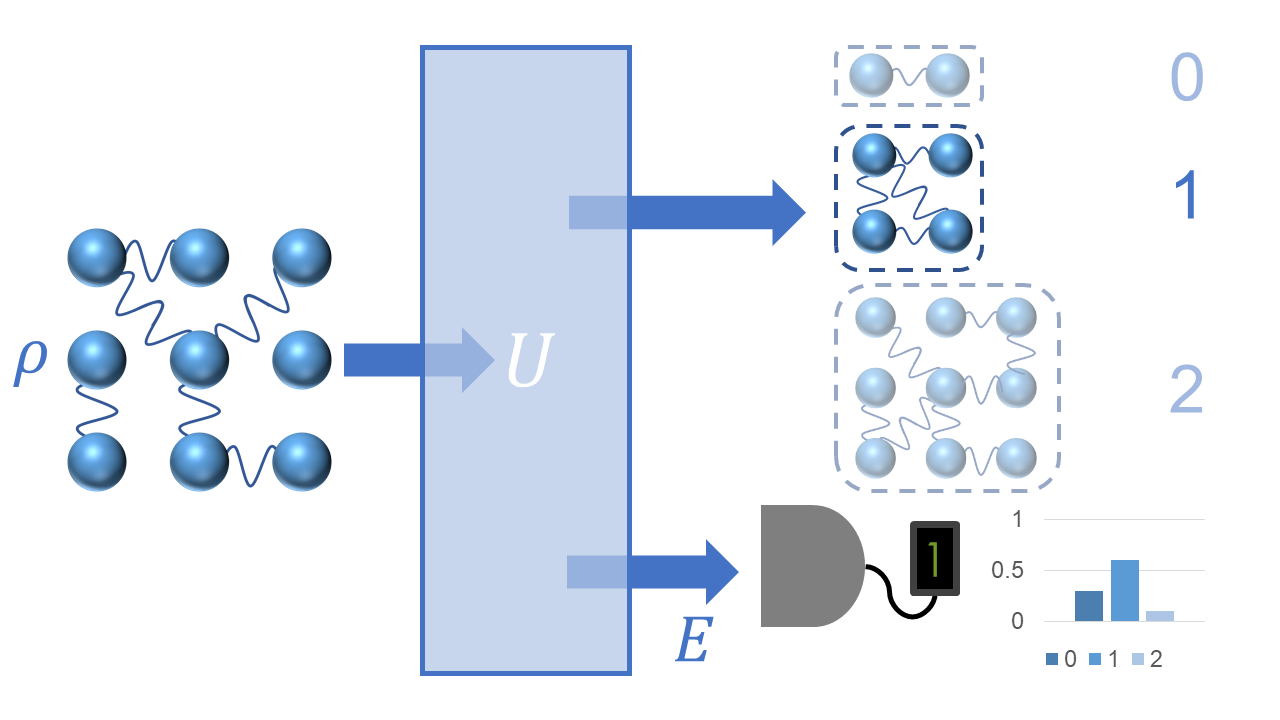}
    \caption{The probabilistic compression protocol: A high dimensional state $\rho$ is subjected to an isometry $U$. The environment is then measured, telling the user how much memory to assign to the conditional state $K_i\rho K_i^{\dagger}$. This compressed state can be later measured to reproduce the statistics of the desired measurement on $\rho$. Although we represent here high dimensional states with multiple qubits, this protocol is applicable to general degrees of freedom.}
    \label{compression_fig}
\end{figure}

The above quantity satisfies the following desirable properties:
\begin{itemize}
\item It is convex under combination of sets of measurements, i.e. under classical pre-processing.
\item It is subadditive over ``tensor products''.
\item It is non-increasing under classical post-processing and quantum pre-processing.
\item Jointly measurable measurements have dimension measure 0.
\end{itemize}
The use of the quotation marks around tensor product is because we have not yet rigorously defined what this is. The definition is done in such a way to emphasize the measurements take place on different systems; and that a measurement choice is available for each individual system. Thus $\{M_{a|x}\}\otimes \{N_{b|y}\} = \{M_{a,b|x,y}:=M_{a|x}\otimes N_{b|y}\}$. It is over this construction that the dimension measure is subadditive.\\

Although this definition of the dimension measure is operationally useful to us,
trying to optimise over channels, Kraus operators and $\bs{N}$ is a challenging task. We can reformulate the problem in an entirely equivalent way, which simplifies this slightly. The dimension measure can be equivalently written as:

\begin{equation}
D_{M}(M_{a|x})=\min_{A^i} \max_{\rho} \sum_{i} \mathrm{Tr}(A^i\rho)\log \mrm{rank}(A^i)
\end{equation}
where $0\leq A^{i}_{a|x}$ such that 
$\sum_{i} A^{i}_{a|x} = M_{a|x}$, $\sum_{a} A^{i}_{a|x} = A^{i}$.\\

This definition is more satisfying in that it is a property solely of the measurement operators themselves. However, the two formulisms are associated via $A^{i} =K_{i}^\dagger K_{i}$, $A^{i}_{a|x}=K_{i}^{\dagger}N_{a|x}K_{i}$. 
This rewriting allows us to state another property: if our measurement operators are decomposable into block diagonal form $M_{a|x}=\oplus M^{i}_{a|x}$, then the compression dimension and measure are upper bounded by $d_i$, the size of the largest block.\\

    Importantly, the marginal operators $A_i$ are \emph{not} required to be the identity, or even the identity on a smaller subspace. We shall therefore refer to $\bs{A}^i:=\{A^{i}_{a|x}\}$ as (a set of) ``pseudo-measurements". This allowance is vital to give the right quantity. 
    
Another useful writing of this formula is to group all the $\bs{A}^{i}$ (in the optimal decomposition) of the same rank together. We will typically write this decomposition as
\begin{equation}\label{grouping}
M_{a|x}=\sum_{k} B^{k}_{a|x},
\end{equation}
with the dimension measure given by $D_M(\bs{M})=\max_{\rho}\sum_{k}\mathrm{Tr}(B^{k}\rho)\log k$.

\subsection{The Use of Symmetry}

When the measurement set satisfies certain symmetries, we can often restrict or even remove the optimisation over input states $\rho$. If there exist unitaries $\{U_j\}_{1}^{N}$ such that $U_{j}^{\dagger}M_{a|x}U_{j} = M_{\pi_{j,x}(a)|\pi_j(x)}$, where $\pi_j$ is a permutation of the inputs, and each $\pi_{j,x}$ is a permutation of the outputs, then the optimal state must be of the form $\rho=\frac{1}{N}\sum_{j}U_j \rho'U_{j}^{\dagger}$.\\

A particularly useful case is when the unitaries $U_j$ are the set $\{X^{i}Z^{j}\}_{i,j=0}^{d-1}$. Then we have that $\rho=\openone/d$, see Appendix \ref{PropMeasDim} for details. This means the dimension measure is given by the \emph{Schmidt measure} of the (optimally chosen) channel. A naturally interesting set of measurements that is invariant under these operators is the set of $d+1$ (in prime dimensions) mutually unbiased bases given by the eigenvectors of $\{XZ^{j}\}_{j=0}^{d-1}\cup\{Z\}$, mixed with white noise. In fact, any \emph{subset} of these $d+1$ mutually unbiased bases is invariant under these operators, since our unitaries only permute each individual basis. This means for measurements of the form:
\begin{equation}
M^{p}_{a|x} = p\ket{e^x_a}\bra{e^x_a}+(1-p)\openone/d \label{whitenoise}
\end{equation}
where $\ket{e^x_a}$ is the $a$\textsuperscript{th} eigenvector of unitary $U_x$, then taking $\rho=\openone/d$ is sufficient.\\

One case of particular interest is a pair of MUBs: taking the eigenvectors of $\{X,Z\}$ only. For such a scenario we can go a step further and define the unitaries $\{U_j\}$ which, when averaged over with the corresponding permutations, will generate measurements of the form of Eq. (\ref{whitenoise}) from an arbitrary pair of pseudo-measurements (this construction is described in Appendix \ref{MUBSym}). Calculating the dimension measure then relies on analysing this one-parameter family; in particular, one needs to construct the least noisy MUB pair possible from rank $k$ pseudo-measurements. Our heuristic construction to do this is as follows. Construct the operator $S_k=\sum_{a\in\mathcal{O}_1} \ket{a_X}\bra{a_X}+\sum_{a\in\mathcal{O}_2}\ket{a_Z}\bra{a_Z}$, where $\mathcal{O}_1,\mathcal{O}_2\subset \{1\ldots d\}$ are of order $k$. Then set $A_{a|i}:=\begin{cases}
\Pi_k\ket{a_i}\bra{a_i}\Pi_k, &\text{if } a\in \mathcal{O}_i,\\
0 &\text{ else,}
\end{cases}$,
where $\Pi_k$ is the projector onto the $k$ largest eigenvalues of $S_k$.\\
Averaging over symmetries then gives an assemblage $\bf{M}^p$, which for $k=1$ can be verified to be the largest jointly measurable $p$ value, cf.~\cite{CHT12,Haa15,Uola16}. For larger $k$, optimisation over the choice of $\mathcal{O}_i$ is required. Using the construction for general $k$, it seems the dimension measure for general $p$ is always minimised by mixing $k=1$ and $k=d$ together; it is unclear if there exist improved compression dimension $k$ constructions, which could lower the dimension measure \footnote{Preliminary results from HC Nguyen and J. Steinberg suggest there exist cases where this construction is not optimal, but the SDP-upper bound is nevertheless the true value  (private correspondence).}. Thus we leave as an open problem the following conjecture :
\emph{For the pair of $d$-dimensional MUBs given by the eigenvectors of $\{X,Z\}$, the above construction gives the largest $p$ value (visibility) possible via decomposition into compression dimension $k\leq d$ pseudo-measurements. Moreover, we conjecture the dimension measure is given by the SDP-obtained upper bound.}\\
We shall see how to obtain the upper bound in the following section. A comparison of the upper bound on the dimension measure obtained via SDP vs. the value given by our explicit constructions is presented in Figure \ref{MUBGraph}.

\begin{figure}
    \centering
    \includegraphics[width=\columnwidth]{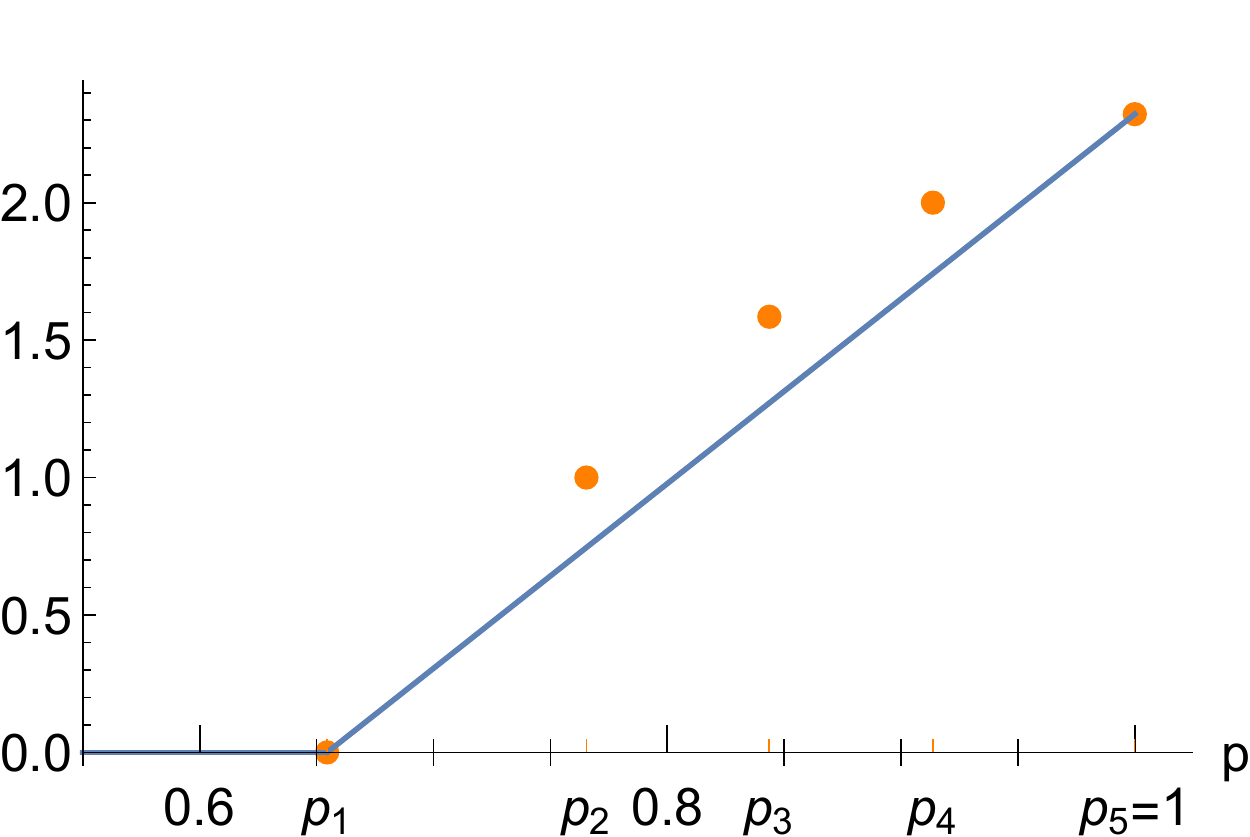}
    \caption{Looking at the projective measurements on the $X$ and $Z$ operators for $d=5$, we see the SDP-based upper bound beats our explicit constructions of the measurements from those of lower compression dimension.}
    \label{MUBGraph}
\end{figure}

\subsection{Compression of Qubit Measurements}
In Equation (\ref{grouping}) we grouped together pseudo-measurements by their rank.
Of particular interest is the set of possible $\bs{B}^0$ - linear combinations of $\bs{A}^{i}$ with rank 1. This forms the set of \textit{jointly measurable pseudo-measurements}, $\mathcal{B}_{\mathrm{JM}}:=\{\mathbf{B}\mid B_{a|x}=\sum_{\lambda} D(a|x,\lambda)G_{\lambda},\;\sum_{\lambda}G_{\lambda}\leq \mathbb{I}\}$, and has the nice property that any member of the set can be represented by a set of positive operators $\{G_{\lambda}\}$, with $\lambda$ running over all deterministic assignments of an outcome to each input choice. $D(a|x,\lambda)$ above is the deterministic probability distribution corresponding to this choice. Consider Equation~(\ref{grouping}) in the case where our original measurements are on qubits, i.e. $M_{a|x}\in \mathcal{B}(\mathbb{C}_2)$. Then the decomposition is into $\bs{B}^0 + \bs{B}^{1}$ only, and the dimension measure is given by:
\begin{equation}
\min_{\bs{B}^0} \max_\rho \mathrm{Tr}(B^1\rho),
\end{equation}
such that $\bs{B}^0=\{B^0_{a|x}\}\in \mathcal{B}_{\mathrm{JM}}$ is jointly measureable. Furthermore, we know in this case that $B^{1}=\openone_2-B^{0}$, meaning we can rewrite the measure as 
$\min_{\bs{B}^0} \max_\rho 1-\mathrm{Tr}(B_0\rho)\equiv\min_{\bs{B}^0} 1- \|B_0\|_{\mathrm{op}}$. This quantity can be succinctly written as the result of the following SDP:\\
\begin{align}
&\text{minimize}\; \alpha,\label{DimMeasMeasSDP}\\
&\text{subject to:}\nonumber\\
& G_{\lambda}\geq 0\, \forall\,\lambda,\nonumber\\
&\sum_{\lambda(x)=a} G_{\lambda} \leq M_{a|x},\nonumber\\
&(\alpha-1) \openone_2 + \sum_{\lambda} G_{\lambda} \geq 0\nonumber
\end{align}
where one notes that $\sum_{\lambda} G_{\lambda}=B^0$. This means that, for an arbitrary set of qubit measurements, one can calculate the dimension measure \emph{exactly}. Figure \ref{MUBGraph} shows the dimension measure for two and three noisy Pauli measurements.\\
\begin{figure}
    \centering
    \includegraphics[width=\columnwidth]{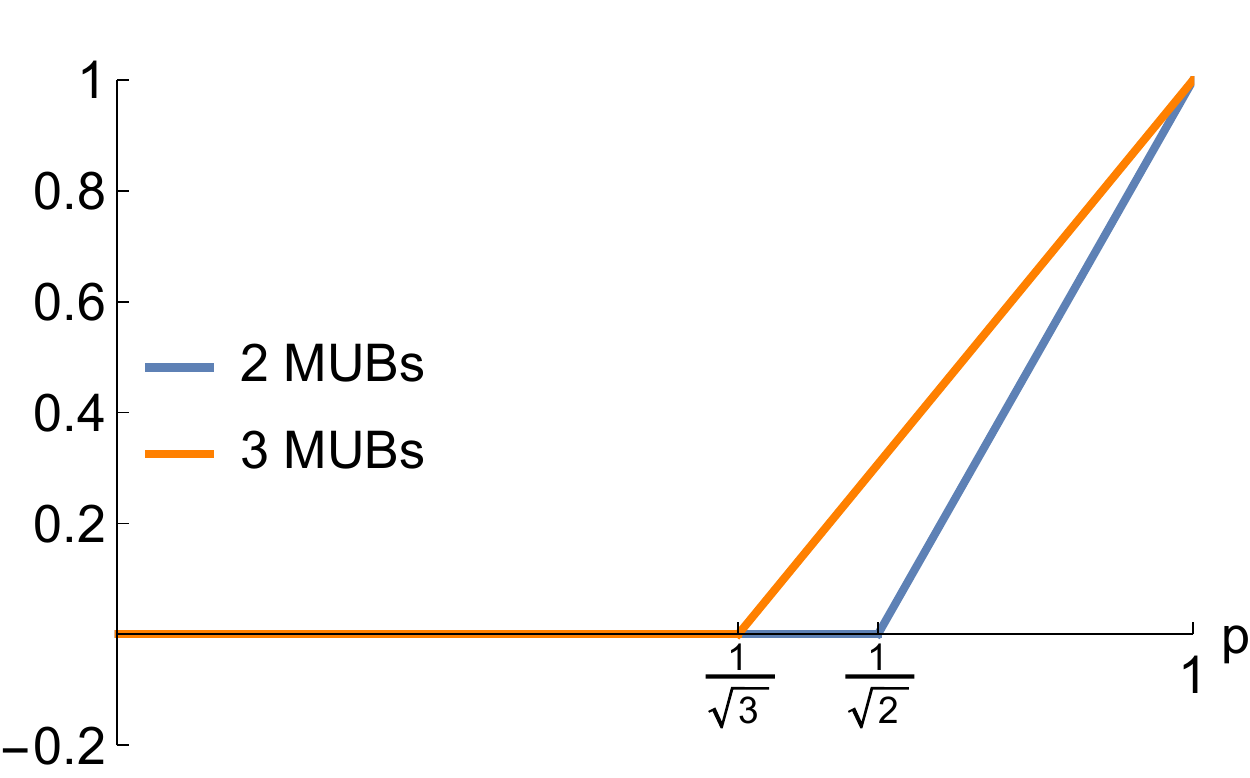}
    \caption{The dimension measure for two and three Pauli measurements respectively, mixed with white noise. The requirement to reproduce an additional measurement increases the memory cost.}
    \label{MUBGraph}
\end{figure}

It is also worth comparing this SDP to that which calculates the \emph{incompatibility weight} of a set of measurements, $W(\bs{M})$ \cite{Pusey15}. This is given as:
\begin{align}
&\text{minimize}\; \gamma,\\
&\text{subject to:}\nonumber\\
& G_{\lambda}\geq 0\, \forall\,\lambda,\nonumber\\
&\sum_{\lambda(x)=a} G_{\lambda} \leq M_{a|x},\nonumber\\
&(\gamma-1) \openone_2 + \sum_{\lambda} G_{\lambda} = 0.\nonumber
\end{align}
The two share a common structure; with the only difference being that the jointly measurable component for the incompatibility weight must be a true (non-pseudo) measurement. Since this is a more restrictive condition, we see that this provides an upper bound on the dimension measure; and there exist POVM sets for which the dimension measure is strictly smaller. An example of this is given in the accompanying code repository \footnote{\url{https://github.com/ThomasPWCope/Quantifying-the-high-dimensionality-of-quantum-devices}}. In the case of non-qubit measurements, one can always use the SDP in Equation (\ref{DimMeasMeasSDP}) to obtain an upper bound for the dimension measure: for higher dimension, $\openone-B^0$ is the sum of the other $B^{k}$. By assigning them all to have maximum dimension (since we do not know the true values) we obtain the upper bound $\alpha\log d$, where $\alpha$ is calculated by the SDP above, except with $\openone_d$ replacing $\openone_2$.

\section{Schmidt measure for Assemblages}
We now turn our attention to \emph{assemblages}; a set of subnormalised quantum states $\sigma_{a|x}$ is called an \emph{assemblage} if they satisfy:
\begin{equation}
\sum_{a} \sigma_{a|x} = \rho_{\boldsymbol{\sigma}}\;\forall x,\;\;\sigma_{a|x}\geq 0, \mathrm{Tr}(\rho_{\boldsymbol{\sigma}})=1.
\end{equation}
Typically they arise in \emph{steering scenarios}, as the correlations between a trusted and untrusted quantum system. Measurements on the untrusted system give rise to assemblages via the relation $\sigma_{a|x}=\mathrm{Tr}[(M_{a|x}\otimes \openone)\rho_{AB}]$. We will use as shorthand $\boldsymbol{\sigma}$ for the set $\{\sigma_{a|x}\}$.

A state assemblage is called \emph{unsteerable} if it can be explained by a local ensemble $\{p(\lambda),\sigma_\lambda\}$ whose priors $p(\lambda)$ are updated according to the classical information about the measurement choice $x$ and the measurement outcome $a$ available to the parties. Formally, this amounts to $\sigma_{a|x}=p(a|x)\sum_\lambda p(\lambda|a,x)\sigma_\lambda$. We note that by using the Bayes rule, one can rewrite this in the form $\sigma_{a|x}=\sum_\lambda p(a|x,\lambda)\sigma_\lambda$, cf. Refs.~\cite{wiseman2007steering,cavalcanti2016quantum,uola2020quantum}. An assemblage, for which such a \emph{local hidden state model} does not exist is called steerable.

Although the nature of the untrusted system means that, in general, there is no way of knowing the underlying state $\rho_{AB}$ fully, we can nevertheless use our knowledge of the assemblage to say something about it. In particular, if an assemblage $\bs{\sigma}$ is extremal, then the Schmidt measure of the state used to create that assemblage must be $\mathrm{rank}(\rho_{\bs{\sigma}})$. This is because a decomposition of our underlying state $\rho_{AB}$ into pure states $\rho_{AB}=\sum_i p_i|\phi_i\rangle\langle\phi_i|$ gives rise to a convex decomposition of our assemblage $\bs{\sigma}=\sum_i p_i\bs{\tau^i}$, where $\tau^i_{a|x} = \mathrm{Tr}(M_{a|x}\otimes \openone \ket{\phi_i}\bra{\phi_i})$.
However, due to extremality, each $\bs{\tau}^{i}$ must coincide with $\bs{\sigma}$. Therefore $\rho_{\bs{\tau}}=\rho_{\bs{\sigma}}$.  As $\mathrm{Tr}_{A}(\ket{\phi_i}\bra{\phi_i})=\rho_{\bs{\tau}}$, we see that each pure state in the decomposition has Schmidt rank equal to $\mathrm{rank}(\rho_{\bs{\sigma}})$, regardless of the choice of decomposition. Thus we can conclude that this is the Schmidt measure value of $\rho_{AB}$. In the special case where our assemblage is extremal, we can read off easily what the Schmidt measure of the state must be.\\

If our assemblage is not extremal; then a decomposition $\bs{\sigma}=\sum_i p_i\bs{\tau}^i$ into extremal assemblages necessarily gives us a realisation ($M_{a|x},\rho_{AB}$), where the Schmidt measure of $\rho_{AB}$ is given by $\sum_{i}p_i\log \mathrm{rank}(\rho_{\bs{\tau}^i})$. This idea is illustrated in Figure \ref{Decomp_of_Assem}. Conversely, a realisation $(M_{a|x},\rho_{AB})$ gives a decomposition, via a corresponding decomposition of $\rho_{AB}$, of $\bs{\sigma}$ into extremals \footnote{Although the decomposition of $\rho$ into pure states does not necessarily lead to extremal assemblages, any further decomposition will preserve the marginal, and thus our quantity of interest} where $\sum_{i}p_i\log \mathrm{rank}(\rho_{\bs{\tau}^i})$ is the Schmidt measure. Given this equivalence, one can define the Schmidt measure of an assemblage as the minimum Schmidt measure over all underlying states, such that the assemblage is realisable. This gives
\begin{equation}
S_M(\bs{\sigma}) = \min_{p_i,\bs{\tau}^i} \sum_i p_i \log \mathrm{rank} \rho_{\bs{\tau}^i},
\end{equation}
such that $\sum p_i \bs{\tau}^i = \bs{\sigma}$.
If one instead considers $\min_{p_i,\bs{\tau}^i} \max_i \mathrm{rank} \rho_{\bs{\tau}^i}$, then one obtains the notion on $n$-preparability introduced in \cite{Detal2021,UKSYG2019}, which describes the minimal Schmidt number of $\rho_{AB}$ required. Hence, our notion of Schmidt measure for state assemblages relates to $n$-preparability in a similar manner as our dimension measure of POVMs relates to the concept of $n$-simulability \cite{ioannou2022simulability}.\\
\begin{figure}
    \centering
    \includegraphics[width=\columnwidth]{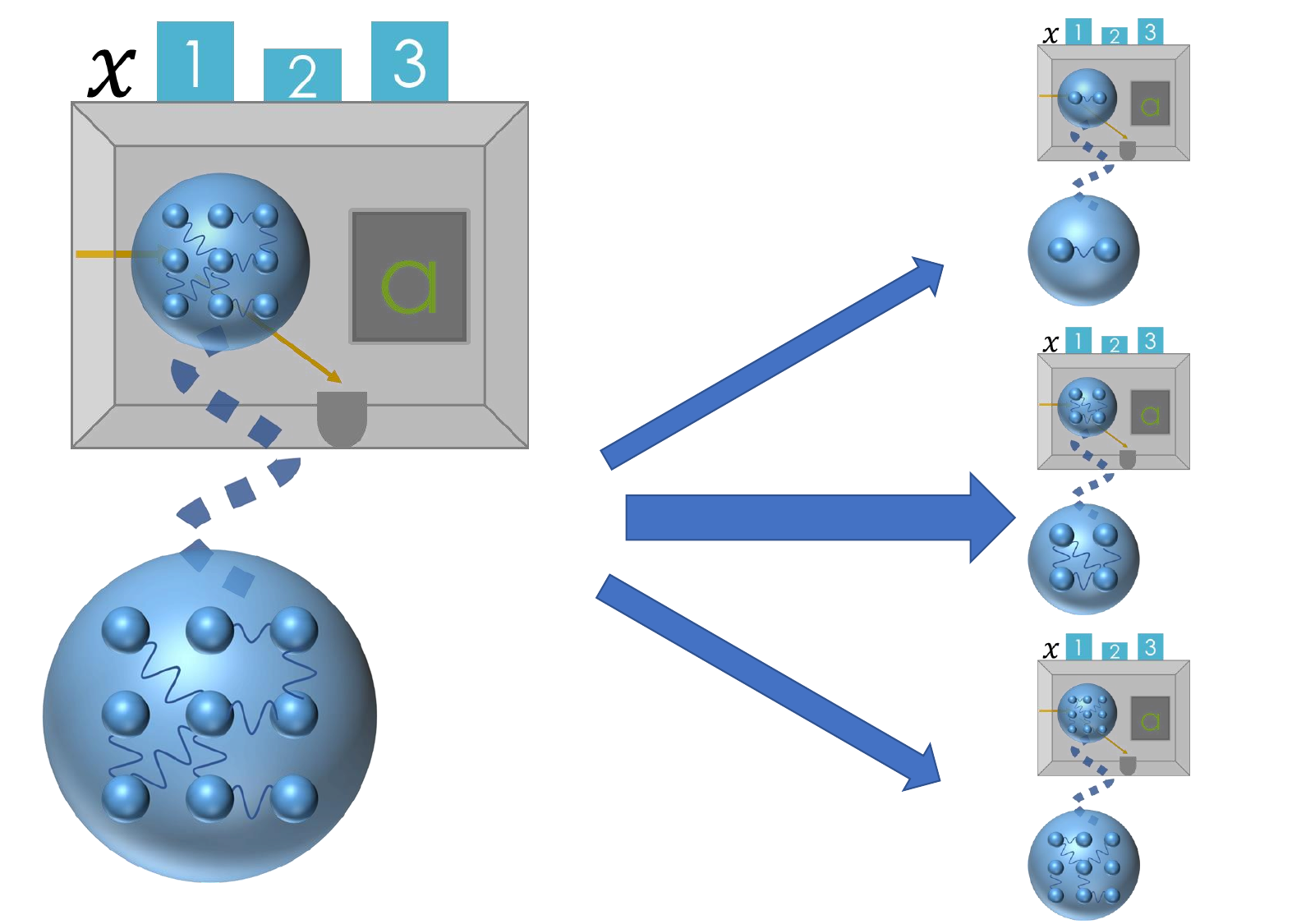}
    \caption{Decomposition of assemblages. A convex decomposition of an assemblage implies a convex decomposition of the underlying state. As a consequence of this, we can determine the minimum Schmidt measure to create the assemblage.}
    \label{Decomp_of_Assem}
\end{figure}
Note that, as with channels and measurements; in the specific case where our steering assemblage consists of qubit substates, then this quantity is calculable via the following SDP:
\begin{align}
&\text{minimise}\; 1-\mathrm{Tr}\left(\sum_{\lambda} \tau_{\lambda}\right),\\
&\text{subject to:}\nonumber\\
& \tau_{\lambda}\geq 0\; \forall\,\lambda,\nonumber\\
&\sum_{\lambda(x)=a} \tau_{\lambda} \leq \sigma_{a|x}.\nonumber
\end{align}
We note that in this case the optimization problem corresponds exactly to that given in \cite{Steeringweight} for the steering weight.\\
As with the dimension measure, we obtain an upper bound for the Schmidt measure for higher dimensional assemblages, by using the general cost function $\log d\left(1-\mathrm{Tr}\left[\sum_{\lambda} \tau_{\lambda}\right]\right)$. Similar to the situation with incompatible measurements, the prefactor of $\log d$ appears by our simplification of the problem, which maximises the unsteerable part (possible using a separable state) and assigns the remainder the maximum possible dimension. It is easy to construct cases, however, where this upper bound does not come close to the true value. For example, consider the assemblage $\sigma_{a|0}=\ket{a}\bra{a}/d$, and $\sigma_{a|1}=(\ket{+_a}\bra{+_a}+\ket{-_{a\oplus 1}}\bra{-_{a\oplus 1}})/2d$, with $\ket{\pm_a}=\ket{a}\pm\ket{a\oplus 1}$. Here we use $\oplus$ as shorthand for addition $\mod d$. From \cite{CO2022}, we know extremal unsteerable assemblages to be of the form $\tau_{a|x}=\delta_{aa_x}\ket{\phi}\bra{\phi}$, where $a_x$ is an explicit choice of outcome for each individual $x$. As there exists no pair $\{\sigma_{a|0},\sigma_{a'|1}\}$ which share a non-trivial subspace, there cannot exist a non-zero unsteerable component and the SDP will therefore return the maximal bound of $\log d$. We can, however, decompose this assemblage into a uniform mixture of extremal qubit assemblages $\bs{\tau}^a$, with non-zero components: $\tau^a_{a|0}=\ket{a}\bra{a}/2d$, $\tau^a_{a\oplus 1|0}=\ket{a\oplus 1}\bra{a\oplus 1}/2d$, $\tau^a_{a|1}=\ket{+_a}\bra{+_a}/2d$, $\tau^a_{a\oplus -1|1}=\ket{-_a}\bra{-_a}/2d$. As we have an explicit decomposition into qubit assemblages, and we have ruled out the possibility of lower rank (i.e. unsteerable) components, we may conclude the true Schmidt measure of this assemblages is $\log 2 \equiv 1$. By increasing $d$, we can make the gap between our upper bound ($\log d$) and true value (1) arbitrarily large. 

Sets of measurements and assemblages are closely connected \cite{uola15}, and in some cases the compression dimension of measurements reduces to an assemblage Schmidt measure. If, due to the symmetry of our measurements, we can take $\rho$ in the definition of the dimension measure to be $\openone/d$, then our dimension measure is:
\begin{align*}
D_M(\bs{M})=&\min_{\bs{A}^i} \sum_{i} \mathrm{Tr}(A^{i}/d)\log\rank(A^{i}),& \sum_{i} A^{i}&=\openone\\
\equiv S_M(\bs{\sigma}) =& \min_{\bs{\tau}^{i}} \sum_{i} p_i\log\rank(\rho_{\bs{\tau}^{i}}),& \sum_{i} \rho_{\bs{\tau}^{i}}&=\rho_{\bs{\sigma}}
\end{align*}
where we associate $\sigma_{a|x} = M_{a|x}/d$, and $\tau^i_{a|x}=A^{i}_{a|x}/\mrm{Tr}(A^{i}_{a|x}),\; p_{i}=\mrm{Tr}(A^{i}_{a|x})/d$. An immediate consequence of this is that a set of two or more MUBs in dimension $d$ has compression dimension $d$. We could also consider, for less symmetrical measurements, the dimension measure as a quantity of assemblages; however, we have not found an operational meaning for this.\\

One can also create measurement sets from assemblages, via the use of ``pretty good measurements''; we recall that the pretty good measurements \cite{Hausladen94} are given by $M_{a|x}:=\rho_{\bs{\sigma}}^{-1/2}\sigma_{a|x}\rho_{\bs{\sigma}}^{-1/2}$ and a pseudoinverse is used when necessary. The connection between steering and measurement incompatibility is straightforward: an assemblage $\bs{\sigma}$ is unsteerable if and only if its pretty good measurements are jointly measurable \cite{uola15}. This hints that the dimensionality of an assemblage and its corresponding pretty good measurement should be closely connected. Suppose we have the optimal decomposition $\bs{\sigma}=\sum_{i}p_i\bs{\tau}^i$ - let us redefine $\tilde{\bs{\tau}}^{i}=p_i\bs{\tau}^i$. Then the Schmidt measure of $\bs{\sigma}$ is given by 
$\sum_i \mrm{Tr}(\rho_{\tilde{\bs{\tau}}^i})\log\rank\rho_{\tilde{\bs{\tau}}^i}$. We see that this decomposition gives rise to a decomposition of the pretty good measurement, via the connection $A^{i}_{a|x}:=\rho_{\bs{\sigma}}^{-1/2}\tilde{\bs{\tau}}^i_{a|x}\rho_{\bs{\sigma}}^{-1/2}$. We can then rewrite the Schmidt measure of the assemblage as $\sum_i \mrm{Tr}(\rho_{\tilde{\bs{\tau}}^i})\log\rank\rho_{\tilde{\bs{\tau}}^i} = 
\sum_i \mrm{Tr}(\rho_{\bs{\sigma}} A^i)\log\rank(\rho_{\bs{\sigma}}A^i)= \sum_i \mrm{Tr}(\rho_{\bs{\sigma}} A^i)\log\rank(A^i)$ 
where the last equality comes from the construction of $A^i$, which tells us $A^{i}$ must lie in the support of $\rho_{\bs{\sigma}}$. We can conclude from this that 
$S_M(\bs{\sigma})=\min_{A_i}\sum_i \mrm{Tr}(\rho_{\bs{\sigma}} A^i)\log\rank(A^i) \leq D_M(\bs{M})$.\\

\section{The Asymptotic Schmidt measure}
All of the quantities discussed so far can be described as \emph{one-shot} quantities. Also of interest in (classical and quantum) information theory is the \emph{asymptotic} cost; the cost-per-instance of an operation in the limit of performing it an infinite number of times. This allows ``economies of scale'' to come into play; as one performs a higher number of the task, one can do so more and more efficiently, usually with a small error which vanishes in the limit.\\

In this section we determine how the Schmidt measure behaves in the asymptotic limit of an infinite number of copies of the state. We show that it converges to the \emph{entanglement cost} of the state. Operationally, this tells us that the probabilistic one-shot entanglement cost converges asymptotically to the entanglement cost (as expected), and that the dimensionality of a state converges to its entanglement. To show these results we expand the definition of the Schmidt measure to a \emph{smoothed} version: $S_M^{\epsilon}(\rho):=\inf_{\|\rho-\rho'\|_1\leq \epsilon} S_M(\rho')$, where we allow for some perturbative noise.\\

Let us recap the definitions of the entanglement cost and the entanglement of formation. The entanglement cost of a mixed state $\rho$ is given by \cite{HHT2001}:
\begin{align}
E_C(\rho):= &\inf\{E\mid \forall \epsilon,\;\; \delta >0\;\; \exists m,\,n,\, \Lambda,\\
\;\; &| E-m/n|\leq \delta,\;\; \|\Lambda(\ket{\Phi^+}\bra{\Phi^+}^{\otimes m}),\rho^{\otimes n}\|_{\mathrm{tr}} \leq \epsilon\},\nonumber
\end{align}
where $\Lambda$ is some LOCC (local operations and classical communication) protocol. This quantifies the optimal rate in which maximally entangled pairs may be converted into copies of our chosen state, such that in the limit of enough copies created the error becomes arbitrarily small.\\

The entanglement of formation of a mixed state $\rho$ is given by:
\begin{equation}
E_F(\rho):= \inf_{p_i,\Phi_i} \sum_{i} p_i E_S(\Phi_i)
\end{equation}
such that $\sum_i p_i \ket{\Phi_i}\bra{\Phi_i} = \rho$, where $E_S(\ket{\Phi})=S(\mathrm{Tr}_{B}[\ket{\Phi}\bra{\Phi}])$ is the entanglement entropy. The entanglement of formation gives an upper bound to the entanglement cost as it gives a specific LOCC protcol to create copies of $\rho$. Each pure state is created by LOCCs from maximally entangled states at the optimal rate of $E_S(\Phi)$ \cite{BBPS1996}. These are then probabilistically mixed together to create the copies of $\rho$. In fact, there exists a stronger result: $\lim_{n\rightarrow \infty} E_F(\rho^{\otimes n})/n = E_C(\rho)$ \cite{HHT2001}. The left hand side is called the \emph{regularisation} of the entanglement of formation. This result tells us that asymptotically, this pure-state mixing is the optimal protocol for creating copies of our state. It also introduces us to the concept of regularisation. It is exactly the regularisation of the Schmidt measure that we will look at in this section.\\

In order to prove our results, we will need some concepts from classical information theory, concerning \emph{typicality} of a sequence. This tries to capture how, when drawing from an independent and identically distributed probability distribution many times, a ``typical'' sequence of outcomes will look. We will use the notion of the \emph{strongly typical set}; suppose we have a probability distribution over a finite alphabet $\mathcal{X}$. Then for sequences of length $n$, the $(\delta)$ strong typical set is defined as \cite{Stanford}:
$T_{\delta}(P):=\{\mathbf{x}\bs{\mid} |N[x|\mathbf{x}]/n-P(x)|\leq \delta P(x)\}$,
where $N[x|\mathbf{x}]$ is the number of times the character $x$ appears in the sequence $\mathbf{x}$. This can be extended to pairs of finite random variables $(X,Y)$, giving the \emph{jointly} typical set of sequences of length $n$:
$T_{\delta}(X,Y):=\{(\mathbf{x},\mathbf{y})\mid |N[(x,y)|(\mathbf{x},\mathbf{y})]/n-P(x,y)|\leq \delta P(x,y)$,
where $N[(x,y)|(\mathbf{x},\mathbf{y})]$ is the number of times the characters $(x,y)$ appear in the same position of the sequences $\mathbf{x},\mathbf{y}$ respectively. We also obtain the  \emph{conditionally} typical set of sequences of length $n$:
$T_{\delta}(Y|\mathbf{x}):=\{\mathbf{y}\mid (\mathbf{x},\mathbf{y})\in T_{\delta}(X,Y)\}$.
Note how this set is dependent on the choice of sequence $\mathbf{x}$. \\

If a sequence $\mathbf{x}$ is not typical, then necessarily $T_{\delta}(Y|\mathbf{x})$ is empty. However, if $\mathbf{x}$ is typical (with respect to X, some $\delta'<\delta$) then for all $n>N$ for some $N$, we have bounds of the size of the set:\\
$(1-\delta)2^{nH(Y|X)\left(1-\delta\right)}\leq |T_{\delta}(Y|\mathbf{x})|\leq 2^{nH(Y|X)\left(1+\delta\right)}$.\\
We also have information about how likely it is for a sequence to belong to the typical set:  $\forall \epsilon$, there exists a $N'$, such that for $\mathbf{x}$ typical when $n>N'$, $P(\mathbf{Y}\in T_{\delta}(Y|\mathbf{x}))>1-\epsilon$. This can be taken a step further, as the result holds that $\forall \epsilon$, there exists a $N'$, such that for $n>N'$, $\mathbb{E}[P(\mathbf{Y}\in T_{\delta}(Y|\mathbf{x}))]_{\mathbf{x}}>1-\epsilon$. The nuance of this second result is such: as the length of the sequence becomes large, the probability of $\mathbf{x}$ being in the typical set tends to 1, as does the probability of $\mathbf{y}$ being in the typical set of the chosen $\mathbf{x}$. The two cases where a) $\mathbf{x}$ is not typical, b) $\mathbf{y}$ is not typical (w.r.t. $\mathbf{x}$) have vanishing probability.\\

With this, we have the tools to prove our desired result: the asymptotic Schmidt measure is given by the entanglement cost. We will use that the entanglement cost is equivalent to $\lim_{n\rightarrow \infty} E_F(\rho^{\otimes n})/n$, the regularised entanglement of formation. When calculating $E_F$, there exists an optimal decomposition whose number of pure states does not exceed $\dim(\mathcal{H})^{2n+1}$. This is a consequence of the continuity of the von Neumann entropy \cite{U2010}. We now consider this decomposition $\rho^{\otimes n} = \sum_k p_k \ket{\Phi_k}\bra{\Phi_k}$, where each pure state $\Phi_k$ has Schmidt decomposition $\Phi_k = \sum_{i} \lambda_{i|k} \ket{e_{i|k}f_{i|k}}$. Note that the set $\lambda_{i|k}^2$ form a probability distribution, and that $(k,i)$ form a joint probability distribution. This means, in particular, we can consider the (strongly) typical set $T_{\delta}(K,I)$. This is well defined as for any pure state, $I$ is also finite (its size is upper bounded by $\dim(\mathcal{H})^n$). 

We will now consider the regularised Schmidt measure $\lim_{m\rightarrow \infty} S_M[(\rho^{\otimes n})^{\otimes m}]/m$. We can upper bound $S_M[(\rho^{\otimes n})^{\otimes m}]$  by $\sum_{j} q_j \log S_R(\Psi_{j})$, where $\{q_j,\Psi_{j}\}$  is a specific decomposition of $(\rho^{\otimes n})^{\otimes m}$. We will choose the one generated when substituting the optimal (with respect to $E_F$) decomposition of $\rho^{\otimes n}$ into $(\rho^{\otimes n})^{\otimes m}$ and expanding out the tensor product; thus we equate $\{q_j\}=\{p_{k_1}p_{k_2}\ldots\}$ and $\{\Psi_{j}\}=\{\Phi_{k_1}\otimes \Phi_{k_2}\ldots\}$. To reflect this, we will subsequently write this decomposition as $(\rho^{\otimes n})^{\otimes m}=\sum_{\mathbf{k}}p_{\mathbf{k}}\ket{\Phi_{\mathbf{k}}}\bra{\Phi_{\mathbf{k}}}$.\\

We can define a new (subnormalised state) $\rho^{n,m} = \sum_{\mathbf{k}} p_{\mathbf{k}} \ket{\Phi'_{\mathbf{k}}}\bra{\Phi'_{\mathbf{k}}}$, where $\ket{\Phi'_{\mathbf{k}}}=\sum_{\mathbf{i}\in T_{\delta}(I|k)} \boldsymbol{\lambda}_{\bs{i}|\bs{k}}\ket{\boldsymbol{e_{i|k}f_{i|k}}}$, $\boldsymbol{\lambda}_{\bs{i}|\bs{k}}=\lambda_{i_1|k_1}\lambda_{i_2|k_2}\ldots$ and $\boldsymbol{e_{i|k}}=e_{i_1|k_1}\otimes e_{i_2|k_2}\ldots$.
The (one-norm) distance between this state and $(\rho^{\otimes n})^{\otimes m}$ is given by $\|(\rho^{\otimes n})^{\otimes m}- \rho^{n,m}\|_1 \leq \mathbb{E}[P(i\not \in T_{\delta}(I)]_{\mathbf{k}}$. 
Using the properties of strong typicality, we know that for any $\epsilon$, $\exists\, m$ sufficiently large such that the above expectation $<\epsilon$, and that $\lim_{m\rightarrow \infty} \Rightarrow \epsilon \rightarrow 0$. \\

In particular, that means that
$\lim_{m\rightarrow \infty} S_M^{\epsilon}((\rho^{\otimes n})^{\otimes m})/m \leq \lim_{m\rightarrow \infty} S_V(\rho^{n,m})/m$, where "$S_V$" stands for the ``Schmidt value", i.e., $\sum_\mathbf{k} p_\mathbf{k} \log S_{R}(\mathbf{\Phi'_k})$\ of this decomposition of mixed state $\rho^{n,m}$. We now note the Schmidt rank of $\ket{\mathbf{\Phi}'_\mathbf{k}}$ is equal to the number of non-zero sequences in $T_{\delta}(I)$ which, for large enough $m$, is upper bounded by $2^{m(H(I|K)(1+\delta))}$. However, $H(I|K)= \sum_{k}p_{\mathbf{k}}H(I|k) = E_{F}(\rho^{\otimes n})$, as we chose $\{p_\mathbf{k},\mathbf{\Phi_k}\}$ to be the optimal decomposition.\\

The consequence of this is that, for any $\delta>0$,
$\lim_{m\rightarrow \infty} S_M^{\epsilon}((\rho^{\otimes n})^{\otimes m})/m \leq E_F(\rho^{\otimes n})(1+\delta)$. We can divide this equation through by $n$, and then take the limit of $n\rightarrow \infty$, since this equation holds for any $n$. From this we obtain $\lim_{n\rightarrow \infty}\lim_{m\rightarrow \infty} S_M^{\epsilon}((\rho^{\otimes n})^{\otimes m})/mn \leq E_C(\rho)(1+\delta)$. As this true for any $\delta>0$, we can drop the $\delta$ term. However, the left hand side is equivalent to $\lim_{k \rightarrow \infty} S_M^{\epsilon}((\rho^{\otimes k}))/k$. This gives us the result that $\lim_{k \rightarrow \infty} S_M^{\epsilon}((\rho^{\otimes k}))/k \leq E_C(\rho)$.\\

To prove equality, we may note that given any specific decomposition, the associated entanglement quantity is lower than the associated Schmidt quantity. Therefore, we have that $E_F(\rho)\leq S_M(\rho)$ and similarly that $E_F^{\epsilon}\leq  S_M^{\epsilon}$. This means $E_F^{\epsilon}((\rho^{\otimes n})^{\otimes m})\leq S_M^{\epsilon}((\rho^{\otimes n})^{\otimes m})$.  Our strong typicality tells us in the limit of $m \rightarrow \infty$ that $\epsilon$ vanishes. Therefore the LOCC protocol given by $E_F^{\epsilon}(\rho^{\otimes n})$ is a valid protocol considered in the definition of $E_C$. Therefore $E_{C}(\rho) \leq \lim_{n\rightarrow \infty} \lim_{m\rightarrow \infty} E_F^{\epsilon}((\rho^{\otimes n})^{\otimes m})$, which is itself a lower bound for the regularisation of the smoothed Schmidt measure, $\lim_{k \rightarrow \infty} S_M^{\epsilon}((\rho^{\otimes k}))/k$. Since we have sandwiched the regularised Schmidt measure from above and below by the entanglement cost, the two must coincide.

\subsection{Asymptotic Schmidt measure for Steering}
We see in the previous section that, for states, the entanglement and dimensionality converge as the number of copies increases. One can ask if a similar phenomenon occurs for assemblages. To do this, we need a notion of entanglement for steering assemblages. This is given by the \emph{steering entanglement of formation} \cite{C2021},
\begin{equation}
E_{FA}(\bs{\sigma}):= \inf_{p_i,\bs{\tau_i}} \sum_i p_i S(\rho_{\bs{\tau_i}}).
\end{equation}
It has the interpretation as the minimal entanglement required to create the assemblage $\bs{\sigma}$ (as measured by the entanglement of formation). This quantity is continuous on the set of extremal assemblages (proof in Appendix \ref{SEFA}), meaning it also has a finite (and achievable) optimal decomposition \cite{U2010}.\\

We would like to consider the regularised smoothed Schmidt measure for assemblages, $\lim_{n\rightarrow \infty} S_M^{\epsilon}(\bs{\sigma}^{\otimes n})/n$; to do that, we need a definition of multiple copies of an assemblage. $n$ copies of an assemblage is denoted by $\bs{\sigma}^{\otimes n}$, whose components are given by $\sigma^{\otimes n}_{a_1\ldots a_n|x_1\ldots x_n}=\sigma_{a_1|x_1}\otimes \ldots \sigma_{a_n|x_n}$. The smoothed version of the Schmidt measure is given  by:
\begin{equation}
S_M^{\epsilon}(\bs{\sigma}):=\inf_{\|\bs{\sigma}-\bs{\sigma}'\|_1\leq \epsilon} S_M(\bs{\sigma}'),
\end{equation}
where we choose the trace norm from \cite{KCBMCN2018}:
$\|\bs{\sigma}-\bs{\tau}\|_1= \sup_{P_X} \sum_{a,x}p(x)\|\sigma_{a|x}-\tau_{a|x}\|_{1}$.
There are alternative choices to generalise the trace norm to assemblages, but this choice does not change under copying of an input, or blow up when considering multiple copies of the assemblages. Furthermore, it fits well with the operational meaning of the original trace norm, as a measure of distinguishability.\\

Satisfyingly, for steering assemblages we can again find a link between dimensionality and entanglement; the regularisation of the Schmidt measure satisfies
\begin{equation}
\lim_{k\rightarrow \infty}E_{\mrm{FA}}^{\epsilon}(\bs{\sigma}^{\otimes k})/k \leq \lim_{k\rightarrow \infty} S_{M}^{\epsilon}(\bs{\sigma}^{\otimes k})/k\leq  \lim_{k\rightarrow \infty}E_{\mrm{FA}}(\bs{\sigma}^{\otimes k})/k,
\end{equation}
with $\epsilon$ vanishing at $k\rightarrow \infty$. The proof of this result follows a similar pattern as for the analysis for quantum states: we consider several copies of an assemblage, $\bs{\sigma}^{\otimes n}$. The entanglement of formation for this assemblage is $E_{\mrm{FA}}(\bs{\sigma}^{\otimes n})$. We can denote the optimal decomposition for $E_{\mrm{FA}}(\bs{\sigma}^{\otimes n})$ by $\{p_k,\bs{\tau}_k\}$. This means we have an explicit decomposition of $(\bs{\sigma}^{\otimes n})^{\otimes m} = \sum_{\mathbf{k}} p_{\bf{k}}\bs{\tau}_{\bf{k}}$. For any assemblage in our decomposition we can create an explicit construction of it. The marginal of $\bs{\tau}_{\bs{k}}$ is $\rho_{\bs{\tau}_{\bs{k}}}=\rho_{\tau_{k_1}}\otimes\rho_{\tau_{k_2}}\ldots \rho_{\tau_{k_m}}\equiv \sum_{\mathbf{i}} \lambda^2_{\mathbf{i}|\mathbf{k}} \ket{e_{\mathbf{i}|\bf{k}}}\bra{e_{\mathbf{i}|\bf{k}}}$. Using the purification of this state $\ket{\Phi_{\bf{k}}}=\sum_{\mathbf{i}}\lambda_{\bf{i}|\bf{k}}\ket{e_{\mathbf{i}|\bf{k}}}\ket{e_{\mathbf{i}|\bf{k}}}$ and measurements $M_{\bf{a}|\bf{x}}=\rho^{-1/2}_{\bs{\tau}_{\bs{k}}}\tau^{T}_{\bs{a}|\bf{x}}\rho^{-1/2}_{\bs{\tau}_{\bs{k}}}$, we obtain a realisation of this assemblage.\\

For each $\bs{k}$ we can create a new (subnormalised) assemblage by replacing the state $\ket{\Phi_{\bf{k}}}$ with the state $\ket{\Phi'_{\mathbf{k}}}$, where $\ket{\Phi'_{\mathbf{k}}}=\sum_{\mathbf{i}\in T_{\delta}(I|k)} \boldsymbol{\lambda}_{\bs{i}|\bs{k}}\ket{\boldsymbol{e_{i|k}f_{i|k}}}$. We write this new assemblage as $\bs{\tau}'_{\bs{k}}$. We can mix these assemblages together with appropriate weights to obtain $\bs{\sigma}^{n,m}= \sum_{\bs{k}} p_{\bs{k}}\bs{\tau}'_{\bs{k}}$, which approximates our original assemblage. The distance between our approximation and original is given by $\|\left(\bs{\sigma}^{\otimes n}\right)^{\otimes m} - \bs{\sigma}^{n,m}\|_1 \leq\mathbb{E}[P(i\not \in T_{\delta}(I)]_{\mathbf{k}}$. \\

Strong typicality again tells us that this expectation can be made smaller than any $\epsilon$ for large enough $m$, and that it disappears in the limit $m\rightarrow \infty$. From this argument, we can conclude that $\lim_{m\rightarrow \infty} S^{\epsilon}_{M}(\left(\bs{\sigma}^{\otimes n}\right)^{\otimes m})\leq \lim_{m\rightarrow \infty} S_V(\bs{\sigma}^{n,m})/m$, where $S_V$ is the ``Schmidt value" of the chosen decomposition of assemblage $\bs{\sigma}^{n,m}$. The Schmidt value of this particular decomposition is $\sum_{\bs{k}} p_{\bs{k}} \log \mathrm{rank}(\rho_{\bs{\sigma}'_{\bs{k}}}) \equiv \sum_{\bs{k}} p_{\bs{k}} \log S_R(\ket{\Phi'_{\bs{k}}})$. As we have already seen, the Schmidt rank is upper bounded for large enough $m$ by $2^{m(H(I|K)(1+\delta))}$. We now use  $H(I|K)= \sum_{k} p_{k} H(I|k) = E_{FA}(\bs{\sigma}^{\otimes n})$, where the final equality comes from our choice $\{p_k, \bs{\sigma}_k\}$ of the optimal decomposition.\\

This allows to conclude that for any $\delta>0$, $\lim_{m\rightarrow \infty} M_{S}^{\epsilon}((\bs{\sigma}^{\otimes n})^{\otimes m})/m \leq E_{FA}(\bs{\sigma}^{\otimes n})(1+\delta)$. We can therefore drop the $\delta$. Finally, we divide both sides by $n$, and take the limit. This gives the inequality:
$\lim_{n\rightarrow \infty}\lim_{m\rightarrow \infty} S_M^{\epsilon}((\bs{\sigma}^{\otimes n})^{\otimes m})/mn \leq \lim_{n\rightarrow \infty} E_{FA}(\bs{\sigma^{\otimes n}})/n$, which is equivalent to 
$\lim_{k\rightarrow \infty} S_M^{\epsilon}(\bs{\sigma}^{\otimes k})/k \leq \lim_{k\rightarrow \infty} E_{FA}(\bs{\sigma^{\otimes k}})/k$.\\

We can similarly provide a lower bound in the following form: As for any assemblage $E_{FA}(\bs{\sigma})\leq S_{M}(\bs{\sigma})$ (simply via $S(\rho)\leq \mathrm{rank}(\rho)$), then $E_{F}^{\epsilon}(\bs{\sigma}^{\otimes k})/k\leq S_{M}^{\epsilon}(\bs{\sigma}^{\otimes k})/k$ and thus too $\lim_{k\rightarrow \infty} E_{F}^{\epsilon}(\bs{\sigma}^{\otimes k})/k\leq \lim_{k\rightarrow \infty} S_{M}^{\epsilon}(\bs{\sigma}^{\otimes k})/k$.\\

We note that one cannot take this proof one step further, and equate the two sides of the inequality with the entanglement cost for steering assemblages \cite{C2021}. The key property missing for this is the continuity of $E_{FA}$ over \emph{all} assemblages, which does not hold. Furthermore, we are unsure if a gap can exist between the upper and lower bound. Nevertheless, we see that the entangled degrees of freedom required to create an assemblage, and the entanglement itself converge in the limit of many copies.\\

We would like to comment on the asymptotic treatment of the dimension measure; especially with regards to measurement sets. Intuitively, a smoothed version would give the memory cost of a set of measurements subject to a small error - which would vanish in the asymptotic limit. This is exactly in the spirit of Shannon's source coding theorem. We therefore believe it would be of interest to analyse the asymptotic behaviour of the dimension measure; and this would lead to compression limits on incompatible measurements. Unlike the cases analysed here, we do not have a notion of entanglement for measurements, so it is an open question whether this analysis would converge to a previously studied property.

\section{Conclusions}

We introduced the concept of average dimensionality for three different types of quantum devices. These consist of quantum channels, sets of quantum measurements, and steering assemblages. For channels, the measure characterises the coherence that a channel preserves on average. For the case of measurements, our concept has an interpretation as the average compression dimension that the involved measurements allow. In the case of steering, the measure represents the average entanglement dimensionality in terms of the Schmidt measure, i.e. the number of degrees of freedom or levels that are entangled on average, that can be verified through a steering experiment.

We have shown that for channels between qubit and qutrit systems, the measure can be evaluated with semi-definite programming techniques. This is also the case for qubit measurements. In the case of bipartite quantum states, we first showed that their asymptotic Schmidt measure is equal to the entanglement cost, and then related our measure and entanglement of formation of steering assemblages in the asymptotic setting.

We believe that with the advent of high-dimensional quantum information processing, quantifiers of dimensionality akin to ours will be central for the benchmarking and verification of devices used in quantum communication protocols. Hence, for future works, it will be crucial to analyse the concept of average dimensionality for the three types of quantum devices discussed here, as well as to develop the theory for more general scenarios, such as ones involving quantum instruments, quantum processes and continuous variable systems.

\section{Acknowledgements}
We would like to thank René Schwonnek, Tobias Osborne, Sébastien Designolle, Benjamin Jones, Otfried Gühne, Nicolas Brunner, Jonathan Steinberg, and Chau Nguyen for useful discussions.
T.C. would like to acknowledge the funding Deutsche Forschungsgemeinschaft (DFG, German Research Foundation) under Germany's
Excellence Strategy EXC-2123 QuantumFrontiers 390837967, as well as the support of the Quantum Valley Lower Saxony and the DFG through SFB 1227 (DQ-mat). R.U. is grateful for the support from the Swiss National Science Foundation (Ambizione PZ00P2-202179).

\bibliography{SMIRef.bib}
\onecolumngrid
\appendix

\captionsetup{justification=centering}

\newpage

\section{Table of Terms}
In this appendix, we provide a table summarising the notation, for ease of the reader.\\

\renewcommand{\arraystretch}{1.4}
\begin{table}[h!]
    \centering
    \begin{tabular}{ccc}
        Notation & Name & Definition \\
         \hline
         CPTP & \bf{C}ompletely \bf{P}ositive \bf{T}race \bf{P}reserving & \\
         LOCC & \bf{L}ocal \bf{O}perations and \bf{C}lassical \bf{C}ommunication &\\
         MUB & \bf{M}utually \bf{U}nbiased\bf{B}ases & \\
         n-PEB & n \bf{P}artially \bf{E}ntanglement \bf{B}reaking &\\
         POVM & \bf{P}ositive \bf{O}perator \bf{V}alued \bf{M}easure &\\
         SEP & The set of \bf{Sep}arable states & \\
         SDP & \bf{S}emi-\bf{D}efinite \bf{P}rogramming.\\
         \hline 
        
         $S_R(\phi)$ & Schmidt Rank & $\min r\; \text{s.t.} \ket{\phi}=\sum_{k=1}^r \lambda_k \ket{e_k}\ket{f_k}$\\
         
         $S_N(\rho)$ & Schmidt Number (states) & $\min \{\max_i \log S_R(\phi_i)\mid \sum_{i} p_i\ket{\phi_i}\bra{\phi_i} = \rho\}$\\
         
         $\mathbb{I}_d$ & Identity Operator on $\mathcal{H}_d$ &\\
         
         $S_N(\mathcal{E})$ & Schmidt Number (channels) & $\min \{\max_i \log \mathrm{rank}(K_i)\mid \mathcal{E}(\rho)=\sum_{i} K_i\rho K_i^{\dagger}\}$\\
         
         $S_M(\rho)$ & Schmidt measure (states) & $\min \{\sum_i p_i \log S_R(\phi_i)\mid \sum_{i} p_i\ket{\phi_i}\bra{\phi_i} = \rho\}$\\
         
         $S_M(\bs{\sigma})$ & Schmidt measure (assemblages) & $\min \{\sum_i p_i \log \rank(\rho_{\bs{\tau}^i})\mid \sum_{i} p_i\bs{\tau}^i = \bs{\sigma}\}$\\
         
         $D_M(\mathcal{E})$ & Dimension Measure (channels) & $\min \{\max_\rho \sum_i \Tr(K_i^\dagger K_i \rho)\log \rank K_i \mid \mathcal{E}(\rho)=\sum_{i} K_i\rho K_i^{\dagger}\}$\\
         
        $D_M(\bs{M})$ & Dimension Measure (measurements) & $\min \{\max_\rho \sum_i \Tr(K_i^\dagger K_i \rho)\log \rank K_i \mid \exists \bs{N}\, \mathcal{E}^*(N_{a|x})=M_{a|x}\}$\\
        
        $\mathcal{E}_p(\rho)$ & Qubit Depolarising Channel & $ (1-p)\rho + p\mathbb{I}_2/2$\\
        
        $\chi_{\mathcal{E}}$ & Choi Matrix of a channel. & $\mathbb{I}\otimes \mathcal{E}\left(\ket{\Phi^+_d}\bra{\Phi^+_d}\right)$\\
        
        $R_q(\rho)$ & Qubit Erasure Channel & $ (1-q)\rho + q \ket{e}\bra{e}$, $\braket{e}{0} = 0 = \braket{e}{1}$\\
        
        $\bs{M}$ & A collection of POVMs & $\{M_{a|x} \mid \sum_{a} M_{a|x} = \mathbb{I}_d\}$\\
        
        $D_N(\bs{M})$ & Compression Dimension & $\min \{ \max_i \log \rank K_i \mid \exists \bs{N}\, \mathcal{E}^*(N_{a|x})=M_{a|x}\}$\\
        
        $\mathcal{B}_{\mathrm{JM}}$ & Set of Jointly Measureable Pseudo-Measurements & $\{\mathbf{B}\mid B_{a|x}=\sum_{\lambda} D(a|x,\lambda)G_{\lambda},\;\sum_{\lambda}G_{\lambda}\leq \mathbb{I}\}$\\
        
        $S_M^{\varepsilon}(\rho)$ & Smoothed Schmidt measure (states) & $\inf_{\|\rho-\rho'\|_1\leq \epsilon} S_M(\rho')$. \\
        
        $S_M^{\varepsilon}(\rho)$ & Smoothed Schmidt measure (assemblages) & $\inf_{\|\bs{\sigma}-\bs{\sigma}'\|_1\leq \epsilon} S_M(\bs{\sigma}')$. \\
        
        $E_C(\rho)$ & Entanglement Cost (state) & $\inf\{E\mid \forall \epsilon,\;\; \delta >0\;\; \exists m,\,n,\, \Lambda,$\\
        & & 
$| E-m/n|\leq \delta,\;\; \|\Lambda(\ket{\Phi^+}\bra{\Phi^+}^{\otimes m}),\rho^{\otimes n}\|_{tr} \leq \epsilon\}$\\

$E_F(\rho)$ & Entanglement of Formation (States) & 
$\inf \{ \sum_i p_i E_S(\Phi_i) \mid \sum_{i} p_i\ket{\phi_i}\bra{\phi_i} = \rho\}$\\

$E_S(\phi)$ & Entanglement Entropy & $S(\mathrm{Tr}_{B}[\ket{\Phi}\bra{\Phi}])$\\ 

$T_{\delta}(X)$ & Strong Typical Set & $\{\mathbf{x}\bs{\mid} |N[x|\mathbf{x}]/n-P(x)|\leq \delta P(x)\}$ \\

$T_{\delta}(X,Y)$ & Strong Jointly Typical Set & $\{(\mathbf{x},\mathbf{y})\mid |N[(x,y)|(\mathbf{x},\mathbf{y})]/n-P(x,y)|\leq \delta P(x,y)$ \\

$T_{\delta}(Y|\mathbf{x})$ & Strong Conditionally Typical Set & $\{\mathbf{y}\mid (\mathbf{x},\mathbf{y})\in T_{\delta}(X,Y)\}$\\

$N[x|\mathbf{x}]$ & Number of $[\mathbf{x}]_i = x$ & \\

$H(X)$ & Shannon Entropy & $-\sum_{x} p_x \log p_x $\\

$H(Y|X)$ & Conditional Shannon Entropy & $-\sum_{x} p(x)\sum_{y} p(y|x) \log p(y|x)$ \\

$S_V(\{(p_i,\phi_i)\}$ & ``Schmidt Value" & $\sum_i p_i \log S_R(\phi_i)$\\

$E_F^{\varepsilon}(\rho)$ & Smoothed Entanglement of Formation (states) & $\inf_{\|\rho-\rho'\|_1\leq \epsilon} E_F(\rho')$. \\

$S(\rho)$ & von Neumann Entropy & $-\mathrm{Tr}(\rho\log\rho)$\\

$E_{FA}(\bs{\sigma})$ & Steering Entanglement of Formation (Assemblages) & $\inf \{\sum_i p_i S(\rho_{\bs{\tau}^i})\mid \sum_{i} p_i\bs{\tau}^i = \bs{\sigma}\}$ \\

$E^{\varepsilon}_{FA}(\bs{\sigma})$ & Smoothed Steering Entanglement of Formation & $\inf_{\|\bs{\sigma}-\bs{\sigma}'\|_1\leq \epsilon} E_{FA}(\bs{\sigma}')$\\

$S_{M}^n$ & ``Restricted" Schmidt measure & $\min \{\sum^{n}_i p_i \log S_R(\phi_i)\mid \sum_{i} p_i\ket{\phi_i}\bra{\phi_i} = \rho\}$\\

$W(\bs{M})$ & Incompatibility Weight & $\min \{p \bs{M} = p\bs{N}+(1-p)\bs{G}, \bs{G} \text{ jointly measureable}\}$

    \end{tabular}
    \label{tab:my_table}
\end{table}

\newpage

\section{Finiteness and Achievability of Schmidt and Dimension Measures}
One may note that the Schmidt and dimension measures introduced in the main body of the paper are defined via minimisations and maximisations, rather than infimums and supremums. In this section we will justify this choice; by showing that for these measures the two definitions are equivalent. We will also show, as a consequence, that the number of terms in the optimal decomposition is \emph{finite}. We will also see that all of these measures are lower semi-continuous.\\
\subsection{Schmidt measure for States}
We will begin with a detailed proof for the Schmidt measure (for states); and then adjust this argument, as required, to our other measures. Suppose we instead define the Schmidt measure as:
\begin{equation}
S_M(\rho) = \inf \left\{\sum_i p_i \log S_R(\ket{\phi_i}\bra{\phi_i}) \bigg | \sum_i p_i \ket{\phi_i}\bra{\phi_i} =\rho\right\}.
\end{equation}
where the optimisation is over the set of all (possibly infinite length) convex combinations $\sum_i p_i\ket{\phi_i}\bra{\phi_i} = \rho$. We start by noting that $S_M(\rho)$ may be defined by the limit $S_M(\rho)=\lim_{n\rightarrow \infty} S_M^{n}(\rho)$, where $S_M^{n}(\rho)$ is defined as above, except that the number of (non-zero) terms in the decomposition must be no larger than $n$. Thus the infimum in the definition of $S_M^{n}(\rho)$ is over a compact set. The function in question, $\sum^{n}_i p_i \log S_R(\ket{\phi_i}\bra{\phi_i})$, is a \emph{lower semi-continuous} function, because the Schmidt rank is lower semi-continuous, and the logarithm function is monotonically increasing - a property which preserves lower semi-continuity under composition. $p_i$ is linear, and thus the product $p_i \log S_R(\ket{\phi_i}\bra{\phi_i})$ is also lower semi-continuous. Moreover, the sum of lower semi-continuous functions is again lower semi-continuous, proving our claim. The infimum of an lower semi-continuous function over a compact set is achieved and therefore a minimum; telling us that we may use a minimisation in the definition of $S_M^n(\rho)$. For a detailed description about lower semi-continuous functions and their properties, we refer the reader to \cite{J2003,M1999}\\

This tells that, in order for $S_M(\rho)$ to be a true infimum, we must have an infinite sequence of $n_k$, for which $S_M^{n_{k}+1}(\rho)<S_M^{n_k}(\rho)$. 

Let us pick one such $n_k=:\tilde{n}\geq d^4 $. Let us consider the optimal decomposition for $n_{k}+1$, $\{\ket{\phi_i}\bra{\phi_i}\}$. This set must be linearly dependent. Let us choose $\ket{\phi_\alpha}$ as the state which has the highest Schmidt rank and can be written with a nonzero coefficient in a linear dependence equation. Using the linear dependence equation, we can write a new decomposition of $\rho = \sum_{i}^{n_k+1} p_i' \ket{\phi_i}\bra{\phi_i}$ in which one of the $p_i'$ is zero; that is, it is a valid decomposition (and thus provides an upper bound) for $S_M^{n_k}$. However, we have chosen this decomposition such that $S_M^{n_k}(\rho)\leq \sum_{i}^{n_k+1} p_i' \log S_{R}(\ket{\phi_i}) \leq \sum_{i}^{n_k+1} p_i \log S_{R}(\ket{\phi_i}) = S_M^{n_k+1}(\rho)$. However, this contradicts our assumption, implying it cannot hold. Thus the infimum in the definition of $S_M(\rho)$ must also be a minimum.\\

As a consequence of this result, we see that the Schmidt measure must be achieved by at most $d^4$ pure states. We also obtain as a consequence that the Schmidt measure \emph{itself} is a lower semi-continuous function:  $S_M^{n}$ is the minimum of a set of lower semi-continuous functions, which is again lower semi-continuous. However, this means that $S_M(\rho)$ is also a minimum over a set of lower semi-continuous functions, implying it, too, is lower semi-continuous.\\

Before we move on, it is also worth mentioning here the possibility of replacing the sum in the definition with an integral over the pure states instead. Fortunately, assuming that the state of interest is finite-dimensional, this is not necessary. Suppose we had such an optimal integral $\int_{\Omega} \rho(\omega) \ket{\omega}\bra{\omega} \log S_R(\ket{\omega})$, such that $\int_{\Omega} \rho(\omega) \ket{\omega}\bra{\omega}=\rho$. Let us evaluate the restricted sub-integral $\int_{\Omega|\log S_R(\ket{\omega})=0} \rho(\omega) \ket{\omega}\bra{\omega}=:\rho_0$. We can define an integral version of the Schmidt number for $\hat{\rho}_0:=\rho_0/\mathrm{Tr}(\rho_0)$, as $S^{\infty}_{N}(\rho_0)=\inf_{\rho(\omega)} \sup_{\omega}\{\log S_R(\ket{\omega}) |  \int_{\Omega} \rho(\omega) \ket{\omega}\bra{\omega}=\hat{\rho}_0\}$.  It is clear this quantity is $0$. Moreover, $\sup_{\omega}\{\log S_R(\ket{\omega})\}$ is a lower semi-continous function, implying the set $\{\sigma| S^{\infty}_{N}(\sigma)\leq 0\}$ is closed (and thus compact). Catheodory's theorem tells us we can replace then our subintegral for a sum, which will also give $\rho_0$. 
Let us now consider $\int_{\Omega|\log S_R(\ket{\omega})=1} \rho(\omega) \ket{\omega}\bra{\omega}=:\rho_1$. Since $\{\sigma| S^{\infty}_{N}(\sigma)\leq 1\}$ is also compact, we can replace this second integral by a sum too, into extremals whose logarithmic Schmidt rank is $0$ or $1$. If any extremal with non-trivial weight were to have rank $0$, however, this would strictly decrease our Schmidt measure; a contradiction to optimality. Thus we may replace our integral with a sum of extremals with logarithmic Schmidt rank 1. Repeating this argument inductively gives us a finite sum decomposition of $\rho$, which achieves the same Schmidt measure as our integral. Thus, we need not consider integrals in our definition. This same argument works for the other measures introduced in this work.
\subsection{Schmidt measure for Assemblages}
If we instead define
\begin{equation}
S_M(\bs{\sigma}) = \inf \left\{\sum_i p_i\log \mathrm{rank}\, \rho_{\bs{\tau}^i} \bigg | \sum_i p_i \bs{\tau}^i =\bs{\sigma}\right\}
\end{equation}
we see that an argument similar to the above is almost directly applicable. Instead of Schmidt rank, we have $\mathrm{rank} \,\rho_{\bs{\tau}}$ - however, matrix rank is also a lower semi-continuous function (one can use this fact to prove lower semi-continuity of the Schmidt rank!). The other difference to take care of is to check that convex combinations of $n$ assemblages summing to $\bs{\sigma}$ is a compact set - since we are assuming that $\mathcal{A},\mathcal{X}$ are finite, and the characterised Hilbert space is finite, we see that this space does have a maximal dimension and is thus compact. For the proof by contradiction, we would then take $n_k$ to be greater than this dimension.\\

A consequence of this result is that the Schmidt measure for assemblages is lower semi-continuous. To work out the maximal number of elements in the optimal decomposition, we must obtain the dimension of the space of $d$- dimensional, $\mathcal{X}$-input, $\mathcal{A}$-output assemblages. We have $|\mathcal{X}||\mathcal{A}|$ subnormalised states of dimension $d^2$ - however $|\mathcal{X}|$ of these are fixed by the no-signalling condition. We also have the marginal state, which has $d^2-1$ degrees of freedom, as it must be normalised. This means the dimension of our space is $\mathrm{dim}=d^2(|\mathcal{X}|(|\mathcal{A}|-1)+1)-1$. This means our optimal decomposition can have at most $\mrm{dim}+1=d^2\left(|\mathcal{X}|(|\mathcal{A}|-1)+1\right)$ components.

\subsection{Dimension Measure for Measurements}\label{DimMeasMeas}
The more general definition of this measure would be:
\begin{equation}
D_{M}(\bs{M})=\inf_{\bs{A}^i} \sup_{\rho} \sum_{i} \mathrm{Tr}(A^i\rho)\log \mrm{rank}(A^i).
\end{equation}
Let us first fix the $\bs{A}^{i}$, and consider the supremum over states $\rho$. $\sum_{i} \mathrm{Tr}(A^i\rho)\log \mrm{rank}(A^i)$ is a linear function over $\rho$ - and our supremum is taken over the set of density matrices for a (finite) input dimension. This is the supremum of a linear function over a compact set - meaning it is achieved, and can be replaced by a maximisation.\\

We can also note that, for a fixed $\rho$, $\mathrm{Tr}(A^i\rho)$ is a linear function of $\bs{A}^i$. We have already seen the $\log \mrm{rank}$ is a lower semi-continuous function, and that multiplying it with a linear function and performing a summation preserves lower semi-continuity. Although we perform a maximisation over $\rho$, a maximisation over lower semi-continuous functions also preserves lower semi-continuity. This means we are now in a position to repeat the same argument we used for the Schmidt measure - provided that a finite number of $\bs{A}^{i}$ is a compact set. However, each $\bs{A}^i$ is a collection of $|\mathcal{X}||\mathcal{A}|$ Hermitian matrices (the set of which has dimension $d^2$); so, as with assemblages, we can conclude that a finite number of $\bs{A}^i$ is a compact space.\\

Another small variation in the previously used argument is that, previously, we decomposed our state/assemblage as a convex combination, whereas here we write directly $\bs{M} = \sum_{i} \bs{A}^i$. However, this can be rewritten as a convex combination, by rescaling each $\bs{\hat{A}}^{i}= \bs{A}^i$. Then $\bs{M}=\sum_{i} p_i \bs{\hat{A}}^{i}$, where $p_i$ is simply the inverse scaling factor, $\mathrm{Tr}(A^i)/d$. One should take care to note however, that in general $\hat{A}^{i}\not\leq  \openone_d$ - these objects are not physical. Nevertheless, we can use them to directly repeat the argument used in the proof for the Schmidt measure. \\

A consequence of this unphysical representation is that we can use it determine the maximal number of components required in the optimal decomposition. As is the case with assemblages, we have $|\mathcal{X}||\mathcal{A}|$ Hermitian operators, $|\mathcal{A}|$ of which are determined by no-signalling conditions; as well as a marginal operator with a fixed trace - thus giving it $d^2-1$ degrees of freedom. We therefore have a $\mathrm{dim}=d^2(|\mathcal{X}|(|\mathcal{A}|-1)+1)-1$ space, with optimal decompositions achievable with $d^2(|\mathcal{X}|(|\mathcal{A}|-1)+1)$ components.

\subsection{Dimension Measure for Channels}
If one tries to directly repeat the same proof for the general definition of dimension measure for the channel,
\begin{equation}
D_M(\mathcal{E}):=\min_{K_i} \max_{\rho}\mathrm{Tr}(K_i^{\dagger}K_i\rho) \log \mathrm{rank}(K_i),
\end{equation}
one runs into a problem; the set of Kraus representations for a given channel is not a convex set. This means that one cannot re-express linearly dependent operators $K_{i}^\dagger K_i$ and obtain a valid decomposition with a reduced number of terms.\\

To solve this problem, we use an equivalent formulation of the dimension measure, that we used to  express the dimension measure for low dimensions as an SDP. That is, we can write it as:
\begin{equation}
D_M(\mathcal{E}):=
\inf_{\phi^l} \sup_\rho \sum_l d\,\mathrm{Tr}(\rho^{T}\otimes \openone\ket{\phi^l}\bra{\phi^l})\log S_R(\ket{\phi^l}).
\end{equation}
Now we can follow the same lines as before: $\sum_i d\,\mathrm{Tr}(\rho^{T}\otimes \openone\ket{\phi^l}\bra{\phi^l})\log S_R(\ket{\phi^l})$ is a linear function of $\rho$, allowing us to replace the supremum with a maximimum.  Similarly to before, $\max_{\rho} \sum_i d\,\mathrm{Tr}(\rho^{T}\otimes \openone\ket{\phi^l}\bra{\phi^l})\log S_R(\ket{\phi^l}$ is a lower semi-continuous function. By normalising each $\ket{\hat{\phi}^l}\bra{\hat{\phi}^l}=\ket{\phi^l}\bra{\phi^l}/\braket{\phi^l}{\phi^l}$ our infimum is performed over the convex set $\sum_{i} p_i\ket{\hat{\phi}^l}\bra{\hat{\phi}^l} = \chi_{\mathcal{E}}$, allowing us to use the direct finite achievability argument first applied to the Schmidt measure.\\

As with the Schmidt measure, we obtain as a consequence of this argument that an optimal decomposition is achievable with $d^4$ terms; and that, as with the other Schmidt and dimension measures, this measure is lower semi-continuous.

\section{Properties of the Channel Schmidt measure}
In the following section, we prove some results regarding the Schmidt number and meausure of a quantum channel, which are either stated or used in the main paper.
\begin{lemma}
The Schmidt number/Schmidt measure of a channel is equal to the Schmidt number/Schmidt measure of its Choi state.
\end{lemma}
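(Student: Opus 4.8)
The plan is to use the Choi--Jamio\l{}kowski isomorphism not at the level of the whole channel but at the level of individual Kraus operators, turning a Kraus representation of $\mathcal{E}$ into a pure-state ensemble for $\chi_{\mathcal{E}}$ and vice versa, and then to check that this correspondence leaves invariant exactly the two data that enter both optimisations: the rank of each term and its weight. Concretely, I would fix the maximally entangled state $\ket{\Phi^+}=\tfrac{1}{\sqrt d}\sum_j\ket{jj}$ and associate to every $d_2\times d_1$ matrix $K$ the (subnormalised) vector $\ket{\phi_K}:=(\openone\otimes K)\ket{\Phi^+}$. Since $\ket{\phi_K}=\tfrac{1}{\sqrt d}\sum_{jk}K_{kj}\ket{j}\ket{k}$, the map $K\mapsto\ket{\phi_K}$ is a linear bijection between matrices and bipartite vectors (``vec''/``unvec''), and a Kraus representation $\mathcal{E}(\cdot)=\sum_i K_i\,\cdot\,K_i^\dagger$ corresponds term-by-term to the decomposition $\chi_{\mathcal{E}}=\sum_i\ket{\phi_{K_i}}\bra{\phi_{K_i}}$.

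The two invariants I would then verify are the following. First, writing $K=\sum_k s_k\ket{u_k}\bra{v_k}$ in its singular value decomposition gives $\ket{\phi_K}=\tfrac{1}{\sqrt d}\sum_k s_k\ket{\bar v_k}\ket{u_k}$, which is already a Schmidt decomposition; hence $S_R(\phi_K)=\#\{k:s_k>0\}=\mathrm{rank}(K)$. Second, a direct computation of the reduced state gives $\mathrm{Tr}_B\ket{\phi_K}\bra{\phi_K}=\tfrac{1}{d}\overline{K^\dagger K}$, so that the weight of the term is $\|\phi_K\|^2=\tfrac{1}{d}\mathrm{Tr}(K^\dagger K)$ and, crucially, the trace-preservation condition $\sum_i K_i^\dagger K_i=\openone$ is equivalent to the fixed-marginal condition $\mathrm{Tr}_B\chi_{\mathcal{E}}=\openone/d$. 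This last equivalence is what guarantees that \emph{every} pure-state decomposition of $\chi_{\mathcal{E}}$ unvecs to a genuine (trace-preserving) Kraus representation of the \emph{same} channel $\mathcal{E}$ (the channel being uniquely fixed by $\chi_{\mathcal{E}}$), so the correspondence is a bijection between the two optimisation domains and not merely an inclusion.

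With the bijection and the two invariants in hand, the conclusion is immediate for both quantities. For the Schmidt number one minimises $\max_i\log\mathrm{rank}(K_i)=\max_i\log S_R(\phi_{K_i})$ over Kraus representations, which is exactly the minimisation of $\max_i\log S_R(\Phi_i)$ over pure-state decompositions of $\chi_{\mathcal{E}}$, so the optima agree. For the Schmidt measure one minimises $\sum_i\tfrac1d\mathrm{Tr}(K_i^\dagger K_i)\log\mathrm{rank}(K_i)=\sum_i\|\phi_{K_i}\|^2\log S_R(\phi_{K_i})$, which is the minimisation of $\sum_i p_i\log S_R(\Phi_i)$ defining $S_M(\chi_{\mathcal{E}})$; again the objective values coincide term-by-term, so the two measures are equal.

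The main obstacle I anticipate is not the algebra of the two invariants, which is routine, but the ``validity'' (surjectivity) direction of the bijection: one must be sure that an \emph{arbitrary} decomposition of $\chi_{\mathcal{E}}$ into rank-one positive terms --- including those whose vectors are not presented in the form $(\openone\otimes K)\ket{\Phi^+}$ --- still yields an admissible Kraus set for $\mathcal{E}$. This is handled by the unvec map being onto all bipartite vectors together with the observation that admissibility reduces solely to the marginal condition $\mathrm{Tr}_B\chi_{\mathcal{E}}=\openone/d$, which holds for any decomposition because $\chi_{\mathcal{E}}$ is fixed. A secondary point worth stating carefully is that Schmidt rank and matrix rank are both insensitive to the normalisation of the vector, so rescaling $\ket{\phi_{K_i}}$ to a unit vector $\ket{\Phi_i}$ with weight $p_i=\|\phi_{K_i}\|^2$ does not alter the rank entering either objective.
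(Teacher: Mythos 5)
Your proposal is correct and follows essentially the same route as the paper's proof: vectorising each Kraus operator against $\ket{\Phi^+}$, using the singular value decomposition to identify $\mathrm{rank}(K_i)$ with $S_R(\phi_{K_i})$ and $\mathrm{Tr}(K_i^\dagger K_i)/d$ with the ensemble weight, and running the correspondence in both directions. Your explicit treatment of surjectivity via the marginal condition $\mathrm{Tr}_B\chi_{\mathcal{E}}=\openone/d$ is a slightly more careful phrasing of what the paper dispatches with ``one can check that these satisfy the Kraus requirements,'' but it is the same argument.
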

\begin{proof}
The Choi state of a channel is given by $id\otimes\mathcal{E}(\ket{\Phi^+_d}\bra{\Phi^+_d})$, where $d$ in the input space dimension. We will omit the $d$ subscripts hereafter. Suppose we have an explicit decomposition of $\mathcal{E}$ into Kraus operators $\{K_i\}$. Then we can explicitly write our Choi state as:
\begin{equation}
\sum_{i} \left(\openone\otimes K_{i}\ket{\Phi^+}\right)\left(\openone\otimes K_{i}\ket{\Phi^+}\right)^{\dagger}.
\end{equation}
As $K_{i}$ is rank $r$, it can be expressed as the matrix $K_{i}=\sum_{j}^{r_i}\lambda^i_{j}\ket{e^i_j}\bra{f^i_j}$ where $e^i_j,f^i_j$ are left/right eigenvectors.\\
Thus
\begin{align*}
&\sum_{i} \left(\openone\otimes K_{i}\ket{\Phi^+}\right)\left(\openone\otimes K_{i}\ket{\Phi^+}\right)^{\dagger}\\
&= 1/d\sum_{i,j,k}\lambda^{i}_{j}\ket{k}\ket{e^i_j}\bra{f^i_j}\ket{k}\sum_{i',j',k'}\lambda^{i'}_{j'}\bra{k'}\bra{e^{i'}_{j'}}\bra{k'}\ket{f^{i'}_{j'}}\\
&=\left(1/\sqrt{d}\sum_{i,j}\lambda^{i}_{j}\ket{f^{*i}_j}\ket{e^i_j}\right)\left(1/\sqrt{d}\sum_{i',j'}\lambda^{i'}_{j'}\bra{f^{*i'}_{j'}}\bra{e^{i'}_{j'}}\right)\\
\end{align*}
where conjugation is taken with respect to the chosen basis for $\ket{\Phi^+}$. From this, we see that every Kraus decomposition corresponds to a decomposition of the Choi state into pure states, whose Schmidt rank is equal to the rank of the corresponding Kraus state. the respective weight of each pure state in the sum is given by $p_{i}=\sum^{r_i}_{j}(\lambda^{i}_{j})^2/d \equiv \mathrm{Tr}(K^\dagger_{i}K_{i})/d$.

Conversely, given a decomposition of the Choi state into pure states $\{p_{i},\sum_{j} \lambda^{i}_{j}\ket{f^j_{i}}\ket{e^{j}_{i}}\}$, we can construct a set of Kraus operators $K_{i}=\sqrt{dp_i}\sum_{j}\lambda^{i}_{j}\ket{e^i_{j}}\bra{f^{i}_{j}}$. One can check that these satisfy the Kraus requirements, the rank is equal to the Schmidt rank, and $p_{i}=\mathrm{Tr}(K^\dagger_{i}K_{i})/d$.
\end{proof}

\begin{corollary}
For channels with a single Kraus operator (i.e. unitary or isometric evolution), the Schmidt number and measure are both given by the log rank of this Kraus operator.
\end{corollary}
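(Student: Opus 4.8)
The plan is to reduce the statement to the already-established Lemma together with the rigidity of pure-state decompositions. First I would observe that a channel with a single Kraus operator $K$ is trace preserving precisely when $K^\dagger K=\openone_{d_1}$, i.e.\ $K$ is an isometry; hence $K$ has full column rank, $\mathrm{rank}(K)=d_1$, and $\log\mathrm{rank}(K)=\log d_1$. The corresponding Choi state is $\chi_\mathcal{E}=(\openone\otimes K)\ket{\Phi^+}\bra{\Phi^+}(\openone\otimes K)^\dagger=\ket{\psi}\bra{\psi}$, a \emph{pure} state with $\ket{\psi}=(\openone\otimes K)\ket{\Phi^+}$. From the computation in the proof of the Lemma, writing $K=\sum_j\lambda_j\ket{e_j}\bra{f_j}$ in its singular value decomposition yields $\ket{\psi}=\tfrac{1}{\sqrt d}\sum_j\lambda_j\ket{f_j^*}\ket{e_j}$, whose Schmidt rank equals the number of nonzero singular values, namely $\mathrm{rank}(K)$.

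Next I would invoke the Lemma to replace the channel quantities by those of $\chi_\mathcal{E}$, and use the fact that a pure state admits only the trivial pure-state decomposition: if $\ket{\psi}\bra{\psi}=\sum_i p_i\ket{\phi_i}\bra{\phi_i}$ with all $p_i>0$, then each $\ket{\phi_i}$ lies in the one-dimensional range of $\ket{\psi}\bra{\psi}$, so $\ket{\phi_i}\propto\ket{\psi}$ and $S_R(\phi_i)=S_R(\psi)$ for every $i$. Consequently both the minimax defining the Schmidt number and the average defining the Schmidt measure collapse: $S_N(\chi_\mathcal{E})=\min\max_i\log S_R(\phi_i)=\log S_R(\psi)$ and, since $\sum_i p_i=1$, $S_M(\chi_\mathcal{E})=\min\sum_i p_i\log S_R(\phi_i)=\log S_R(\psi)$. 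Both therefore equal $\log\mathrm{rank}(K)$.

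The only genuine content is the rigidity of the pure-state decomposition (the extreme-point property of a pure state), which is elementary; this is where the single-Kraus hypothesis does all the work. If one additionally wants the dimension measure $D_M$ (whose $\max_\rho$ the Lemma does not directly address), the robust route is to argue at the Kraus level via the unitary freedom of Kraus representations: any alternative representation $\{L_j\}$ satisfies $L_j=u_j K$ for scalars with $\sum_j|u_j|^2=1$, so every nonzero $L_j$ has rank $\mathrm{rank}(K)$, while trace preservation gives $\sum_j\mathrm{Tr}(L_j^\dagger L_j\rho)=\mathrm{Tr}(\rho)=1$. Hence $\sum_j\mathrm{Tr}(L_j^\dagger L_j\rho)\log\mathrm{rank}(L_j)=\log\mathrm{rank}(K)$ independently of both $\rho$ and the chosen representation, so $D_M(\mathcal{E})=\log\mathrm{rank}(K)$ as well, consistent with the ``isometric channels have dimension measure $\log d_1$'' property stated earlier.
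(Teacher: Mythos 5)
Your proof is correct and follows essentially the same route as the paper: the single-Kraus channel has a pure Choi state, and by the preceding Lemma together with the rigidity of pure-state decompositions the Schmidt number and Schmidt measure both collapse to $\log S_R(\psi)=\log\mathrm{rank}(K)$. Your additional argument for $D_M(\mathcal{E})$ via the unitary freedom of Kraus representations is a correct bonus beyond what the paper's one-line proof addresses, but the core reasoning is the same.
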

This simply follows from the result that such channels correspond to pure states, upon which the logarithmic Schmidt rank, measure and number all coincide.

\section{Properties of the Channel Dimension Measure}
In the following section, we verify the claims made in the main paper regarding the properties of our new measure. We remind the reader that the the dimension measure of a channel is defined as:
\begin{equation}
D_M(\mathcal{E}):=\min_{K_i} \max_{\rho}\mathrm{Tr}(K_i^{\dagger}K_i\rho) \log \mathrm{rank}(K_i).
\end{equation}
\begin{itemize}
\item Convexity: Suppose we have a channel $\mathcal{E}=p\mathcal{E}_1+(1-p)\mathcal{E}_2$. Suppose the optimal Kraus operators for $\mathcal{E}_1$ are $\{K_i\}$, and for $\mathcal{E}_2$ they are $\{L_j\}$. Then a valid choice of Kraus operators for $\mathcal{E}$ is $\{\sqrt{p}K_i\}\cup \{\sqrt{1-p}L_j\}$. Thus we have that:
\begin{align*}
D_M(\mathcal{E}) &\leq \max_{\rho} \sum_{i} \mrm{Tr}(pK_{i}^{\dagger}K_i\rho)\log \rank (pK_i) + \sum_{j} \mrm{Tr}((1-p)L_{j}^{\dagger}L_j\rho)\log \rank ((1-p)L_j)\\
&\leq p\max_{\rho_1}\mrm{Tr}(K_{i}^{\dagger}K_i\rho_1)\log \rank (K_i) + (1-p)\max_{\rho_2}\sum_{j} \mrm{Tr}(L_{j}^{\dagger}L_j\rho_2)\log \rank (L_j)\\
&=pD_M(\mathcal{E}_1)+(1-p)D_M(\mathcal{E}_2).
\end{align*}
\item Subadditivity over tensor product. Suppose we have a channel $\mathcal{E}=\mathcal{E}_1\otimes \mathcal{E}_2$. Suppose the optimal Kraus operators for $\mathcal{E}_1$ are $\{K_i\}$, and for $\mathcal{E}_2$ $\{L_j\}$. Then a valid choice of Kraus operators for $\mathcal{E}$ is $\{K_i\otimes L_j\}$. This means we have:
\begin{align*}
D_M(\mathcal{E}) &\leq \max_{\rho} \sum_{i,j} \mathrm{Tr}[(K_i^\dagger K_{i}\otimes L_j^\dagger L_j)\rho]\log\rank(K_i\otimes L_j)\\
&= \max_{\rho} \sum_{i,j} \mathrm{Tr}[(K_i^\dagger K_{i}\otimes L_j^\dagger L_j)\rho](\log \rank(K_i)+\log\rank(L_j))\\
&=\max_\rho \sum_{i}\mathrm{Tr}[(K_i^\dagger K_{i}\otimes\openone)\rho]\log\rank(K_i) + \sum_{j}\mathrm{Tr}[(\openone\otimes L_j^\dagger L_j)\rho]\log\rank(L_j)\\
&=\max_\rho \sum_{i}\mathrm{Tr}[K_i^\dagger K_{i}\rho_A]\log\rank(K_i) + \sum_{j}\mathrm{Tr}[L_j^\dagger L_j\rho_B]\log\rank(L_j)\\
&\leq \max_{\rho_{A}} \sum_{i}\mathrm{Tr}[K_i^\dagger K_{i}\rho_A]\log\rank(K_i) + \max_{\rho_{B}} \sum_{j}\mathrm{Tr}[ L_j^\dagger L_j\rho_B]\log\rank(L_j) = D_M(\mathcal{E}_1)+D_M(\mathcal{E}_2).
\end{align*}
\item Non-increasing under post-processing and pre-processing:
Consider channel $\mathcal{E}=\mathcal{E}_2\circ \mathcal{E}_1$. Choose optimal Kraus operators $\{K_i\}$, $\{L_j\}$ for channels $\mathcal{E}_1,\mathcal{E}_2$ respectively. Explicit Kraus operators for $\mathcal{E}$ are then $\{L_{j}K_{i}\}$.
\begin{align*}
D_M(\mathcal{E})&\leq \max_{\rho} \sum_{i,j} \mathrm{Tr}[K_{i}^\dagger L_{j}^\dagger L_{j} K_i\rho]\log \rank(L_{j}K_{i}),\\
&\leq  \max_{\rho} \sum_{i,j} \mathrm{Tr}[K_{i}^\dagger L_{j}^\dagger L_{j} K_i\rho]\log \rank(K_{i}),\\
&= \max_{\rho} \sum_{i} \mathrm{Tr}[K_{i}^\dagger \left(\sum_{j} L_{j}^\dagger L_{j}\right) K_i\rho]\log \rank(K_{i})\\
&= \max_{\rho} \sum_{i} \mathrm{Tr}[K_{i}^\dagger K_i\rho]\log \rank(K_{i}) = D_M(\mathcal{E}_1).
\end{align*}
The above calculation proves our claim for post-processing. Following similar steps, but eliminating $K_i$ from the rank term on the second line and noting that the maximisation on the third line becomes a maximisation over the subset of states that are given by the outputs of the channel $\mathcal{E}_2$, i.e. over the states $\sum_i K_i\rho K_i^\dagger$, shows the claim for pre-processing.
\item Entanglement breaking channels have dimension measure zero.\\
It is already known \cite{horodecki2003entanglement} that a finite dimensional channel is entanglement breaking iff it has a Kraus representation with operators of rank 1. Thus $\log \rank K_i = 0,\; \forall i$ implying $D_M(\mathcal{E})=0$.\\
\item Isometric channels have dimension $\log d_1$, where $d_1$ is the dimension of the input space.\\
Isometric channels take the form $\mathcal{E}(\rho) = V\rho V^\dagger$, where $V$ is a $d_2\otimes d_1$ isometric operator. Thus, there is a single Kraus operator, whose rank is the smallest of its two dimensions. However, since $V^\dagger V = \openone_{d_1}$, its rank must be at least $d_1$.\\
\end{itemize}
\begin{theorem}
$D_M(\mathcal{E}) \geq \sup_{\ket{\phi}} S_M[id\otimes \mathcal{E}(\ket{\phi}\bra{\phi})]$.
\end{theorem}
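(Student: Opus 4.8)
The plan is to exhibit, for each fixed input state $\ket{\phi}$, an explicit pure-state decomposition of the output $id\otimes\mathcal{E}(\ket{\phi}\bra{\phi})$ that is induced by a Kraus representation of $\mathcal{E}$, and to show that its Schmidt value never exceeds the dimension-measure objective. Since $S_M$ is a \emph{minimisation} over all pure-state decompositions, any such feasible decomposition supplies the desired upper bound on $S_M$, and the only care needed is in the order in which the optimisations are fixed.

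First I would fix the Kraus representation $\{K_i\}$ that attains the minimum in $D_M(\mathcal{E})$, and for an arbitrary bipartite pure state $\ket{\phi}$ write
\begin{equation*}
id\otimes\mathcal{E}(\ket{\phi}\bra{\phi})=\sum_i (\openone\otimes K_i)\ket{\phi}\bra{\phi}(\openone\otimes K_i)^\dagger .
\end{equation*}
Setting $p_i:=\|(\openone\otimes K_i)\ket{\phi}\|^2$ and $\ket{\psi_i}:=(\openone\otimes K_i)\ket{\phi}/\sqrt{p_i}$, and discarding indices with $p_i=0$, gives a genuine convex decomposition of the output into normalised pure states. A one-line computation with $\rho_\phi:=\mathrm{Tr}_A\ket{\phi}\bra{\phi}$ identifies the weights as $p_i=\bra{\phi}(\openone\otimes K_i^\dagger K_i)\ket{\phi}=\mathrm{Tr}(K_i^\dagger K_i\rho_\phi)$, precisely the probabilities appearing in the definition of $D_M$.

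The key step is the Schmidt-rank bound $S_R(\ket{\psi_i})\leq \mathrm{rank}(K_i)$. This holds because $(\openone\otimes K_i)\ket{\phi}$ lies in $\mathcal{H}_A\otimes\mathrm{Im}(K_i)$, so its reduced state on the second factor is supported on $\mathrm{Im}(K_i)$, whose dimension is $\mathrm{rank}(K_i)$; hence the Schmidt rank across the cut is at most $\mathrm{rank}(K_i)$. Because $S_M$ is the minimum over all pure-state decompositions, the decomposition constructed above is a feasible point, and therefore
\begin{align*}
S_M[id\otimes\mathcal{E}(\ket{\phi}\bra{\phi})]
&\leq \sum_i p_i\log S_R(\ket{\psi_i})
\leq \sum_i \mathrm{Tr}(K_i^\dagger K_i\rho_\phi)\log\mathrm{rank}(K_i)\\
&\leq \max_\rho \sum_i \mathrm{Tr}(K_i^\dagger K_i\rho)\log\mathrm{rank}(K_i)
= D_M(\mathcal{E}),
\end{align*}
where the last equality uses that $\{K_i\}$ was chosen to be optimal. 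Since this bound is uniform in $\ket{\phi}$, taking the supremum over $\ket{\phi}$ yields the claim.

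I expect no genuine obstacle: the whole content is that fixing the Kraus operators first and the input state second produces a feasible (generally suboptimal) decomposition for $S_M$. The reason this argument cannot be strengthened to an equality is exactly the order of the optimisations, as flagged in the main text: $D_M$ carries $\min_{K}\max_\rho$, whereas $\sup_{\ket{\phi}} S_M$ effectively carries $\max_{\ket{\phi}}\min_{\text{decomp}}$, and without a minimax exchange the two need not agree. The only point requiring mild care is the treatment of vanishing weights $p_i=0$, which contribute nothing to either side and may simply be dropped.
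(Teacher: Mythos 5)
Your proof is correct, but it takes a genuinely shorter route than the paper's. You only use the ``easy'' direction of the correspondence between Kraus representations and pure-state decompositions of the output: a fixed (optimal) Kraus family $\{K_i\}$ induces the feasible decomposition $\{p_i,\ket{\psi_i}\}$ with $p_i=\mathrm{Tr}(K_i^\dagger K_i\rho_\phi)$, and the support argument $S_R(\ket{\psi_i})\leq\mathrm{rank}(K_i)$ (because $(\openone\otimes K_i)\ket{\phi}\in\mathcal{H}_A\otimes\mathrm{Im}(K_i)$) immediately gives $S_M[id\otimes\mathcal{E}(\ket{\phi}\bra{\phi})]\leq\sum_i p_i\log\mathrm{rank}(K_i)\leq D_M(\mathcal{E})$, uniformly in $\ket{\phi}$. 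The paper instead proves the stronger intermediate identity
\begin{equation*}
\sup_{\ket{\phi}} S_M[id\otimes \mathcal{E}(\ket{\phi}\bra{\phi})] = \sup_{\rho}\min_{K_i}\sum_i \mathrm{Tr}(K_i^\dagger K_i\rho)\log\mathrm{rank}(K_i^\dagger K_i\rho),
\end{equation*}
which requires both the forward construction you use and the converse (building Kraus operators $K^k_{ji}=\alpha^k_{i,j}/\lambda_i$ from an arbitrary pure-state decomposition of the output, completed on the kernel of $\rho_B$), together with a perturbation argument showing that inserting $\rho$ inside the rank does not change the value of $D_M$; the theorem then follows from the max--min inequality. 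What the paper's longer route buys is the precise localisation of the potential gap: it shows the inequality is \emph{exactly} a minimax exchange, which substantiates the main-text remark that a suitable minimax theorem would upgrade it to an equality. Your remark to that effect is consistent with this but is not established by your argument alone. Two minor points of hygiene: you should either invoke the achievability of the minimum in $D_M$ (proved in the appendix on finiteness) or run the argument with an $\epsilon$-optimal Kraus family and let $\epsilon\to 0$; and the discarding of indices with $p_i=0$ is indeed harmless, as you note.
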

To prove this result, we will use that $\rank(M)=\rank(M^\dagger M)$ - to see this, write $M$ in its singular values decomposition. Thus we can equivalently write $D_M(\mathcal{E})=\min_{K_i} \max_{\rho} \mathrm{Tr}(K_i^\dagger K_i \rho)\log\rank(K_i^{\dagger}K_i)$.\\

Next, we want to again rewrite an equivalent value for $D_M$. we claim that:
\begin{equation}\label{eq:twodefinitions}
D_M(\mathcal{E}) = \min_{K_i} \sup_{\rho} \mathrm{Tr}(K_i^\dagger K_i \rho)\log\rank(K_i^{\dagger}K_i\rho).
\end{equation}
$\geq:$ We note that, for a fixed set of $K_i$, $\sup_{\rho} \mathrm{Tr}(K_i^\dagger K_i \rho)\log\rank(K_i^{\dagger}K_i\rho) \leq \sup_{\rho} \mathrm{Tr}(K_i^\dagger K_i \rho)\log\rank(K_i^{\dagger}K_i)$. However, this is just a linear function of $\rho$, implying $\sup_{\rho} \mathrm{Tr}(K_i^\dagger K_i \rho)\log\rank(K_i^{\dagger}K_i)=\max_{\rho} \mathrm{Tr}(K_i^\dagger K_i \rho)\log\rank(K_i^{\dagger}K_i)$. Thus one way is proven.\\
$\leq:$ Again, let us fix $K_i$, and take $\hat{\rho}$ as the maximum achieving density matrix. For these two quantities not to be equal, we must have for at least one $i$ such that $\rank(K_i^\dagger K_i \hat{\rho})<\rank(K_i^\dagger K_i)$. We will denote $\mathcal{V}_i$ as the subspace of the support of $K_i^\dagger K_i$ orthogonal to the support of $\hat{\rho}$. Let us define a new state, 
\begin{equation}\label{SupDef}
\hat{\rho}^{\epsilon}:=(1-\epsilon)\hat{\rho} + \frac{\epsilon}{|\mathcal{I}|}\sum_i \openone_{\mathcal{V}^{i}}/\mathrm{Tr}(\openone_{\mathcal{V}^{i}}).
\end{equation}
This state is well defined, as we know the optimal set of Kraus operators is finite.\\
Note that, for any $\epsilon >0$, $\mathrm{rank}(K_i^\dagger K_i \hat{\rho}^{\epsilon})=\mathrm{rank}(K_i^\dagger K_i)$. In particular, this means that:
\begin{align*}
\sup_{\rho} \mathrm{Tr}(K_i^\dagger K_i \rho)\log\rank(K_i^{\dagger}K_i\rho)
&\geq \lim_{\epsilon\rightarrow 0} \mathrm{Tr}(K_i^\dagger K_i \hat{\rho}^\epsilon)\log\rank(K_i^{\dagger}K_i\hat{\rho}^\epsilon) \\
&= \lim_{\epsilon\rightarrow 0} \mathrm{Tr}(K_i^\dagger K_i \hat{\rho}^\epsilon)\log\rank(K_i^{\dagger}K_i)\\ &=\mathrm{Tr}(K_i^\dagger K_i \hat{\rho})\log\rank(K_i^{\dagger}K_i) 
\\&= D_M(\mathcal{E}).
\end{align*}
Thus this equivalence of definition in Equation (\ref{eq:twodefinitions}) is shown.\\

We now use a result \cite{WNotes} which tells us any pure state can be expressed as $\ket{\phi}=\sqrt{d\,\rho_{A}}V\otimes \openone\ket{\Phi^+_d}$, where $\rho_{A}$ is the reduced density matrix of $\ket{\phi}$ - we will associate its counterpart $\rho_{B}$ with the maximised state $\rho$. $V$ is a unitary to convert between the chosen basis of $\ket{\Phi^+_d}$, and the required Schmidt decomposition bases. Using this, we will show that a choice of Kraus operators corresponds to a decomposition choice of $id\otimes\mathcal{E}(\ket{\phi}\bra{\phi})$. We can do this via explicit calculation:
\begin{align*}
id\otimes\mathcal{E}(\ket{\phi}\bra{\phi})&=id\otimes\mathcal{E}(\sqrt{d\,\rho_{A}}V\otimes \openone\ket{\Phi^+_d}\bra{\Phi_d^+}V^{\dagger}\sqrt{d\rho_{A}}\,\otimes \openone),\\
&=\left(\sqrt{d\,\rho_A} \otimes \openone\right)\left(\sum_{k,a,b,i,j,r,s} \openone\otimes K^{k}_{ab}\ket{f_a}\bra{f_b} \left(\ket{e_i,f_i}\bra{e_j,f_j}/d\right)K^{k \dagger}_{rs}\ket{f_r}\bra{f_s}\right)\left(\sqrt{d\,\rho_A} \otimes \openone\right),\\
&=\left(\sqrt{\rho_A} \otimes \openone\right)\left(\sum_{k,i,a,j,s} K^{k}_{ai}\ket{e_i,f_a}\bra{e_j,f_s}K^{k \dagger}_{js}\right)\left(\sqrt{\rho_A} \otimes \openone\right),\\
&= \sum_{k}\left(\sum_{i,a}\lambda_{i}K^{k}_{ai}\ket{e_i,f_a}\right)\left(\sum_{j,s}\bra{e_j,f_s}\lambda_{j}K^{k \dagger}_{js}\right).
\end{align*}
We see that our choice of Kraus operators has lead to a specific decomposition of our output state into \emph{subnormalised} pure states.\\
If $\rho=\sum_{k}\ket{\phi^{k}}\bra{\phi^{k}}$ is a decomposition into subnormalised pure states, then we may write:
\begin{equation}
E_{F}(\rho)=\inf\sum_{k}\braket{\phi^{k}}{\phi^{k}}S_{R}(\ket{\phi^k}).
\end{equation}
For the explicit decomposition obtained above, we find the right-hand side is equal to:\\
\begin{align*}
&\sum_{k} (\sum_{i,a} K^{k \dagger}_{ia}K^{k}_{ai}\lambda_i^2)\,\mathrm{rank}\left(K^{k  \dagger}_{js}K^{k}_{si}\lambda_{i}\lambda_{j}\ket{e_i}\bra{e_j}\right)\\
\equiv&\sum_{k} (\sum_{i,a} K^{k \dagger}_{ia}K^{k}_{ai}\lambda_i^2)\,\mathrm{rank}\left(K^{k  \dagger}_{js}K^{k}_{si}\lambda_{i}\lambda_{j}\ket{f_i}\bra{f_j}\right)\\
=&\sum_{k} \mathrm{Tr}(K^{k \dagger} K^{k} \rho_B)\,\mathrm{rank}\left(\sqrt{\rho_{B}}K^{k \dagger}K^{k}\sqrt{\rho_{B}}\right)\\
=&\sum_{k} \mathrm{Tr}(K^{k \dagger} K^{k} \rho_B)\,\mathrm{rank}\left(K^{k \dagger}K^{k}\rho_{B}\right)
\end{align*}
Where in the last step we have used that for products of square matrices, the eigenvalues (and thus the rank) are invariant under cyclic permutations \cite{HJ2013}.\\

Let us now choose a specific decomposition of $id\otimes\mathcal{E}(\ket{\phi}\bra{\phi})$. We can write this as:
\begin{equation}
id\otimes\mathcal{E}(\ket{\phi}\bra{\phi})=\sum_{k} \left(\sum_{i,j}\alpha^{k}_{i,j}\ket{e_i,f_j}\right)\left(\sum_{r,s}\bra{e_r,f_s}\alpha^{k *}_{r,s}\right)
\end{equation}
where we again use subnormalised states for convenience. Note that we require the partial trace of this state to be $\rho_{A}$, thus implying: $\sum_{k} \sum_{j}\alpha^{k}_{i,j}\alpha^{k *}_{r,j} = \delta_{i,r}\lambda_{i}^2$. Note that if $\lambda_{i}=0$, this implies $\bs{\alpha}=0$.\\

Equating the two decompositions, we see that so long as the Kraus operators defined by $K^{k}_{ji}=\alpha^{k}_{i,j}/\lambda_{i}$ ($\lambda_i\neq 0$) are valid, then we can reach any pure state decomposition. Checking, we see that
\begin{align*}
\sum_{k,j} K^{k \dagger }_{rj}K^{k}_{ji} = \sum_{k,j} \alpha^{k}_{i,j}\alpha^{*}_{r,j}/\lambda_{i}\lambda_{r} = \lambda_{i}^2\delta_{ir}/\lambda_i\lambda_r = 1.
\end{align*}
This proves that, on the support of $\rho_B$, the Kraus operators are the identity. To make this a proper channel, we append to these the operators $\{\ket{f_i}\bra{f_i}\}$ which span the kernel of $\rho_{B}$ (when $\lambda_i=0$). This gives us a proper channel, whilst not altering the quantity $\sum_{k} \mathrm{Tr}(K^{k \dagger} K^k \rho_B)\,\mathrm{rank}\left(K^{k \dagger}K^{k}\rho_{B}\right)$.\\

This equivalence allows us to write:
\begin{equation}
\sup_{\phi} S_M[id\otimes \mathcal{E}(\ket{\phi}\bra{\phi})] =  \sup_{\rho} \min_{K_i} \mathrm{Tr}(K_i^\dagger K_i \rho)\log\rank(K_i^{\dagger}K_i\rho).
\end{equation}
Swapping the order of the minimum and supremum can only increase the right hand side (this is the max–min inequality); but doing so gives us exactly $D_{M}(\mathcal{E})$. Thus we have proven our inequality.

\section{Properties of the Measurement Dimension Measure}\label{PropMeasDim}

\begin{itemize}
\item Convexity.\\
For two sets of measurements $\bs{M}=\{M_{a|x}\}, \bs{N}=\{N_{a|x}\}$, let us define $\bs{Q}=\{Q_{a|x}=pM_{a|x}+(1-p)N_{a|x}\}$. If the optimal channel/measurements of $\bs{M}$ are given by $\{K_{i},R_{a|x}\}$ and of $\bs{N}$ by $\{L_{j},S_{a|x}\}$, then we have an explicit realisation of $\bs{Q}$ via $\{\sqrt{p}K_{i}\otimes \ket{0},\sqrt{1-p}L_j\otimes \ket{1}\}$ and $\{R_{a|x}\otimes \ket{0}\bra{0},S_{a|x}\otimes \ket{1}\bra{1}\}$.This tells us that:
\begin{align*}
D_M(\bs{Q}) &\leq \max_{\rho} \sum_{i} \mathrm{Tr}[pK_{i}^{\dagger}K_i \rho]\log \rank (K_{i}\otimes \ket{0}) + \sum_{j} \mathrm{Tr}[(1-p)L_{j}^{\dagger}L_j \rho]\log \rank (K_{j}\otimes \ket{1})\\
&\leq p\max_{\rho_1} \sum_{i} \mathrm{Tr}[K_{i}^{\dagger}K_i \rho_1]\log \rank (K_{i}) + (1-p)\max_{\rho_2} \sum_{j} \mathrm{Tr}[L_{j}^{\dagger}L_j \rho_2]\log \rank (L_{j}) \\
&= p D_M(\bs{M})+ (1-p) D_M(\bs{N}).
\end{align*}
\item It is subadditive over tensor product.\\
Defining $Q_{a_1,a_2|x_1,x_2}=M_{a_1|x_1}\otimes M_{a_2|x_2}$, we again take $\{K_{i},R_{a|x}\}$ optimal for $\bs{M}$, $\{L_{j},S_{a|x}\}$ optimal for $\bs{N}$. Then we have an explicit realisation of $\{Q_{a_1,a_2|x_1,x_2}\}$ via $\{K_{i}\otimes L_{j}\}$, $\{R_{a_1|x_1}\otimes S_{a_2|x_2}\}$. Subadditivity directly follows from our proof of subadditivity for channels above.\\
\item It is non-increasing under classical post-processing by linearity of channels.
For the quantum pre-processing, we take a pre-processing channel $\mathcal{E}_2$ and define $Q_{a|x}=\mathcal{E}_2^*(M_{a|x})$. Now, assuming that $M_{a|x}=\mathcal{E}_1^*(N_{a|x})$, then $Q_{a|x}=\mathcal{E}_2^*\circ\mathcal{E}_1^*(N_{a|x})$. The proof then follows directly from the result for channels above.\\
\item Jointly measurable measurements have compression measure 0.\\
For this proof, it is more convenient to use the second formulism via the pseudo-measurements $\bs{A}^{i}$. We know a set of measurements is jointly measureable iff they have a decomposition
$M_{a|x}=\sum_{\lambda(x)=a} G^{\lambda}$, where $G^{\lambda}$ are positive operators. Define $G^{k,\lambda}$ as the $k$\textsuperscript{th} eigenprojector of $G^{\lambda}$ multiplied by the corresponding eigenvalue. Then we can decompose
\begin{equation}
M_{a|x} = \sum_{\lambda,k} G^{k,\lambda}_{a|x}
\end{equation}
where $G^{k,\lambda}_{a|x}:=\delta_{a,\lambda(x)}G^{k,\lambda}$. Moreover, $\sum_{a} G^{k,\lambda}_{a|x} = G^{k,\lambda}$ which is rank 1, implying $\log \rank G^{k,\lambda} =0$ for all elements of this decomposition. This proves the result. 
\end{itemize}

\begin{lemma}
If the set of unitaries $\{U_j\}_{j=1}^N$ induces a relabelling of the measurement operators $U^{\dagger}_{j}M_{a|x}U_{j} = M_{\pi_{j,x}(a)|\pi_j(x)}$, then the maximum for the dimension measure is achieved a state of the form $\rho=1/d\sum_j U_{j}\rho U_{j}^{\dagger}$.
\end{lemma}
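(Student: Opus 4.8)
The plan is to exploit the group-like structure of $\{U_j\}$ by \emph{symmetrising the decomposition} (the pseudo-measurements) rather than the state directly, and then transferring that symmetrisation onto the state through the trace. Throughout I would work with the pseudo-measurement form of the measure, writing $f(\bs{A},\rho):=\sum_i \mathrm{Tr}(A^i\rho)\log\mathrm{rank}(A^i)$ for a valid decomposition $\bs{A}=\{\bs{A}^i\}$ obeying $\sum_i A^i_{a|x}=M_{a|x}$, so that $D_M(\bs{M})=\min_{\bs{A}}\max_\rho f(\bs{A},\rho)$, with both the minimum and the maximum attained by the achievability result of Appendix~\ref{DimMeasMeas}.

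First I would fix an optimal decomposition $\bs{A}$ and, for each $j$, build the relabelled--conjugated copy $A'^{i}_{(j),a|x}:=U_j A^{i}_{\pi_{j,x}(a)|\pi_j(x)}U_j^\dagger$. Using the symmetry relation in the form $M_{a|x}=U_j M_{\pi_{j,x}(a)|\pi_j(x)}U_j^\dagger$, one checks that $\sum_i A'^{i}_{(j),a|x}=M_{a|x}$, so each $\bs{A}'_{(j)}$ is again a valid decomposition of $\bs{M}$; since $\pi_{j,x}$ is a permutation the marginal collapses to $A'^{i}_{(j)}=U_j A^{i}U_j^\dagger$, whose rank equals $\mathrm{rank}(A^i)$. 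Averaging these into the single family $\bar{\bs{A}}:=\{\tfrac1N A'^{i}_{(j)}\}_{i,j}$ (valid because $\tfrac1N\sum_j M_{a|x}=M_{a|x}$, and rank-preserving since scaling does not change rank) yields the central identity
\begin{equation}
f(\bar{\bs{A}},\rho)=\sum_i \mathrm{Tr}\left(A^i \bar\rho\right)\log\mathrm{rank}(A^i)=f(\bs{A},\bar\rho),\qquad \bar\rho:=\frac{1}{N}\sum_j U_j^\dagger\rho U_j,
\end{equation}
obtained by moving each $U_j$ across the trace. In words: averaging the decomposition over the group is the same as averaging the state over the group.

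Two consequences then follow. On one hand, since $\bar\rho$ is a particular state, $\max_\rho f(\bar{\bs{A}},\rho)=\max_\rho f(\bs{A},\bar\rho)\le \max_{\rho'} f(\bs{A},\rho')=D_M(\bs{M})$, and as $\bar{\bs{A}}$ is a valid decomposition optimality forces equality, so the symmetrised decomposition is itself optimal. On the other hand, $\bar{\bs{A}}$ is invariant under the combined action (conjugate by $U_k$, relabel) up to a permutation of its index set, which gives $f(\bar{\bs{A}},\rho)=f(\bar{\bs{A}},U_k^\dagger\rho U_k)$ for every $k$ and hence $f(\bar{\bs{A}},\rho)=f(\bar{\bs{A}},\bar\rho)$: the objective depends on $\rho$ only through its twirl $\bar\rho$. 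Therefore the inner maximum for $\bar{\bs{A}}$ is attained at the symmetric state $\bar\rho$, which is exactly of the claimed averaged form (identifying $\{U_j\}$ with $\{U_j^\dagger\}$ as a set, as for the Heisenberg--Weyl operators, rewrites $\bar\rho$ as $\tfrac1N\sum_j U_j\rho'U_j^\dagger$). This shows an optimal $\rho$ may always be taken in the symmetric subspace.

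The main obstacle is not any single estimate but the index bookkeeping in the first step: one has to verify that conjugating-and-relabelling maps a decomposition of $\bs{M}$ to another decomposition of $\bs{M}$ (not of a permuted copy), which is precisely where the placement of the permutations $\pi_j,\pi_{j,x}$ and the direction of the symmetry relation matter. A secondary point to check is finiteness: the averaged family $\bar{\bs{A}}$ has $N$ times as many terms as $\bs{A}$, but this is harmless since Appendix~\ref{DimMeasMeas} guarantees finite optimal decompositions and the extra (possibly linearly dependent) terms can be pruned. Everything else---rank invariance under unitary conjugation and under scaling, linearity of $\mathrm{Tr}(A^i\,\cdot\,)$, and the elementary max--min sandwich---is routine.
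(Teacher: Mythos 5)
Your proof is correct and follows essentially the same route as the paper's: the paper symmetrises the Kraus operators and readout POVMs (via $\{K_iU_j/\sqrt{N}\otimes\ket{j}\}$ and relabelled $\bs{N}'$), whereas you symmetrise the equivalent pseudo-measurement decomposition, but the central step --- moving the unitaries across the trace so that averaging the decomposition becomes twirling the state --- is identical. The only implicit assumption (shared with the paper, and needed for your final invariance step $f(\bar{\bs{A}},\rho)=f(\bar{\bs{A}},U_k^\dagger\rho U_k)$) is that $\{U_j\}$ is closed under composition up to phases, which holds in the intended applications such as $\{X^iZ^j\}$.
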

Suppose we have a set of measurements $\bs{N}$ and a channel $\mathcal{E}$, whose Kraus operators are $K_i$, such that $\mathcal{E}^*(N_{a|x})=M_{a|x}$. We define a new set of measurements and a new set of measurements via the Kraus operators $\{1/\sqrt{N} K_iU_j\otimes \ket{j}\}$, and measurements $N'_{a|x}:=\sum_{j} N_{\pi^{-1}_{j,x}(a)|\pi^{-1}_j(x)}\otimes \ket{j}\bra{j}$. We see that 
\begin{equation}
\sum_{i,j} \left(1/\sqrt{N}U_j^{\dagger}K_i^\dagger\otimes \bra{j}\right) \left(1/\sqrt{N}K_i U_j \otimes \ket{j}\right) = 1/N\sum_{j} U_{j}^{\dagger}\left(\sum_{i}K_i^{\dagger} K_i\right) U_{j} = 1/N \sum_{j} \mathbb{I} = \mathbb{I},
\end{equation}
and 
\begin{equation}
\sum_{a} N'_{a|x} = \sum_{a} \sum_{j} N_{\pi^{-1}_{j,x}(a)|\pi^{-1}_j(x)}\otimes \ket{j}\bra{j} = \sum_{j} \sum_{a'} N_{a'|\pi^{-1}_j(x)} \otimes \ket{j}\bra{j} = \mathbb{I} \otimes \sum_{j} \ket{j}\bra{j},
\end{equation}
which is the identity on our new, larger space. This means we have chosen a valid state and measurement. Furthermore,
\begin{align*}
\sum_{j,i} \left(1/\sqrt{N}U_j^{\dagger}K_i^\dagger\otimes \bra{j}\right) N'_{a|x} \left(1/\sqrt{N}K_i U_j \otimes \ket{j}\right) &= 
1/N \sum_{j,i}  U_j^{\dagger}K_i^\dagger \otimes \bra{j} \left(\sum_{j'} N_{\pi^{-1}_{j',x}(a)|\pi^{-1}_{j'}(x)}\otimes \ket{j'}\bra{j'}\right) K_i U_j \otimes \ket{j}\\
&= 
1/N \sum_{j,i}  U_j^{\dagger}\left(\sum_i K_i^\dagger N_{\pi^{-1}_{j',x}(a)|\pi^{-1}_{j'}(x)} K_i \right) U_j\\
&= 1/N \sum_{j} U^{\dagger}_j M_{\pi^{-1}_{j',x}(a)|\pi^{-1}_{j'}(x)}U_j\\
&= 1/N \sum_{j} M_{a|x} \\
&= M_{a|x}.
\end{align*}
This means, given any candidate channel and measurement set in our optimisation, we can construct a new channel and measurements with the above form. Let us look at how the dimension measure is changed with this new construction. We have that $\mathrm{rank}\left(U_jK_{i}\otimes \ket{j}\right)= \mathrm{rank}\, K_i$, and $1/N\sum_{j}\mathrm{Tr}\left[U_{j}^\dagger K_i^{\dagger}K_i U_{j} \rho\right] = \mathrm{Tr}\left[K_i^{\dagger}K_i U_{j} \left(1/N\sum_{j}U_{j}\rho U_j^{\dagger}\right)\right]$. As a consequence of this, we see that if we replace every candidate channel and measurement via this construction, we obtain almost the exact same optimisation: except that the state is now constrained to be of the form $1/N\sum_{j}U_{j}\rho U_j^{\dagger}$ - since this restricts our maximum, it can only improve our optimal value. Thus, the optimal is achieved by such a state; and the result is shown.\\
\begin{lemma}
If the set $\{X^iZ^j\}_{i,j=0}^{d-1}$, where $Z\ket{k}=\exp{\frac{2\pi i k}{d}}\ket{k}$ and $X\ket{k}=\ket{k+1}$ induce a relabelling of the measurement operators $U^{\dagger}_{j}M_{a|x}U_{j} = M_{\pi_{j,x}(a)|\pi_j(x)}$, then the maximum for the dimension measure is achieved by the state $\rho=\mathbb{I}/d$.
\end{lemma}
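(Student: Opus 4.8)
The plan is to reduce this to the preceding lemma and then evaluate an explicit group average. The previous lemma already tells us that whenever a set of unitaries relabels the measurement operators, the maximisation over input states may be restricted to the symmetrised states $\frac{1}{N}\sum_j U_j \rho' U_j^\dagger$. Here the relevant set is the Heisenberg--Weyl group $\{X^iZ^j\}_{i,j=0}^{d-1}$, so $N=d^2$ and the candidate maximisers are exactly the twirls $\frac{1}{d^2}\sum_{i,j} X^iZ^j \rho'(X^iZ^j)^\dagger$. It therefore suffices to show that this twirl equals $\openone/d$ for \emph{every} $\rho'$, since then the only admissible maximiser is the maximally mixed state.

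First I would establish the Weyl commutation relation. From $Z\ket{k}=\omega^k\ket{k}$ and $X\ket{k}=\ket{k+1}$ with $\omega=e^{2\pi i/d}$, one reads off $ZX=\omega XZ$, from which $Z^jX^aZ^{-j}=\omega^{ja}X^a$ and $X^iZ^bX^{-i}=\omega^{-ib}Z^b$, and hence the conjugation rule
\begin{equation}
X^iZ^j\, X^aZ^b\, (X^iZ^j)^\dagger = \omega^{\,ja-ib}\, X^aZ^b .
\end{equation}
Next I would expand an arbitrary input in the operator basis $\{X^aZ^b\}$, writing $\rho'=\frac{1}{d}\sum_{a,b}c_{ab}X^aZ^b$ with coefficients $c_{ab}=\Tr[(X^aZ^b)^\dagger\rho']$, so in particular $c_{00}=\Tr(\rho')=1$. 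Averaging the conjugation rule over $i$ and $j$ and invoking the character orthogonality $\frac{1}{d}\sum_j\omega^{ja}=\delta_{a,0}$ annihilates every non-identity term, so that $\frac{1}{d^2}\sum_{i,j}\omega^{ja-ib}=\delta_{a,0}\delta_{b,0}$ and therefore $\frac{1}{d^2}\sum_{i,j}X^iZ^j\rho'(X^iZ^j)^\dagger=\frac{c_{00}}{d}\openone=\openone/d$, as required.

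The argument is short, and the only place demanding care is the bookkeeping of the phases $\omega^{\,ja-ib}$ in the commutation relation: a misplaced sign in the exponent would change which Pauli components survive the average, so I would double-check the two intermediate identities against the basis action before combining them. Everything else is a direct combination of the preceding lemma with the discrete Fourier orthogonality relation $\sum_j\omega^{ja}=d\,\delta_{a,0}$, so I expect no genuine obstacle beyond this computation.
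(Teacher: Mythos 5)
Your proposal is correct and follows essentially the same route as the paper: invoke the preceding symmetrisation lemma, expand $\rho'$ in the Heisenberg--Weyl operator basis, and use the character orthogonality $\sum_j \omega^{ja}=d\,\delta_{a,0}$ to kill every non-identity component of the twirl. The only cosmetic difference is the direction of conjugation (the paper averages $Z^{-j}X^{-i}\rho\,X^iZ^j$, flipping the sign of the phase exponent), which is immaterial since the group is closed under inverses and the orthogonality relation annihilates all non-trivial terms either way.
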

We have already seen the optimal will be achieved by a state of the form $1/d^2\sum_{i,j}Z^{-j}X^{-i}\rho X^iZ^j$. Let us express $\rho$ in the basis $1/d\sum_{k,l} c_{k,l}X^kZ^l$, where $c_{0,0}=1$. One can verify that $Z^{-j}X^{-i}X^kZ^lX^iZ^j =\omega^{il-jk}X^kZ^l$, where $\omega=e^{2\pi i/d}$. Thus we have that 
\begin{equation}
1/d^2\sum_{i,j}Z^{-j}X^{-i}\rho X^iZ^j = 1/d^{3}\sum_{k,l,i,j} c_{k,l}\omega^{il-jk}X^kZ^l = 1/d^{3}\sum_{k,l} c_{k,l}\sum_{i}\omega^{il}\sum_{j}\omega^{-jk}X^kZ^l.
\end{equation}
However, $\sum_{i}\omega^{il}=d\delta_{l0}$ and $\sum_{j}\omega^{-jk}=d\delta_{k0}$, meaning all terms in this sum vanish except for $c_{0,0}=1$. Thus we may conclude that $1/d^2\sum_{i,j}Z^{-j}X^{-i}\rho X^iZ^j=\mathbb{I}/d$.\\

\section{Properties of the Steering Entanglement of Formation}\label{SEFA}
In this section we prove the continuity of the steering entanglement of formation on the set of extremal assemblages. Note that, in general, it is not continuous; see \cite{C2021} for a counterexample.
\begin{lemma}
The steering entanglement of formation for assemblages ($E_{\mrm{FA}}$) is continuous on extremal assemblages.\\
\end{lemma}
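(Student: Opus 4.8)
The plan is to reduce the entanglement of formation to a single familiar continuous function on the extreme points, and then lean on the continuity of the von Neumann entropy. The crucial observation is that an extremal assemblage admits only the trivial convex decomposition: if $\bs{\sigma}$ is extremal and $\bs{\sigma}=\sum_i p_i \bs{\tau}^i$ with each $\bs{\tau}^i$ a valid (sub-normalised) assemblage and every $p_i>0$, then necessarily $\bs{\tau}^i=\bs{\sigma}$ for all $i$. Substituting this into the definition $E_{\mrm{FA}}(\bs{\sigma})=\inf_{p_i,\bs{\tau}^i}\sum_i p_i S(\rho_{\bs{\tau}^i})$ collapses every admissible decomposition to the single value $S(\rho_{\bs{\sigma}})$. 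Hence on the set of extremal assemblages one has the clean identity $E_{\mrm{FA}}(\bs{\sigma})=S(\rho_{\bs{\sigma}})$, and the lemma reduces to showing that the map $\bs{\sigma}\mapsto S(\rho_{\bs{\sigma}})$ is continuous (on the whole, finite-dimensional compact convex set of assemblages, whence in particular on the extremal subset with its subspace topology).

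First I would check that the marginal map $\bs{\sigma}\mapsto \rho_{\bs{\sigma}}=\sum_a \sigma_{a|x}$ is Lipschitz with respect to the assemblage trace norm $\|\bs{\sigma}-\bs{\sigma}'\|_1=\sup_{P_X}\sum_{a,x}p(x)\|\sigma_{a|x}-\sigma'_{a|x}\|_1$ adopted in the main text. Fixing a single input $x$ and using the triangle inequality gives $\|\rho_{\bs{\sigma}}-\rho_{\bs{\sigma}'}\|_1\le \sum_a \|\sigma_{a|x}-\sigma'_{a|x}\|_1$, and the right-hand side is bounded by $\|\bs{\sigma}-\bs{\sigma}'\|_1$ upon choosing $P_X$ the point mass at $x$. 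Second, in fixed finite dimension $d$ the von Neumann entropy is uniformly continuous: the Fannes--Audenaert inequality supplies an explicit modulus, $|S(\rho)-S(\rho')|\le T\log(d-1)+h(T)$ with $T=\tfrac12\|\rho-\rho'\|_1$ and $h$ the binary entropy. Composing these two estimates yields a single explicit continuity bound for $S(\rho_{\bs{\sigma}})$, and restricting this continuous function to the extremal assemblages preserves continuity.

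The one step that genuinely needs care is the reduction itself: one must confirm that the objects over which $E_{\mrm{FA}}$ optimises are bona fide sub-normalised assemblages summing to $\bs{\sigma}$ and sharing its marginal, so that extremality in the convex set of assemblages is exactly the right notion to annihilate every nontrivial splitting. This is immediate once one notes the no-signalling constraint $\sum_a \tau^i_{a|x}=\rho_{\bs{\tau}^i}$ is independent of $x$, so the decomposition genuinely lives in the assemblage polytope. I do not expect a real obstacle here; the only conceptual point worth flagging is the contrast with the general case, where $E_{\mrm{FA}}$ is merely the convex roof of $S(\rho_{\bs{\sigma}})$ and can fail to be continuous (cf.~\cite{C2021}). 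On extreme points the roof structure is trivial, so the entropy's continuity is inherited directly, which is precisely why the restriction behaves well and the finite-achievability result of \cite{U2010} applies.
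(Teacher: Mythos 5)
Your proposal is correct and follows essentially the same route as the paper: identify $E_{\mrm{FA}}$ with $S(\rho_{\bs{\sigma}})$ on extreme points, bound $\|\rho_{\bs{\sigma}}-\rho_{\bs{\sigma}'}\|_1$ by the assemblage trace norm, and conclude via the Fannes(--Audenaert) inequality. The only difference is that you spell out why extremality forces every admissible decomposition to be trivial, a step the paper asserts without proof.
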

\begin{proof}
On extremal assemblages, $E_{\mrm{FA}}$ coincides with $S(\rho_{\bs{\sigma}})$. Suppose we have two assemblages which are $\epsilon$ close. We are using the norm:
\begin{equation}
\|\bs{\sigma}-\bs{\tau}\| = \frac{1}{|\mathcal{X}|}\sum_{a,x} \|\sigma_{a|x}-\tau_{a|x}\|_{1},
\end{equation}
where $\|\cdot\|_1$ is the trace distance.\\
We can now use the chain of inequalities:
\begin{align*}
\epsilon \geq \|\bs{\sigma}-\bs{\tau}\| &= \frac{1}{|\mathcal{X}|}\sum_{a,x} \|\sigma_{a|x}-\tau_{a|x}\|_{1}\\
&=\frac{1}{|\mathcal{X}|}\sum_{x} \left(\sum_{a} \|\sigma_{a|x}-\tau_{a|x}\|_{1}\right) \\
&\geq \frac{1}{|\mathcal{X}|}\sum_{x} \left(\| \sum_{a} \sigma_{a|x}-\tau_{a|x}\|_{1}\right)\\
&= \frac{1}{|\mathcal{X}|}\sum_{x} \|\rho_{\bs{\sigma}}-\rho_{\bs{\tau}}\|_1 \\
&=\|\rho_{\bs{\sigma}}-\rho_{\bs{\tau}}\|_1
\end{align*}
We can now apply Fannes inequality: assuming that Bob's subsystem is of dimension $d$,
$\|\rho_{\bs{\sigma}}-\rho_{\bs{\tau}}\|_1 \leq \epsilon$ implies $|S(\rho_{\bs{\sigma}})-S(\rho_{\bs{\tau}})|\leq (\epsilon\log(d-1))+ H_{2}(\epsilon)$.\\ As the right hand side tends to $0$ as $\epsilon\rightarrow 0$, we have shown continuity.\\
\end{proof}

\section{Additional Calculations for Asymptotics}
In this appendix we provide additional detail for some of the calculations presented in the main paper. In particular, we show the step-by-step calculations for calculating the relevant trace-norms.
\begin{lemma}
$\|(\rho^{\otimes n})^{\otimes m}- \rho^{n,m}\|_1 \leq \mathbb{E}[P(i\not \in T_{\delta}(I)]_{\mathbf{k}}$.
\end{lemma}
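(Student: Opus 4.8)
The plan is to reduce the global trace-norm distance to a sum of single-block contributions and then control each block by how much Schmidt weight is discarded by the typical projection. First I would exploit that the two states carry the \emph{same} classical weights $p_{\mathbf{k}}$, so that
$(\rho^{\otimes n})^{\otimes m}-\rho^{n,m}=\sum_{\mathbf{k}}p_{\mathbf{k}}\big(\ket{\Phi_{\mathbf{k}}}\bra{\Phi_{\mathbf{k}}}-\ket{\Phi'_{\mathbf{k}}}\bra{\Phi'_{\mathbf{k}}}\big)$, and apply the triangle inequality together with homogeneity of $\|\cdot\|_1$ to obtain $\|(\rho^{\otimes n})^{\otimes m}-\rho^{n,m}\|_1\le\sum_{\mathbf{k}}p_{\mathbf{k}}\,\big\|\ket{\Phi_{\mathbf{k}}}\bra{\Phi_{\mathbf{k}}}-\ket{\Phi'_{\mathbf{k}}}\bra{\Phi'_{\mathbf{k}}}\big\|_1$. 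This isolates, for each $\mathbf{k}$, a comparison between the normalised pure state $\ket{\Phi_{\mathbf{k}}}$ and its subnormalised restriction $\ket{\Phi'_{\mathbf{k}}}=P_{\mathbf{k}}\ket{\Phi_{\mathbf{k}}}$, where $P_{\mathbf{k}}$ is the projector onto the Schmidt vectors indexed by the conditionally typical set $T_{\delta}(I|\mathbf{k})$.

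The second step is to bound each single-block term by the discarded weight $p_{\mathrm{fail}}(\mathbf{k}):=\sum_{\mathbf{i}\notin T_{\delta}(I|\mathbf{k})}\boldsymbol{\lambda}^2_{\mathbf{i}|\mathbf{k}}=\bra{\Phi_{\mathbf{k}}}(\openone-P_{\mathbf{k}})\ket{\Phi_{\mathbf{k}}}$, which is precisely the conditional probability $P(\mathbf{i}\notin T_{\delta}(I))$ given $\mathbf{k}$. Since $\ket{\Phi_{\mathbf{k}}}\bra{\Phi_{\mathbf{k}}}-\ket{\Phi'_{\mathbf{k}}}\bra{\Phi'_{\mathbf{k}}}$ is supported on the two-dimensional space spanned by $\ket{\Phi'_{\mathbf{k}}}$ and $(\openone-P_{\mathbf{k}})\ket{\Phi_{\mathbf{k}}}$, I would diagonalise the resulting $2\times2$ matrix explicitly (or, equivalently, invoke a gentle-measurement estimate) and control its trace norm purely in terms of $p_{\mathrm{fail}}(\mathbf{k})$. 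Summing the per-block estimates against the weights $p_{\mathbf{k}}$ then assembles $\sum_{\mathbf{k}}p_{\mathbf{k}}\,p_{\mathrm{fail}}(\mathbf{k})=\mathbb{E}[P(\mathbf{i}\notin T_{\delta}(I))]_{\mathbf{k}}$, the right-hand side of the claim.

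The main obstacle is exactly this per-block estimate. Carrying out the explicit diagonalisation gives $\big\|\ket{\Phi_{\mathbf{k}}}\bra{\Phi_{\mathbf{k}}}-\ket{\Phi'_{\mathbf{k}}}\bra{\Phi'_{\mathbf{k}}}\big\|_1=\sqrt{p_{\mathrm{fail}}(\mathbf{k})\,\big(4-3p_{\mathrm{fail}}(\mathbf{k})\big)}$, which lies between $p_{\mathrm{fail}}(\mathbf{k})$ and $2\sqrt{p_{\mathrm{fail}}(\mathbf{k})}$; so a gentle-measurement argument naturally produces a \emph{square-root} dependence rather than the linear one asserted. To land on the stated form I would either absorb the discrepancy using Jensen's inequality, $\sum_{\mathbf{k}}p_{\mathbf{k}}\sqrt{p_{\mathrm{fail}}(\mathbf{k})}\le\sqrt{\mathbb{E}[p_{\mathrm{fail}}]_{\mathbf{k}}}$, or simply observe that the downstream argument only uses the \emph{vanishing} of the bound as $m\to\infty$, which strong typicality guarantees since $\mathbb{E}[p_{\mathrm{fail}}(\mathbf{k})]_{\mathbf{k}}\to0$. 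Matching the constant to the stated linear expression exactly, rather than up to a square root, is the delicate point I would expect to have to revisit.
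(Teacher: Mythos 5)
Your route is the same as the paper's: pull out the common classical weights $p_{\mathbf{k}}$, apply the triangle inequality, and bound each block $\|\ket{\Phi_{\mathbf{k}}}\bra{\Phi_{\mathbf{k}}}-\ket{\Phi'_{\mathbf{k}}}\bra{\Phi'_{\mathbf{k}}}\|_1$ in terms of the discarded Schmidt weight $b_{\mathbf{k}}:=\sum_{\mathbf{i}\notin T_\delta(I|\mathbf{k})}\boldsymbol{\lambda}^2_{\mathbf{i}|\mathbf{k}}$. The one place you diverge is the per-block estimate, and your computation there is the correct one. Writing $\ket{\Phi}=\ket{\Phi'}+\ket{\Phi''}$ with $\braket{\Phi'}{\Phi''}=0$ and $\braket{\Phi''}{\Phi''}=b$, the operator $\ket{\Phi}\bra{\Phi}-\ket{\Phi'}\bra{\Phi'}$ is, in the orthonormal basis built from $\ket{\Phi'}$ and $\ket{\Phi''}$, the matrix $\bigl(\begin{smallmatrix} 0 & \sqrt{(1-b)b} \\ \sqrt{(1-b)b} & b\end{smallmatrix}\bigr)$; its determinant is negative, so the trace norm is the eigenvalue gap $\sqrt{b(4-3b)}$, exactly as you found. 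The paper's own proof instead asserts the equality $\|\ket{\Phi_{\mathbf{k}}}\bra{\Phi_{\mathbf{k}}}-\ket{\Phi'_{\mathbf{k}}}\bra{\Phi'_{\mathbf{k}}}\|_1=b_{\mathbf{k}}$, which is false for $0<b_{\mathbf{k}}<1$ (it is the \emph{trace} of the difference, not its trace norm, that equals $b_{\mathbf{k}}$). The lemma as literally stated already fails for a single pure state ($\rho=\ket{\Phi}\bra{\Phi}$, one value of $\mathbf{k}$), where the left-hand side is $\sqrt{b(4-3b)}\approx 2\sqrt{b}>b$.

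So the ``delicate point'' you flag is not something to revisit on your side: the linear constant cannot be matched, and your fallback via Jensen, $\sum_{\mathbf{k}}p_{\mathbf{k}}\sqrt{b_{\mathbf{k}}(4-3b_{\mathbf{k}})}\le 2\sqrt{\mathbb{E}[b_{\mathbf{k}}]_{\mathbf{k}}}$, is the correct repair. This square-root bound is all that is ever used downstream — the quantity only needs to vanish as $m\to\infty$, which strong typicality guarantees since $\mathbb{E}[b_{\mathbf{k}}]_{\mathbf{k}}\to 0$ — so the smoothing parameter $\epsilon$ still vanishes and the identification of the regularised Schmidt measure with $E_C$ is unaffected, provided the lemma is restated as $\|(\rho^{\otimes n})^{\otimes m}-\rho^{n,m}\|_1\le 2\sqrt{\mathbb{E}[P(\mathbf{i}\notin T_\delta(I))]_{\mathbf{k}}}$. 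Note that the analogous assemblage lemma needs the same correction, since its proof bounds the assemblage distance by this same pure-state trace norm.
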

We remind the reader that $(\rho^{\otimes n})^{\otimes m}$ is decomposed into the sum $\sum_{\mathbf{k}} p_\mathbf{k}(\ket{\Phi_{\mathbf{k}}}\bra{\Phi_{\mathbf{k}}}$, which is $m$ copies of the decomposition achieving $E_F(\rho^{\otimes n})$. We then define $\rho^{n,m}$ via $\rho^{n,m}=\sum_{\mathbf{k}} p_\mathbf{k} \ket{\Phi'_{\mathbf{k}}}\bra{\Phi'_{\mathbf{k}}}$, where $\ket{\Phi'_{\mathbf{k}}}=\sum_{\mathbf{i}\in T_{\delta}(I|k)} \boldsymbol{\lambda}_{\bs{i}|\bs{k}}\ket{\boldsymbol{e_{i|k}f_{i|k}}}$, and $\boldsymbol{\lambda}_{\bs{i}|\bs{k}}=\lambda_{i_1|k_1}\lambda_{i_2|k_2}\ldots,\; \boldsymbol{e_{i|k}}=e_{i_1|k_1}\otimes e_{i_2|k_2}\ldots$. 
These definitions allows us to bound the trace norm distance:

\begin{align*}
\|(\rho^{\otimes n})^{\otimes m}- \rho^{n,m}\|_1&=\|\sum_{\mathbf{k}} p_\mathbf{k}(\ket{\Phi_{\mathbf{k}}}\bra{\Phi_{\mathbf{k}}}-\ket{\Phi'_{\mathbf{k}}}\bra{\Phi'_{\mathbf{k}}})\|_1\\
&\leq \sum_{\mathbf{k}} p_\mathbf{k} \|\ket{\Phi_{\mathbf{k}}}\bra{\Phi_{\mathbf{k}}}-\ket{\Phi'_{\mathbf{k}}}\bra{\Phi'_{\mathbf{k}}})\|_1\\
& = \sum_{\mathbf{k}} p_\mathbf{k}\sum_{i\not\in T_{\delta}(I)} \lambda^2_{i|k} =\mathbb{E}[P(i\not \in T_{\delta}(I)]_{\mathbf{k}}.
\end{align*}

In the case of assemblages, we decomposed $(\bs{\sigma}^{\otimes n})^{\otimes m} = \sum_{\mathbf{k}} p_{\bf{k}}\bs{\sigma}_{\bf{k}}$, using $m$ copies of the optimal decomposition for the steering entanglement of formation $E_{FA}(\bs{\sigma}^{\otimes n})$. For each $\bs{\sigma}_{\bf{k}}$, we used the purification of $\rho_{\bs{\sigma}_{\bf{k}}}$,  $\ket{\Phi_{\bf{k}}}=\sum_{\mathbf{i}}\lambda_{\bf{i}|\bf{k}}\ket{e_{\mathbf{i}|\bf{k}}}\ket{e_{\mathbf{i}|\bf{k}}}$ and measurements $M_{\bf{a}|\bf{x}}=\rho^{-1/2}_{\bs{\sigma}_{\bs{k}}}\sigma^{T}_{\bs{a}|\bf{x}}\rho^{-1/2}_{\bs{\sigma}_{\bs{k}}}$, to obtain a construction of this assemblage. For each $\bs{k}$ we created a new (subnormalised) assemblage by replacing the state $\ket{\Phi_{\bf{k}}}$ with the state $\ket{\Phi'_{\mathbf{k}}}$, where $\ket{\Phi'_{\mathbf{k}}}=\sum_{\mathbf{i}\in T_{\delta}(I|k)} \boldsymbol{\lambda}_{\bs{i}|\bs{k}}\ket{\boldsymbol{e_{i|k}f_{i|k}}}$. This new assemblage we labelled as $\bs{\sigma}'_{\bs{k}}$.

First, we calculate the distance between $\bs{\sigma}'_{\bs{k}}$ and $\bs{\sigma}_{\bs{k}}$.

\begin{align*}
\|\bs{\sigma}_{\bs{k}}-\bs{\sigma}'_{\bs{k}}\|_1=&\sum_{\bs{a},\bs{x}} p(\bs{x}) \|\sigma_{\bs{k},\bs{a}|\bs{x}} - \sigma'_{\bs{k},\bs{a}|\bs{x}}\|_1 \\
= &\sum_{\bs{a},\bs{x}} p(\bs{x}) \|\text{Tr}_{A}\left[\left(M_{\bs{a}|\bs{x}}\otimes \openone\right)\left(\ket{\Phi_{\bf{k}}}\bra{\Phi_{\bf{k}}}-\ket{\Phi'_{\mathbf{k}}}\bra{\Phi'_{\mathbf{k}}}\right)\right]\|_1\\
\leq &\sum_{\bs{a},\bs{x}} p(\bs{x}) \|\left(M_{\bs{a}|\bs{x}}\otimes \openone\right)\left(\ket{\Phi_{\bf{k}}}\bra{\Phi_{\bf{k}}}-\ket{\Phi'_{\mathbf{k}}}\bra{\Phi'_{\mathbf{k}}}\right)\|_1\\
= & \sum_{\bs{a},\bs{x}} p(\bs{x}) \|\left(M_{\bs{a}|\bs{x}}\otimes \openone\right)\left(\ket{\Phi_{\bf{k}}}\bra{\Phi_{\bf{k}}}-\ket{\Phi'_{\mathbf{k}}}\bra{\Phi'_{\mathbf{k}}}\right)\otimes \ket{\bs{a}}\bra{\bs{a}}\|_1\\
= & \sum_{\bs{x}} p(\bs{x}) \|\sum_{\bs{a}} \left(M_{\bs{a}|\bs{x}}\otimes \openone\right)\left(\ket{\Phi_{\bf{k}}}\bra{\Phi_{\bf{k}}}-\ket{\Phi'_{\mathbf{k}}}\bra{\Phi'_{\mathbf{k}}}\right)\otimes \ket{\bs{a}}\bra{\bs{a}}\|_{1}\\
\leq& \sum_{\bs{x}} p(\bs{x}) \|\left(\ket{\Phi_{\bf{k}}}\bra{\Phi_{\bf{k}}}-\ket{\Phi'_{\mathbf{k}}}\bra{\Phi'_{\mathbf{k}}}\right)\|_{1} \\
=& \|\left(\ket{\Phi_{\bf{k}}}\bra{\Phi_{\bf{k}}}-\ket{\Phi'_{\mathbf{k}}}\bra{\Phi'_{\mathbf{k}}}\right)\|_{1}
\end{align*}
Where we have used the fact that the 1 norm is additive over orthogonal subspaces, and that it is non-increasing under the action of a CPTP channel (the above is an application of a quantum instrument).\\

We then construct our approximation to $(\bs{\sigma}^{\otimes n})^{\otimes m}$ , which we call $\bs{\sigma}^{n,m}$, via $\bs{\sigma}^{n,m}= \sum_{\bs{k}} p_{\bs{k}}\sigma'_{\bs{k}}$. We can bound the distance between these two assemblages in the following way:
\begin{align*}
\|\left(\bs{\sigma}^{\otimes n}\right)^{\otimes m} - \bs{\sigma}^{n,m}\|_1&=\|\sum_{\bs{k}} p_{\bs{k}}(\bs{\sigma}_{\bs{k}}-\bs{\sigma}'_{\bs{k}})\|_1\\
&\leq \sum_{\bs{k}} p_{\bs{k}} \|\bs{\sigma}_{\bs{k}}-\bs{\sigma}'_{\bs{k}}\|_1\\
&\leq \sum_{\bs{k}} p_{\bs{k}} \|\ket{\Phi_{\bf{k}}}\bra{\Phi_{\bf{k}}} - \ket{\Phi'_{\bf{k}}}\bra{\Phi'_{\bf{k}}}\|_1\\
& =\mathbb{E}[P(i\not \in T_{\delta}(I)]_{\mathbf{k}}.
\end{align*}

\section{Symmetries of the canonical MUB pair}\label{MUBSym}
We have already seen how MUBs constructed via the operators $Z\ket{k}=\exp{\frac{2\pi i k}{d}}\ket{k}$ and $X\ket{k}=\ket{k+1}$ satisfy enough symmetry to restrict the maximisation term in the dimension measure to a single state. If we consider only the pair of MUBs given by the eigenvectors of the operators $Z,X$, however, we can prove a much stronger result: there exists a set of ``unitary relabelling equivalences'', as introduced in the main body of the paper, which uniquely define measurements of the form:
\begin{equation}\label{GPauliAppendix}
M^{p}_{a|x} = p\ket{e^x_a}\bra{e^x_a}+(1-p)\openone/d 
\end{equation}
where $x=0$ labels the eigenvectors of $X$ and $x=1$ of $Z$.\\
By averaging, or ``twirling" over these unitary-relabellings, one can transform an arbitrary $|X|=2,|A|=d$ pseudo-measurement into one proportional to the form (\ref{GPauliAppendix}).\\

Before we begin the proof, it will be useful to know the unitaries under which $\bs{M}^{p}$ is invariant. these are: $\{X^{i}Z^{j}\}$, $\{P_{k}\}$, the permutation matrices which perform an automorphism of the $Z$ eigenbasis, and the Fourier transform. The full set of unitaries we will consider is $\{U_j^\dagger\}=\{F^{-f}P_{\alpha}X^{i_{1}}Z^{i_{0}}\}_{f=0,\alpha=1,i_1=0,i_0=0}^{f=1,\alpha=d-1,i_1=d-1,i_0=d-1}$.\\

In this proof, we will take all output labels to be modulo $d$, and input labels modulo 2. We take an arbitrary $|X|=2,|A|=d$  pseudo-measurement $\bs{A}$. Then we can define $d^2$ pseudo-measurements of the same measure via $A^{i_1,i_2}_{a|x}:=X^{i_1}Z^{i_0}A_{a-i_x|x}(X^{i_1}Z^{i_0})^\dagger$.

Let us now consider $\tilde{A}=1/d^2\sum_{i_1,i_2}A^{i_1,i_2}=1/d^2\sum_{i_1,i_2}X^{i_1}Z^{i_0}A(X^{i_1}Z^{i_0})^{\dagger} = \mathrm{Tr}[A]\openone_d$. This means that $\tilde{A}_{a|0}$ is (proportional to) a valid set of measurements such that $Z^{i_0}\tilde{A}_{a|0}(Z^{i_0})^{\dagger} = \tilde{A}_{a+i_0|0}$ and $X^{i_1}\tilde{A}_{a|1}(X^{i_1})^{\dagger} = \tilde{A}_{a+i_1|1}$. From \cite{CHT2012} we know that means the operators must take the form:
\begin{align*}
\tilde{A}_{a|0} &= \sum_{b} r_{b-a}\ket{x_b}\bra{x_b}, & \tilde{A}_{a|1} &= \sum_{b} s_{b-a}\ket{z_b}\bra{z_b}.
\end{align*}

Next, let us consider an automorphism of $(\mathbb{Z}/d\mathbb{Z})$; there are of the form: $z\rightarrow \alpha z$, $z \in (\mathbb{Z}/d\mathbb{Z}), \alpha \in (\mathbb{Z}/d\mathbb{Z})^*\equiv \{1\ldots d-1\}$. It is a known result \cite{HZ2019} that the permutation matrix inducing $\alpha$ in the $z$ basis induces the automorphism $\alpha^{-1}$ in the $x$ basis, where $\alpha^{-1}$ is the multiplicative inverse of $\alpha$ mod $d$. We can verify that $M^{p}_{a|x}=1/(d-1)\sum_{\alpha}P_{\alpha}M^{p}_{\alpha^{-x}a|x}P^{\dagger}_{\alpha}$. In the same vein as before, we can now define a new series of pseudo-measurements via $A^{i_1,i_2,\alpha}_{a|x}:=P_{\alpha}X^{i_1}Z^{i_0}A_{\alpha^{-x}(a-i_x)|x}(X^{i_1}Z^{i_0})^\dagger P_{\alpha}^{\dagger}$. We can now consider the average over all of these $d^3-d$ pseudo-measures, which by linearity is $\hat{A}_{a|x}=1/(d-1)\sum_{\alpha}P_{\alpha}\tilde{A}_{\alpha^{-x}a|x}P_{\alpha}^{-1}$. Let us have a look at this sum when x=1:
\begin{align*}
	\hat{A}_{a|1}&=1/(d-1)\sum_{\alpha}P_{\alpha}\tilde{A}_{\alpha^{-1}a|1}P_{\alpha}^{-1}\\
	&= 1/(d-1)\sum_{\alpha}P_{\alpha}\sum_{b}s_{b-\alpha^{-1}a}\ket{z_b}\bra{z_b}P_{\alpha}^{-1}\\
	&= 1/(d-1)\sum_{\alpha}\sum_{b}s_{b-\alpha^{-1}a}\ket{z_{\alpha b}}\bra{z_{\alpha b}}\\
	&= 1/(d-1)\sum_{\alpha}\sum_{b'}r_{b'}\ket{z_{\alpha b' +\alpha \alpha^{-1} a}}\bra{z_{\alpha b' +\alpha \alpha^{-1} a}}\\
	&= 1/(d-1)\sum_{\alpha}\sum_{b'}s_{b'}\ket{z_{\alpha b' +a}}\bra{z_{\alpha b' +a}}\\
	&= s_0\ket{z_a}\bra{z_a} +\sum_{\beta\neq 0}\sum_{b'\neq 0}s_{b'}/(d-1)\ket{z_{\beta+a}}\bra{z_{\beta+a}}
\end{align*}
where we have used that $\alpha$ acts transitively on $(\mathbb{Z}/d\mathbb{Z})^*$ and leaves $0$ invariant.\\
From this we see the symmetry of $\bs{M}^{p}$ forces $\hat{A}_{a|1}$ to have a single unique eigenvalue associated with eigenvector $\ket{z_a}$, with all other eigenvalues coinciding with each other. The same proof can be used for $x=0$.\\

Finally, we note that $\bs{M}^{p}$ has the symmetry $M^{p}_{a|x}=F^{-1}M^{p}_{(-1)^{x}a|x\oplus 1}F$, where $F$ is the Fourier transform. This allows us to further increase the number of newly defined pseudo-measurements: 
\begin{equation}
A^{i_1,i_2,\alpha,f}_{a|x}:=F^{-f}P_{\alpha}X^{i_1}Z^{i_0}A_{(-1)^{fx}\alpha^{-x}(a-i_x)|x\oplus f}(X^{i_1}Z^{i_0})^\dagger  P_{\alpha}^{\dagger}F^{f}.
\end{equation} Again by linearity we consider $\bar{A}_{a|x} = 1/2(\hat{A}_{a|x}+F^{-1}\hat{A}_{(-1)^{x}a|x\oplus 1}F)$. We see this is equal to: $1/2\sum_{b} r'_{b-a} + s'_{b-a}\ket{x_b}\bra{x_b}$, where $r'_0 = r_{0}$ and $r'_{\neq 0} = 1/\sum_{b\neq 0}r_b/(d-1)$, and similarly for $s'$. In particular, this means that $\bar{\bs{A}}$ is proportional to some $\bs{M}^{q}$, with the proportionality given by $\mathrm{Tr}[A]$.

Before we move on - let us comment on the importance of considering prime dimensions: the automorphism group acts transitively on $\{1\ldots d-1\}$, leading to all $r_i$ except $r_0$ being averaged out. In non-prime dimensions this doesn't happen; however it may be there exist another set of unitary-relabellings which achieve the same effect.\\

\subsection{Consequences for the Dimension Measure and Compression Dimension}
The ability to ``twirl" pseudo-measurements has important consequences when calculating the dimension measure for the noisy MUB pair given in Eq. (\ref{GPauliAppendix}). Suppose we have a decomposition of $\bs{M}^p=\sum_{k}\bs{A}^k$. For the set $\{U_j^\dagger\}$ described in the previous section, we have that:
\begin{align*}
{M}^p_{a|x}&=\frac{1}{2(d^3-d)}\sum^{2(d^3-d)}_{j=1}U_j^{\dagger}{M}^p_{\pi^{-1}_{j,x}(a)|\pi^{-1}_{j}(x)}U_j \\
&= \sum_{k}\frac{1}{2(d^3-d)}\sum_{j}^{2(d^3-d)}U_j^{\dagger}{A}^k_{\pi^{-1}_{j,x}(a)|\pi^{-1}_{j}(x)}U_j\\
&= \sum_{k}\frac{\mathrm{Tr}[A^k]}{d}{M}^{p_k}_{a|x}.
\end{align*}
Since unitaries will leave $\mathrm{rank}\,A^k$ invariant, it is clear that our new decomposition can only decrease the dimension measure - and therefore there exists an optimal decomposition of this form 
(we already know that $\rho$ can be taken to be $\mathbb{I}/d$). Therefore the problem of finding the dimension measure of $\bs{M}^p$ reduces to finding the range of $p$ for each $n$ which can be constructed from rank $n$ pseudo-measurements, i.e. have compression dimension $n$.\\

This question can be phrased in the following way: $\max_{p_n}$ such that $\bs{M}^{p_n} = \sum_{k} \bs{A}^k$, $\rank\,A^k \leq n$. However, by applying the same technique as above, we see that
\begin{align*}
{M}^{p_n}_{a|x}&= \sum_{k}\frac{1}{2(d^3-d)}\sum_{j}^{2(d^3-d)}U_j^{\dagger}{A}^k_{\pi^{-1}_{j,x}(a)|\pi^{-1}_{j}(x)}U_j\\
&= \sum_{k}\frac{\mathrm{Tr}[A^k]}{d}{M}^{p_k}_{a|x}.
\end{align*}
In order for the $p_n$ to be maximal, we therefore require that all $p_k$ co-incide with $p_n$ itself. Consequently, the maximum can be achieved by a decomposition into a single rank $n$ pseudo-measurement, and its corresponding unitary-relabellings.\\

Finally, we note that we can rewrite $p$ in the following way:
\begin{equation}
p = \frac{d\sum_{a,x}\mathrm{Tr}[M^{1}_{a|x}M^{p}_{a|x}]}{(d-1)|\mathcal{X}||\mathcal{A}|}-\frac{1}{d-1}
\end{equation}
let us substitute our decomposition of $\bs{M}^{p_n}$ into the trace term. We see that
\begin{align*}
\sum_{a,x}\mathrm{Tr}[M^{1}_{a|x}M^{p_n}_{a|x}] &= \sum_{a,x}\mathrm{Tr}\left[M^{1}_{a|x}\frac{1}{2(d^3-d)}\sum_{j}^{2(d^3-d)}U_j^{\dagger}{A}_{\pi^{-1}_{j,x}(a)|\pi^{-1}_{j}(x)}U_j\right]\\
&= \frac{1}{2(d^3-d)}\sum_{j}^{2(d^3-d)}\sum_{a,x}\mathrm{Tr}\left[U_j M^{1}_{a|x}U_j^{\dagger}{A}_{\pi^{-1}_{j,x}(a)|\pi^{-1}_{j}(x)}\right]\\
&=\frac{1}{2(d^3-d)}\sum_{j}^{2(d^3-d)}\sum_{a,x}\mathrm{Tr}\left[M^{1}_{\pi^{-1}_{j,x}(a)|\pi^{-1}_{j}(x)}{A}_{\pi^{-1}_{j,x}(a)|\pi^{-1}_{j}(x)}\right]\\
&=\sum_{a,x}\mathrm{Tr}[M^{1}_{a|x}{A}_{a|x}]\\
\end{align*}
where we have used the property that the permutation leaves the sum invariant.\\

This means $p_n$, the largest $p$ for which $\bs{M}^p$ has compression dimension $n$, is given by
\begin{align}
\max_{\bs{A}} \frac{d\sum_{a,x}\mathrm{Tr}[M^{1}_{a|x}A_{a|x}]}{(d-1)|\mathcal{X}||\mathcal{A}|}-\frac{1}{d-1}
\end{align}
such that $\bs{A}$ is a pseudo-measurement with $\rank\,A \leq n$, $\mathrm{Tr}[A]=d$. Due to the rank constraint we were unable to solve this problem; our best answer is given by the construction in the main body of the paper.

\section{Derivation of Eq. \ref{SR2}}\label{Eq2to3}
In this section, we explain how to derive Eq. \ref{SR2} from Eq. \ref{ChoiDecomp}, in the case where either the Choi matrix belongs to either $L(\mathcal{H}_2\otimes \mathcal{H}_n)$ or $L(\mathcal{H}_n\otimes \mathcal{H}_2)$ - which is the case when either the input or output space of the channel has dimension 2.

For this space, any pure state can have Schmidt rank either 1 or 2. In particular, this means we can split our decomposition of $\chi_{\mathcal{E}}$ into two disjoint sets:
\begin{equation}
\chi_{\mathcal{E}}=\sum_{l,S_R(\phi_l)=1} \ket{\phi_l}\bra{\phi_l} +\sum_{l,S_R(\phi_l)=2} \ket{\phi_l}\bra{\phi_l} = \sigma_0 + \sum_{l,S_R(\phi_l)=2}
\end{equation}
where we note that $\sigma_0$ is a (subnormalised) separable state.\\

We can plug this decomposition into Eq. \ref{ChoiDecomp}, where we find:
\begin{align*}
&\min_{\phi^l} \max_\rho \sum_l d\,\mathrm{Tr}(\rho^{T}\otimes \openone\ket{\phi^l}\bra{\phi^l})\log S_R(\ket{\phi^l})\\ =&\min_{\phi^l} \max_\rho \sum_{l,S_R(\phi_l)=1} d\,\mathrm{Tr}(\rho^{T}\otimes \openone\ket{\phi^l}\bra{\phi^l})\log 1 + \sum_{l,S_R(\phi_l)=2} d\,\mathrm{Tr}(\rho^{T}\otimes \openone\ket{\phi^l}\bra{\phi^l})\log 2\\
=&  d\,\mathrm{Tr}\left(\rho^{T}\otimes \openone \sum_{l,S_R(\phi_l)=2}\ket{\phi^l}\bra{\phi^l})\right)\\
=&  d\,\mathrm{Tr}\left(\rho^{T}\otimes \openone \left(\chi_{\mathcal{E}}-\sigma_0\right)\right)
\end{align*}
Since only $\sigma_0$ appears as an unknown in this equation, we change the decomposition over $\{\phi_k\}$ to one directly over $\sigma_0$ instead. This gives us the familiar form of Eq. 3.

\section{One-Shot Entanglement Cost of Preparing $\rho$}\label{One_Shot_Proof}
When talking about the one-shot entanglement cost, the most commonly used quantity is the Schmidt number \cite{Buscemi11,BranD2011,Berta+2013}. To explain why this is, let us look at the optimal preparation protocol. One wishes to prepare a mixed state $\rho$ from a maximally entangled state $\ket{\Phi^+_d}$. One picks a decomposition of $\rho = \sum_{i} p_i \ket{\phi_i}\bra{\phi_i}$, and samples from the distribution $\{p_i\}$. If outcome $i$ is obtained, one determinstically transforms $\ket{\Phi^+_d}$ into $\ket{\phi_i}$ via LOCC. Thus the expected state at the end of the process is:
\begin{equation}
\sum_i p_i \ket{i}\bra{i} \otimes \ket{\phi_i}\bra{\phi_i}
\end{equation}
which reproduces $\rho$ once the classical information, $i$, is ``forgotten".

In order to be able to prepare $\ket{\phi_i}$ without error, one must have that $d \geq \mathrm{S}_R (\phi_i)$ \cite{N1999}. By optimising over all decompositions of $\rho$, and counting maximally entangled qubits ($d=2$) as the resource unit, it is easy to arrive at the Schmidt number, 
\begin{equation}
S_N(\rho):=\min_{\{p_i,\phi_i\}} \max_{i} \log S_R(\phi_i).
\end{equation}
For us, this method led to a natural question: what if, when you receive outcome $i$, the state $\phi_i$ has a lower Schmidt number than the maximum? One could prepare the required state using $\ket{\Phi^+_{d'}}$ with $d'<d$, without sacrificing any quality of the final preparation. In order to exploit this saving, we must allow the preparer to begin with a state of the form:
\begin{equation}\label{mixedresource}
\omega = \sum_{m} p_m \ket{m}\bra{m}\otimes \ket{\Phi^+_m}\bra{\Phi^+_m}.
\end{equation}
i.e. a labelled mixture of maximally entangled states, of various dimensions. If $\rho$ is preperable from such a state, then we would define the corresponding entanglement cost as $\sum_{m} p_m \log m$ - the \emph{expected} number of qubit maximally entangled pairs consumed.
\begin{lemma}
With this additional freedom in the resource state, the one-shot entanglement cost is given by the Schmidt measure of $\rho$.
\end{lemma}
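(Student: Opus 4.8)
The plan is to establish the two opposing inequalities: that the one-shot cost is upper bounded by $S_M(\rho)$ via an explicit preparation protocol (achievability), and lower bounded by $S_M(\rho)$ via a converse argument exploiting the fact that LOCC cannot increase Schmidt rank.

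For achievability, I would start from an optimal Schmidt-measure decomposition $\rho = \sum_i p_i \ket{\phi_i}\bra{\phi_i}$ with $\sum_i p_i \log S_R(\phi_i) = S_M(\rho)$, and group the pure states by their Schmidt rank. Setting $q_m := \sum_{i : S_R(\phi_i) = m} p_i$, I would take the resource state $\omega = \sum_m q_m \ket{m}\bra{m}\otimes \ket{\Phi^+_m}\bra{\Phi^+_m}$. The preparer reads the label $m$, then draws an index $i$ with conditional probability $p_i/q_m$ among those $\phi_i$ of Schmidt rank $m$, and deterministically converts $\ket{\Phi^+_m}$ into $\ket{\phi_i}$ by LOCC. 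This conversion is possible by Nielsen's majorisation theorem, since the flat Schmidt spectrum of $\ket{\Phi^+_m}$ is majorised by that of any state of Schmidt rank at most $m$ \cite{N1999}. Forgetting all classical labels reproduces $\sum_i p_i \ket{\phi_i}\bra{\phi_i} = \rho$, and the expected cost is $\sum_m q_m \log m = \sum_i p_i \log S_R(\phi_i) = S_M(\rho)$, giving the upper bound.

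For the converse, I would consider an arbitrary valid protocol: a resource state $\omega = \sum_m p_m \ket{m}\bra{m}\otimes \ket{\Phi^+_m}\bra{\Phi^+_m}$ and an LOCC map $\Lambda$ with $\Lambda(\omega) = \rho$ once the label is forgotten, so that $\rho = \sum_m p_m \rho_m$ with $\rho_m$ the LOCC image of $\ket{\Phi^+_m}\bra{\Phi^+_m}$. The key observation is that each branch of an LOCC protocol applies a local product operator to $\ket{\Phi^+_m}$ and hence outputs a pure state of Schmidt rank at most $m$; therefore $\rho_m$ admits a decomposition $\rho_m = \sum_j q_{j|m}\ket{\psi_{j|m}}\bra{\psi_{j|m}}$ with $S_R(\psi_{j|m}) \le m$ for all $j$. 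Assembling these yields a pure-state decomposition $\rho = \sum_{m,j} p_m q_{j|m}\ket{\psi_{j|m}}\bra{\psi_{j|m}}$ whose Schmidt-value is $\sum_{m,j} p_m q_{j|m}\log S_R(\psi_{j|m}) \le \sum_m p_m \log m$, the cost of the protocol. Since $S_M(\rho)$ is the infimum of the Schmidt-value over all decompositions, we obtain $S_M(\rho) \le \sum_m p_m \log m$, the desired lower bound.

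The main obstacle is the converse step, and specifically the claim that the LOCC output $\rho_m$ decomposes into pure states of Schmidt rank at most $m$. This requires arguing branch-by-branch that local operations assisted by classical communication never raise the Schmidt rank of the pure state produced along any measurement history, and handling the normalisation of the conditional branch states correctly; the monotonicity of $\log S_R$ then lets the average cost $\sum_m p_m \log m$ dominate the averaged Schmidt-value term by term. Once this is in place, the two bounds coincide and the claim follows.
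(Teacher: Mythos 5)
Your proof is correct, and while your achievability direction is essentially the paper's (optimal Schmidt-measure decomposition plus Nielsen's theorem, with the classical label steering which $\mathcal{E}_i$ to apply), your converse takes a genuinely different and more elementary route. The paper's converse invokes the Lo--Popescu normal form $\sum_k (K_k\otimes U_k)\ket{\Phi^+_m}$, computes the Uhlmann fidelity between $\Lambda(\omega)$ and $\rho$ via purifications, and extracts the constraints $m(i)\ge S_R(\phi_i)$ and $q_m=\sum_{i\,:\,m(i)=m}p_i$ from the saturation conditions of a triangle/Cauchy--Schwarz chain; you instead observe that every branch of the protocol acts on $\ket{\Phi^+_m}$ by a product operator, which cannot raise Schmidt rank, so an exact preparation hands you a pure-state decomposition of $\rho$ whose Schmidt value is at most $\sum_m p_m\log m$, and minimality of $S_M$ does the rest. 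Your argument is shorter and more transparent for the exact (perfect-fidelity) statement being proved; the paper's fidelity machinery is heavier but is the kind of argument that degrades gracefully to approximate preparation and also pins down the optimal resource distribution $q_m$ explicitly. Two small points you should make explicit to be airtight: (i) a general local instrument may have non-pure Kraus branches, so you must refine each local CP map into rank-one Kraus operators to keep the branch states pure before applying the Schmidt-rank monotonicity; and (ii) the conditional branch states must be renormalised, but since $\log S_R(\psi_{j|m})\le\log m$ holds for every branch regardless of its weight, the averaged bound survives, as you indicate.
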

\begin{proof}
We first show that the Schmidt measure is an achievable rate. Take the optimal decomposition $\rho=\sum p_i \ket{\phi_i}\bra{\phi_i}$. By Nielsen's theorem \cite{N1999}, there exists a LOCC channel $\mathcal{E}_i$ such that $\mathcal{E}_i(\ket{\Phi^+_{d_i}}\bra{\Phi^+_{d_i}}) = \ket{\phi_i}\bra{\phi_i}$, where $d_i = S_{R}(\phi_i)$.\\ 

We now define the channel
\begin{equation}
\mathcal{E}\left(\ket{i}\bra{j} \otimes \rho\right)=\begin{cases}
\mathcal{E}_{i}(\rho) &\text{ if } i = j,\\
0 & \text{ else.}\\
\end{cases}
\end{equation}
Since each $\mathcal{E}_{i}$ is a LOCC channel, and $i$ is a classical label which can freely communicated, $\mathcal{E}$ is clearly also a LOCC channel. Applying this channel to the state $\omega = \sum_{i} p_i \ket{\Phi^+_{d_i}}\bra{\Phi^+_{d_i}}$ we obtain the state $\rho$, with entanglement cost $\sum_{i} p_i \log d_i = S_{M}(\rho)$.\\

In order to proof the converse, we will investiage the necessary conditions for exact state creation using the fidelity. We follow closely the proof used in \cite{BD2011}, which was then used in \cite{Buscemi11} to prove rigorously the Schmidt number result discussed above.\\

We start by using a result by Lo and Popescu \cite{LP1999}, which states that, starting from a maximally entangled state, any LOCC preparable state can be expressed in the form:
\begin{equation}
\sum_{k} \left(K_{k}\otimes U_{k}\right)\ket{\Phi^+_m}\bra{\Phi^+_m}\left(K_{k}\otimes U_{k}\right)^{\dagger},
\end{equation}
where $K_k$ are Kraus operators and $U_k$ unitaries.
Since we begin with a classically labelled set of pure states, each of which has an LOCC channel applied to it, any state which can be prepared from resource \ref{mixedresource} is expressible as:
\begin{equation}
\Lambda(\omega)=\sum_{m,k} q_m\left(K^m_{k}\otimes U^{m}_{k}\right)\ket{\Phi^+_m}\bra{\Phi^+_m}\left(K^m_{k}\otimes U^m_{k}\right)^{\dagger}.
\end{equation}
We now are interested in the fidelity between $\Lambda(\omega)$ and the target state $\rho$, which by Uhlmann's theorem \cite{U1976} is equal to $\max | \bra{\psi_{\rho}}\ket{\psi_{\Lambda(\omega)}}|^2$, where $\psi_{\Lambda(\omega)}$ is an explicit purification of $\Lambda(\omega)$, and the maximisation is taken over all purifications $\ket{\psi_{\rho}}$ of $\rho$. We will choose 
\begin{equation}
\ket{\psi_{\Lambda(\omega)}} = \sum_{m,k} \ket{m,k} \otimes \sqrt{q_m} \left(K^m_{k}\otimes U^{m}_{k}\right)\ket{\Phi^+_m}
\end{equation}
as our purification of $\Lambda(\omega)$, whilst from \cite{HJW1993} any purification of $\rho$ is expressible as $\sum_{i} \sqrt{p_i} \ket{i}\ket{\phi_i}$, where $\rho=\sum_i p_i \ket{\phi_i}\bra{\phi_i}$.\\

From this, we see that our fidelity is
\begin{equation}
\max |\sum_{i}\bra{\phi_i}\left(\sqrt{q_{m(i)}} \left(K^{m(i)}_{k(i)}\otimes U^{m(i)}_{k(i)}\right)\ket{\Phi^+_{m(i)}}\right)
\end{equation}
where $m(i),k(i)$ signify the one-to-one relation between $i$ and pairs $(m,k)$.\\

Before we continue, we will rewrite $\left(\mathbb{I}\otimes U^{m(i)}_{k(i)}\right)\ket{\Phi^+_{m(i)}}$. (We suppress the $i$ for convenience).\\
\begin{align*}
 \left(\mathbb{I}\otimes U^{m}_{k}\right)\ket{\Phi^+_{m}} &=    \frac{1}{\sqrt{m}}\sum^{m}_{r}\ket{r}U^{m}_{k}\ket{r}\\
 &=\frac{1}{\sqrt{m}}P_{m}\sum_{r}^d W^{i}_{k,1}\ket{e_r}U^{m}_{k}W^{i}_{k,2}\ket{f_r}\\
 &= \frac{1}{\sqrt{m}}P_{m}\sum_{r}^d V^{i}_{k}\ket{e_r}\ket{f_r}\\
\end{align*}
where $\ket{\phi_i}=\sum_{r}\sqrt{\lambda_r}\ket{e_r^i}\ket{f_r^i}$ is the Schmidt decomposition of $\ket{\phi_i}$, $W^{i}_{k,1}\ket{e_r} = \ket{r}$, $W^{i}_{k,2}\ket{f_r} = \ket{r}$ , $P_{m}=\sum_{j}^m \ket{j}\bra{j}$, and $V^{i}_{k} = (U^{m}_{k}W^{i}_{k,2})^T W^{i}_{k,1}$, where transposition is in the computational basis.  $d$ is the full dimension of the first system.\\
Putting this rearrangement into our fidelity, we see that the we can rewrite an individual term as:
\begin{equation}
\bra{\phi_i}\left(\sqrt{q_{m(i)}} \left(K^{m(i)}_{k(i)}\otimes U^{m(i)}_{k(i)}\right)\ket{\Phi^+_{m(i)}}\right) = \mathrm{Tr}\left(\frac{\sqrt{q_{m(i)}}}{\sqrt{m(i)}}\sqrt{\rho^i}K^{m(i)}_{k(i)}P_{m(i)}V_{k(i}^{m(i)}\right).
\end{equation}
This means our fidelity becomes:
\begin{align}
F&= \left|\sum_{i}\sqrt{q_{m(i)}}\sqrt{p_i} \mathrm{Tr}\left(\frac{1}{\sqrt{m(i)}}\sqrt{\rho^i}K^{m(i)}_{k(i)}P_{m(i)}V_{k(i}^{m(i)}\right)\right|^2 \nonumber \\
&\leq \left(\sum_{i} \left|\mathrm{Tr}\left(\frac{1}{\sqrt{m(i)}}\sqrt{\rho^i}K^{m(i)}_{k(i)}P_{m(i)}V_{k(i}^{m(i)}\right)\right|\right)^2 \nonumber\\
&\leq \left(\sum_{i} \left(\frac{q_{m(i)}}{m(i)}\mathrm{Tr}\left[P_{m(i)}K^{m(i)\,\dagger}_{k(i)}K^{m(i)}_{k(i)}\right]\mathrm{Tr}\left[p_i\rho_i P_{m(i)}\right]\right)^{1/2}\right)^{2}, \label{2ndlastbound}
\end{align}
where we have used the triangle inequality followed by the Cauchy Schwarz inequality. 
Finally, we can note that 
\begin{align*}
\sum_{i} \frac{q_{m(i)}}{m(i)}\mathrm{Tr}\left[P_{m(i)}K^{m(i)\,\dagger}_{k(i)}K^{m(i)}_{k(i)}\right] &= \sum_{m}\frac{q_{m}}{m}\mathrm{Tr}\left[P_{m}\left(\sum_{k}K^{m\,\dagger}_{k} K^{m}_{k}\right)\right] \\
&= \sum_m\frac{q_{m}}{m}\mathrm{Tr}\left(P_{m}\right) = 1.
\end{align*}
Using the fact that $\sum_{i}\sqrt{p_iq_i}\leq \sqrt{\sum_i q_i}$ when $\sum_{i}p_i =1$, then our fidelity bound simplifies further to:
\begin{equation}
F \leq \left(\left(\sum_{i} p_{i}\mathrm{Tr}\left(\rho_i P_{m(i)}\right)\right)^{1/2}\right)^{2}
\end{equation}
and the two powers cancel.\\

Since we are demanding that $\rho$ can be created be perfect fidelity, we must choose an initial $\omega$ which gives an upper bound of 1. We can see that this forces $P_{m(i)} \geq \mathrm{rank}(\rho_i) = S_R(\ket{\phi_i})$. With this knowledge, we can return to our previous upper bound, Eq. \ref{2ndlastbound}. The maximum of $\sum_{i}\sqrt{p_iq_i}$ over a distribution $p_i$, where $q_i$ is \emph{also} given by a probability distribution, is 1, achieved when $p_i = q_i$. This tells us that, in order for fidelity to be 1, is is \emph{necessary} that $$\frac{q_{m(i)}}{m(i)}\mathrm{Tr}\left[P_{m(i)}\left(K^{m(i)\,\dagger}_{k(i)} K^{m(i)}_{k(i)}\right)\right]= p_i.$$ This implies that:
$q_m = \sum_{i| m(i)=m} p_i$.\\

Finally, let us consider the smallest entanglement cost that satisfies this necessary condition. We wish to minimize:
\begin{equation}
\min\{\sum_{m} \left(\sum_{i, m(i)=m} p_i\right) \log m  \mid m(i) \geq S_R(\ket{\phi_i})\}.
\end{equation}
For a given distribution, this is done by assigning $m(i)= S_R(\ket{\phi_i})$. Our minimum is therefore given by $\sum_i p_i S_R(\ket{\phi_i})$ minimised over all decomposition of $\rho$ - \emph{exactly} the definition of the Schmidt number! Thus the Schmidt number is a lower bound of the expected entanglement cost, and an achievable value - proving the result.
\end{proof}

\end{document}